\def\boxit#1{\vbox{\hrule\hbox{\vrule\kern5pt\vbox{\kern5pt#1\kern5pt}\kern5pt\vrule}\hrule}}
\def\M{\mathbf{M}}
\def\wh{\widehat}
\newcommand{\ind}{\perp\!\!\!\!\perp}
\def\mE{\mathbb{E}}
\def\cG{\mathcal{G}}
\newtheorem{theorem}{Theorem}[section]
\newtheorem{proposition}[theorem]{Proposition}
\newtheorem{lemma}[theorem]{Lemma}
\newtheorem{corollary}[theorem]{Corollary}
\newtheorem{definition}[theorem]{Definition}
\newtheorem{assumption}[theorem]{Assumption}
\newcommand{\manyquad}[1][1]{\hspace*{#1em}\ignorespaces}
\newcommand{\blind}{1}
\newcommand\numberthis{\addtocounter{equation}{1}\tag{\theequation}}
\begin{document}

\def\spacingset#1{\renewcommand{\baselinestretch}%
{#1}\small\normalsize} \spacingset{1}



\if1\blind
{
  \title{\bf Conditional Local Independence Testing \\ for Itô processes with Applications to Dynamic Causal Discovery}
  \author{Mingzhou Liu$^1$, \, Xinwei Sun$^2$, \, Yizhou Wang$^1$
  \\
  \\
    $^1$School of Computer Science, Peking University\\
    $^2$School of Data Science, Fudan University}
  \maketitle
} \fi


\if0\blind
{
  \bigskip
  \bigskip
  \bigskip
  \begin{center}
    {\LARGE\bf Title}
\end{center}
  \medskip
} \fi

\bigskip
\begin{abstract}
    Inferring causal relationships from dynamical systems is the central interest of many scientific inquiries. Conditional local independence, which describes whether the evolution of one process is influenced by another process given additional processes, is important for causal learning in such systems. In this paper, we propose a hypothesis test for conditional local independence in Itô processes. Our test is grounded in the semimartingale decomposition of the Itô process, with which we introduce a stochastic integral process that is a martingale under the null hypothesis. We then apply a test for the martingale property, quantifying potential deviation from local independence. The test statistics is estimated using the optimal filtering equation. We show the consistency of the estimation, thereby establishing the level and power of our test. Numerical verification and a real-world application to causal discovery in brain resting-state fMRIs are conducted.
\end{abstract}

\noindent
{\it Keywords:}  Conditional local independence, Itô process, Stochastic differential equation, Dynamic causal discovery
\vfill

\newpage
{\fontsize{11}{13}\selectfont
\tableofcontents}

\newpage
\spacingset{1.45} 
\section{Introduction}

Dynamic systems governed by Stochastic Differential Equations (SDEs), whose solutions are referred to as \emph{Itô processes}, are prevalent in various fields, including physics \cite{van1976stochastic}, economics \cite{dunbar2019mathematical}, and neuroscience \cite{gilson2019network}. The Dynamic Causal Model (DCM), an extension of Pearl's Structural Causal Model (SCM), considers causal interpretations within dynamic systems. Recently, research on DCM \cite{rubenstein2018deterministic,bellot2021neural,peters2022causal,blom2023causality,boeken2024dynamic,hochsprung2024global} has gathered significant attention.

In dynamic systems, causal relationships are formalized through the concept of \emph{local independence} \cite{florens1989approximate}, which means the history of one process does not directly influence the evolution of another process. More specifically, for two Itô processes $X_\alpha$ and $X_\beta$ within a system $\alpha,\beta \in V$, the process $X_\beta$ is said to be conditionally local independence of $X_\alpha$ given the other processes $X_{V\backslash \alpha}$ if the history $\{X_\alpha(s)\}_{s\leq t}$ does not provide information, beyond that contained in $\sigma\{X_{V\backslash \alpha}(s); s\leq t\}$, about the infinitesimal change of $X_\beta$ at $t$. As such, the conditional local independence defines causality from a mechanism point of view \cite{commenges2009general,aalen2012causality}, with the connection to intervention discussed in \cite{eichler2010granger,hansen2014causal}. When $V$ is completely observed, the test of conditional local independence is then a test of (the absence of) causal relationships.

Another motivation for the conditional local independence test is to learn \emph{the local independence graph} \cite{didelez2008graphical} from data, which serves as a graphical representation for the causal mechanisms within the dynamic system. Equipped with the $\mu$-separation criterion \cite{didelez2008graphical,mogensen2020markov} and the Markov condition \cite{mogensen2018causal}, we can read off all conditional local independence relationships from this graph. Besides, the local independence graph can also be used to determine the identifiability of intervention (i.e., \emph{do}) effect over random processes \cite{didelez2015causal,roysland2025graphical}. Similar to causal directed acyclic graphs (DAGs) \cite{pearl2009causality}, the local independence graph can be recovered from data using constraint-based algorithms \cite{meek2014toward,mogensen2018causal} in conjunction with conditional local independence tests.

Despite the importance, there are no practical tests for conditional local independence in Itô processes. Although \cite{christgau2024nonparametric} proposed a martingale-based test for conditional local independence in counting processes, their procedure relies on the boundedness condition that cannot hold in Itô processes. Furthermore, \cite{lundborg2022conditional,lee2020testing} proposed functional conditional independence tests that can be applied to test the conditional independence of processes in the path space \cite{manten2024signature,manten2025asymmetric}. However, these tests are not applicable to conditional local independence, as the two concepts are not equivalent.

In this paper, we propose a novel test for conditional local independence in Itô processes. We first consider \emph{the semimartingale decomposition} of the process $X_\beta$, showing that the stochastic integral of the cause process $X_\alpha$ relative to the decomposed martingale is itself a martingale under the null hypothesis of local independence. Then, we apply the local covariance test (LCT) \cite{christgau2024nonparametric} to assess the martingale property of the integral process, with the violation of which being evidence for local dependence. Compared to other martingale tests \cite{phillips2014testing,wang2022testing}, the LCT statistic can achieve a $\sqrt{N}$-rate of convergence when estimated using sample splitting, and thus is favored. We relax the boundedness condition in the LCT to a weaker one of moment boundedness, which can then hold in Itô processes given integrability. Finally, for estimation that involves projection processes, we propose an estimator based on the \emph{optimal filtering equation} \cite{liptser2013statistics}, which characterizes the evolution of the projection processes. We show the consistency of the estimation, which then leads to a well-calibrated test.

The rest of the paper is organized as follows. In Sec.~\ref{sec.setup}, we first introduce the general framework of this paper. In Sec.~\ref{sec.cond-loc-ind}, we establish the martingale property of the integral process under the null hypothesis, and introduce the extended local covariance measure for quantifying this property. We outline how this measure can be estimated using sample splitting and optimal filtering in Sec.~\ref{sec.general-estimation}. Then, in Sec.~\ref{sec.asym-ana-ii}, we show the asymptotic properties of the estimated measure, under the assumptions of moment boundedness and consistency of the optimal filtering estimators. Based on these analyses, we drive a test for conditional local independence in Itô processes, and discuss the size and power of the test in Sec.~\ref{sec.lct}. In Sec.~\ref{sec.estimation}, we propose a practical optimal filtering estimator, showing its consistency. We conduct synthetic experiments and apply the proposed test to dynamic causal discovery in Sec.~\ref{sec.exp}. Sec.~\ref{sec.conclusion} concludes the paper and discusses future works. We provide the proofs (\ref{appx.proof-sec-setup}-\ref{appx.asm-part1}), technical lemmas (\ref{appx.tech-lemmas}), and additional simulation results (\ref{app.experiment}) in the appendix.

\section{The Local Covariance Measure for Itô processes}
\label{sec.setup}

In this section, we introduce the general framework of this paper. In Sec.~\ref{sec.cond-loc-ind}, we define the conditional local independence for Itô processes, and introduce an extended Local Covariance Measure for quantifying deviations from conditional local independence. Then, in Sec.~\ref{sec.general-estimation}, we outline how this measure can be estimated from discrete-time observations.

We consider a $d$-dimensional Itô process $X=\{X_1(t),...,X_d(t)\}^\top$, defined on the filtered probability space $(\Omega,\mathcal{F},\{\mathcal{F}_t\}_{0\leq t\leq T},P)$, that satisfies (with probability one):
\begin{equation*}
    X_i(t) = X_i(0) + \int_0^t \lambda_i(s) ds + \sum_{j=1}^{d_1} \int_0^t \Sigma_{ij}(s) dW_j(s), \quad i=1,...,d,
\end{equation*}
where the drift $\lambda(t)=\{\lambda_1(t),...,\lambda_d(t)\}^\top$ and the diffusion matrix $\Sigma(t)=[\Sigma_{ij}(t)]_{d\times d_1}$ consist of adaptive processes, and $W=\{W_1(t),...,W_{d_1}(t)\}^\top$ is a $d_1$-dimensional Wiener process. Let $V:=\{1,...,d\}$. For an element $\alpha\in V$ and a subset $C\subseteq V$, we denote $\mathcal{F}_t^\alpha$ and $\mathcal{F}^C_t$ to be the usual augmentation of the filtrations generated by $X_\alpha$ and $X_C$, respectively.

In particular, we here assume the process is \emph{regular}, in that the drift process $\lambda$ is continuous almost surely and the diffusion coefficient $\Sigma(t)\equiv \Sigma$ is an invertible matrix. Further, we assume that $X_0$ is $L_2$-integrable and $\lambda$ is bounded in $L_2$, that is:
\begin{equation}
    \mathbb{E}\|X_0\|^2_2<\infty, \qquad \sup_{t\in[0,T]} \mathbb{E}\|\lambda(t)\|^2_2 <\infty.
    \label{eq.regularity-lambda-bounded-inL2}
\end{equation}

For brevity, we say the process $X_t$ has the stochastic differential:
\begin{equation*}
    dX_t = \lambda(t)dt + \Sigma dW_t.
\end{equation*}

\subsection{Conditional local independence}
\label{sec.cond-loc-ind}
For the regular Itô process defined above, we introduce the definition of conditional local independence. We note that the $t\mapsto \mathbb{E}\{\cdot |\mathcal{F}_t\}$ below should be interpreted as the optional projection process \cite[Thm. 7.1, vol. 2, p. 319]{rogers2000diffusions}, which is unique up to indistinguishability. This is unproblematic in calculation \cite[Thm. 7.10, vol. 2, p. 320]{rogers2000diffusions}.

\begin{definition}[Conditional local independence]
    We say the process $X_\beta$ is locally independent with $X_\alpha$ given $\mathcal{F}_t^{C}$ ($\alpha\not \in C, \beta\in C$), and write $\alpha\not\to \beta|C$, if the process
    \begin{equation*}
        t\mapsto \mathbb{E}\{\lambda_\beta(t)|\mathcal{F}_t^{C}\}
    \end{equation*}
    is a version of 
    \begin{equation*}
        t\mapsto \mathbb{E}\{\lambda_\beta(t)|\mathcal{F}_t^{\alpha\cup C}\}.
    \end{equation*}
    \label{def.local-ind}
\end{definition}

To ease the notation, we let $\mathcal{F}_t:=\mathcal{F}_t^{C}$ and $\mathcal{G}_t:=\mathcal{F}_t^{\alpha\cup C}$ hereafter. We denote $\mu_t:=\mathbb{E}\{\lambda_\beta(t)|\mathcal{F}_t\}$ and $\boldsymbol{\mu}_t:=\mathbb{E}\{\lambda_\beta(t)|\mathcal{G}_t\}$. We refer to the null hypothesis ``$\mathbb{H}_0: \mu$ is a version of $\boldsymbol{\mu}$'' as local independence in short.

We note that Def.~\ref{def.local-ind} does not imply $\mathbb{E}\{\lambda_\beta(t)|\mathcal{F}_s\}=\mathbb{E}\{\lambda_\beta(t)|\mathcal{G}_s\}, \text{a.s.}, \forall t,s$ -- hence the name \emph{local} independence. We also note that local independence is \emph{asymmetric}, in the sense that ``$\alpha\not\to\beta| C$'' does not imply ``$\beta\not\to\alpha|C$'', which aligns with the asymmetry of causality in time. For $\alpha,\beta\in V$, it follows immediately that $\alpha\not\to\beta|V \backslash \alpha$ if $\lambda_{\beta}$ is $\mathcal{F}_t^{V\backslash \alpha}$-measurable, that is, if $\lambda_{\beta}$ does not depend on the sample path of the $\alpha$-coordinate.

Def.~\ref{def.local-ind} was also adopted in \cite{mogensen2018causal,mogensen2022graphical}. Furthermore, it can be shown that $X_\beta$ is an $\mathcal{F}_t$-semimartingale with the decomposition:
\begin{equation}
    X_\beta(t) = X_\beta(0) + \int_0^t \mu_s ds + M_t,
    \label{eq.F-semimartingale}
\end{equation}
where $M_t:=X_\beta(t)-X_\beta(0) - \int_0^t \mu_s ds$ is an $\mathcal{F}_t$-martingale. Similarly, $X_\beta$ is also a $\mathcal{G}_t$-semimartingale with the decomposition:
\begin{equation}
    X_\beta(t) = X_\beta(0) + \int_0^t \boldsymbol{\mu}_s ds + \mathbf{M}_t,
    \label{eq.G-semimartingale}
\end{equation}
where $\mathbf{M}_t:=X_\beta(t)-X_\beta(0)-\int_0^t \boldsymbol{\mu}_s ds$ is an $\mathcal{G}_t$-martingale. Therefore, Def.~\ref{def.local-ind} is equivalent to requiring that the decompositions \eqref{eq.F-semimartingale} and \eqref{eq.G-semimartingale} coincide, as adopted by \cite{florens1996noncausality}, or that the $\mathcal{F}_t$-martingale $M_t$ is also a $\mathcal{G}_t$-martingale, as used in \cite{christgau2024nonparametric}. We formalize the above remark below.

\begin{proposition}
    Assume \eqref{eq.regularity-lambda-bounded-inL2}, then $X_\beta$ is an $\mathcal{F}_t$ (resp. $\mathcal{G}_t$)-semimartingale with the decomposition \eqref{eq.F-semimartingale} (resp. \eqref{eq.G-semimartingale}), where $M_t$ (resp. $\mathbf{M}_t$) is a continuous, square-integrable $\mathcal{F}_t$ (resp. $\mathcal{G}_t$)-martingale.
    \label{prop.is-martingale}
\end{proposition}

Starting from Prop.~\ref{prop.is-martingale}, we can construct a test of local independence. To be specific, under $\mathbb{H}_0$, $\mu_t$ is a version of $\boldsymbol{\mu}_t$, which means $M_t=\mathbf{M}_t$ is a $\mathcal{G}_t$-martingale. Since $X_\alpha$ is continuous, thus especially $\mathcal{G}_t$-predictable, the stochastic integral process
\begin{equation}
    t\mapsto \int_0^t X_\alpha(s) dM_s
    \label{eq.stoch-integral-test}
\end{equation}
is a $\mathcal{G}_t$-martingale under $\mathbb{H}_0$. A test can then be conducted by detecting whether \eqref{eq.stoch-integral-test} is, indeed, a martingale. For this purpose, we adopt the Local Covariance Measure that was originally proposed by  \cite{christgau2024nonparametric} to test local independence in counting processes. Before introducing this measure, we first introduce the \emph{additive residual process}, which will replace $X_\alpha$ as the integrand in \eqref{eq.stoch-integral-test}. We do so for two reasons. First, to achieve a $\sqrt{N}$-rate via sample splitting, we need the integrand to fulfill \eqref{eq.residual-process-mean-zero} below. Second, this choice of integrand over $X_\alpha$ can lead to a more powerful test.

\begin{definition}[Addictive residual process]
    The addictive residual process $G$ of $X_\alpha$ given $\mathcal{F}_t$ is defined as:
    \begin{equation*}
        G_t:=X_\alpha(t)-\Pi_t,
    \end{equation*}
    where $\Pi_t:=\mathbb{E}\{X_\alpha(t)|\mathcal{F}_t\}$ is the predictable projection of $X_\alpha$ to $\mathcal{F}_t$ \cite[Thm. 19.2, vol. 2, p. 348]{rogers2000diffusions}. 
\end{definition}

To interpret, the additive residual projects $X_\alpha(t)$ onto the orthogonal complement of $L_2(\mathcal{F}_t)$ within $L_2(\mathcal{G}_t)$, retaining information in $X_\alpha$ that is $\mathcal{G}_t$-predictable. We have the process $G_t$ is $\mathcal{G}_t$-predictable and satisfies that:
\begin{equation}
    \mathbb{E}(G_t|\mathcal{F}_t) = 0, \quad 0\leq t\leq T.
    \label{eq.residual-process-mean-zero}
\end{equation}

We can now introduce the Local Covariance Measure for Itô processes.

\begin{definition}[Local Covariance Measure for Itô processes]
    Define for $t\in[0,T]$
    \begin{equation*}
        \gamma_t:=\mathbb{E}(I_t), \quad \text{where} \,\, I_t:=\int_0^t G_s dM_s.
    \end{equation*}
    whenever the expectation is well-defined. We refer to the function $t\mapsto \gamma_t$ as the Local Covariance Measure for Itô processes (LCM-Itô).
    \label{def.lcm}
\end{definition}

Under $\mathbb{H}_0$, the integrator, $M_t$, is a $\mathcal{G}_t$-martingale. Since the integrand, $G_t$, is a $\mathcal{G}_t$-predictable process, we have $I_t$ is a stochastic integral process, and enjoys the zero-mean property $\gamma_t=\mathbb{E}(I_t) = \mathbb{E}(I_0) = 0$. Thus, we can check $\mathbb{H}_0$ by checking the deviation of $\gamma_t$ from zero.

\begin{proposition}
    Assume \eqref{eq.regularity-lambda-bounded-inL2}, then under $\mathbb{H}_0$, the process $I=(I_t)$ is a $\mathcal{G}_t$-martingale with $I_0=0$. Thus, we have $\gamma_t=0$ for all $t\in[0,T]$.
    \label{prop.is-zero-under-h0}
\end{proposition}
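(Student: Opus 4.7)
The plan is to derive the local martingale property of $I$ from the fact that, under $\mathbb{H}_0$, the integrator $M$ is a $\mathcal{G}_t$-martingale, and the integrand $G$ is $\mathcal{G}_t$-predictable; then the zero-mean conclusion is immediate from $I_0=0$.

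First, I would translate $\mathbb{H}_0$ into a statement about $M$. By Definition~\ref{def.local-ind}, $\mu_t$ is a version of $\boldsymbol{\mu}_t$; since both are optional projections and the augmented filtrations are continuous (so these projections are indistinguishable, not merely equal in law at each $t$), we get $\int_0^t \mu_s\,ds = \int_0^t \boldsymbol{\mu}_s\,ds$ up to indistinguishability, and hence $M_t = \mathbf{M}_t$ up to indistinguishability. By Proposition~\ref{prop.is-martingale}, $\mathbf{M}$ is a continuous square-integrable $\mathcal{G}_t$-martingale, so $M$ is a continuous square-integrable $\mathcal{G}_t$-martingale as well.

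Second, I would verify that $G$ is a valid integrand. Both $X_\alpha$ (as a component of a continuous Itô process) and the predictable projection $\Pi$ are $\mathcal{G}_t$-adapted, and $G = X_\alpha - \Pi$ is caglad by construction. Caglad adapted processes are predictable, so $G$ is $\mathcal{G}_t$-predictable. Then the standard construction of the stochastic integral against a continuous local martingale (e.g., \cite{rogers2000diffusions}, Vol.~2, Ch.~IV) applies: localizing by stopping times $\tau_n \uparrow T$ that keep $G$ bounded and $\langle M\rangle_{\tau_n}$ integrable, $I^{\tau_n}$ is a true $\mathcal{G}_t$-martingale starting from $I_0 = 0$, which identifies $I$ as a $\mathcal{G}_t$-local martingale with $I_0 = 0$. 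If in addition $I$ is a true martingale on $[0,T]$, then $\gamma_t = \mathrm{E}(I_t) = \mathrm{E}(I_0) = 0$ for all $t\in[0,T]$.

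The only delicate point is upgrading the ``version'' equality $\mu \sim \boldsymbol{\mu}$ from Definition~\ref{def.local-ind} to indistinguishability of $M$ and $\mathbf{M}$, which is needed to transfer the $\mathcal{G}_t$-martingale property; this is handled by the continuity of the augmented filtrations and the regularity (continuity a.s., $L_2$-boundedness) of the drift assumed in~\eqref{eq.regularity-lambda-bounded-inL2}. Everything else is a direct application of the stochastic-integration machinery.
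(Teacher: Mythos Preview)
Your proposal is correct and follows the same overall architecture as the paper's proof: under $\mathbb{H}_0$ identify $M$ with $\mathbf{M}$, invoke Proposition~\ref{prop.is-martingale} to get that $M$ is a continuous square-integrable $\mathcal{G}_t$-martingale, note that $G$ is caglad $\mathcal{G}_t$-adapted hence predictable, and conclude that $I=\int G\,dM$ is a $\mathcal{G}_t$-local martingale with $I_0=0$. The second claim then follows trivially.

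The one substantive difference is in how the local-martingale property of the integral is justified. You argue abstractly via localization (stopping $G$ to be bounded, so the stopped integral is a true martingale). The paper instead applies a concrete criterion (Kuo, Thm.~6.6.4): it computes the quadratic variation $[M]_t$ explicitly by decomposing $M=A+B$ with $A$ the finite-variation part and $B$ the Brownian part, shows $[A]=[A,B]=0$ so that $[M]_t=|\sigma_\beta|^2 t$, and then verifies $\int_0^T G_t^2\,d[M]_t = |\sigma_\beta|^2\int_0^T G_t^2\,dt<\infty$ a.s.\ from the cadlag sample paths of $G$. Your route is cleaner for this proposition alone; the paper's route has the advantage that the identity $[M]_t=|\sigma_\beta|^2 t$ is reused later (in the definition of $\mathcal{V}$ and in the proofs of Propositions~\ref{prop.convergence-R2} and~\ref{prop.convergence-u}), so the extra work is not wasted. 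Your attention to upgrading the ``version'' equality $\mu\sim\boldsymbol{\mu}$ to indistinguishability of $M$ and $\mathbf{M}$ is a point the paper glosses over.
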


To interpret $\gamma$ in the alternative, we see that:
\begin{proposition}
Assume \eqref{eq.regularity-lambda-bounded-inL2}, then for every $0\leq t\leq T$, we have:
\begin{equation*}
    \gamma_t = \int_0^t \mathrm{cov}(G_s,\boldsymbol{\mu}_s-\mu_s)ds.
\end{equation*}
In particular, $\gamma$ is the zero-function if and only $\mathrm{cov}(G_t,\boldsymbol{\mu}_t-\mu_t)=0$ for almost all $t$.
\label{prop.is-covariance}
\end{proposition}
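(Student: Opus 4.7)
The plan is to use the decomposition
\[
dM_s = d\mathbf{M}_s + (\boldsymbol{\mu}_s - \mu_s)\,ds,
\]
which follows directly from the definitions of $M_t$ and $\mathbf{M}_t$. Substituting into $I_t$ gives
\[
I_t \;=\; \int_0^t G_s\,d\mathbf{M}_s \;+\; \int_0^t G_s(\boldsymbol{\mu}_s-\mu_s)\,ds.
\]
By Proposition~\ref{prop.is-martingale}, $\mathbf{M}_t$ is a square-integrable $\mathcal{G}_t$-martingale; since $G_s$ is caglad and $\mathcal{G}_t$-adapted, it is $\mathcal{G}_t$-predictable, so the first term is a $\mathcal{G}_t$-local martingale starting at $0$.

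The first substantive step is to promote that local martingale to a true martingale so that its expectation vanishes. I would verify the $L_2$ integrability condition $\mathrm{E}\!\int_0^T G_s^2\,d\langle\mathbf{M}\rangle_s<\infty$ using assumption~\eqref{eq.regularity-lambda-bounded-inL2} (which bounds the quadratic variation of $\mathbf{M}$) together with boundedness of $G$ in $L_2$ inherited from $X_\alpha$ being an Itô process with an $L_2$-bounded drift and constant diffusion. Granting this, $\mathrm{E}\!\int_0^t G_s\,d\mathbf{M}_s = 0$, and Fubini applied to the Lebesgue integral yields
\[
\gamma_t \;=\; \int_0^t \mathrm{E}\!\left[G_s(\boldsymbol{\mu}_s-\mu_s)\right]\,ds.
\]

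Next I would convert the product expectation into a covariance. Since $G_s=X_\alpha(s)-\mathrm{E}(X_\alpha(s)\mid\mathcal{F}_s)$, the tower property gives $\mathrm{E}(G_s\mid\mathcal{F}_s)=0$, hence $\mathrm{E}(G_s)=0$. Because $\mu_s$ and $\boldsymbol{\mu}_s$ are both integrable (again using~\eqref{eq.regularity-lambda-bounded-inL2}), this means
\[
\mathrm{E}[G_s(\boldsymbol{\mu}_s-\mu_s)] \;=\; \mathrm{cov}(G_s,\boldsymbol{\mu}_s-\mu_s),
\]
yielding the displayed identity.

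For the ``in particular'' clause, one direction is immediate: if $\mathrm{cov}(G_t,\boldsymbol{\mu}_t-\mu_t)=0$ for a.e.\ $t$, the integral vanishes for every $t$. Conversely, if $\gamma_t\equiv 0$ on $[0,T]$, the absolutely continuous function $t\mapsto\gamma_t$ has a.e.\ zero derivative, so $\mathrm{cov}(G_t,\boldsymbol{\mu}_t-\mu_t)=0$ for almost every $t$. The main obstacle in the whole argument is the integrability verification that upgrades $\int_0^\cdot G_s\,d\mathbf{M}_s$ from a local to a true martingale; once that is in hand, the remainder is bookkeeping with Fubini and the tower property.
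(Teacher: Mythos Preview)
Your proposal is correct and follows essentially the same approach as the paper: the paper writes $M_s = X_\beta(s) - \int_0^s \mu_u\,du$ and $X_\beta(s) = \int_0^s \boldsymbol{\mu}_u\,du + \mathbf{M}_s$, which is just another way of expressing your decomposition $dM_s = d\mathbf{M}_s + (\boldsymbol{\mu}_s-\mu_s)\,ds$, and then both proofs use the zero-mean property of $\int_0^t G_s\,d\mathbf{M}_s$, Fubini, and $\mathrm{E}\,G_s=0$. If anything, you are more explicit than the paper about the local-to-true martingale upgrade, which the paper simply asserts.
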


According to Prop.~\ref{prop.is-covariance}, the LCM-Itô quantifies deviations from $\mathbb{H}_0$ in terms of the covariance between the residual process and the difference of the $\mathcal{F}_t$- and $\mathcal{G}_t$-projections of the drift process. In particular, when the drift process $\lambda_t=\Phi X_t + b$ is a linear function of $X_t$ and the diffusion matrix is diagonal, we can show that $\gamma$ is the zero-function if and only if the parameter $\Phi_{\beta \alpha}=0$.

For the additive residual process $G_t=X_\alpha(t)-\Pi_t$, we have:
\begin{equation*}
    \gamma_t = \mathbb{E}\Big(\int_0^t G_s dM_s\Big) = \mathbb{E}\Big\{\int_0^t X_\alpha(s) dM_s\Big\} - \mathbb{E}\Big(\int_0^t \Pi_s dM_s\Big)
\end{equation*}
under the regularity \eqref{eq.regularity-lambda-bounded-inL2}. Since the predictable projection $\Pi_t$ is $\mathcal{F}_t$-predictable, and since $M_t$ is a $\mathcal{F}_t$-martingale, the process $\int_0^t \Pi_s dM_s$ is a $\mathcal{F}_t$-martingale with mean zero. Hence,
\begin{equation*}
    \gamma_t = \mathbb{E}\Big\{\int_0^t X_\alpha(s) dM_s\Big\},
\end{equation*}
which means the addictive residual process defines the same measure $\gamma_t$ as the stochastic integral \eqref{eq.stoch-integral-test} would. Nonetheless, the property \eqref{eq.residual-process-mean-zero} of the addictive residual process will allow us to achieve a $\sqrt{N}$-rate of convergence of the estimator of $\gamma_t$ in cases where the estimator of $\mu$ converges at a slower rate.

\subsection{General estimation procedure}
\label{sec.general-estimation}

In this section, we introduce an estimation procedure for the LCM-Itô based on sample splitting. To this end, we consider $N_0$ i.i.d replications of the process $X$, as similarly assumed in \cite{manten2024signature,christgau2024nonparametric}. Each process is discretely observed on a uniformly-spaced time grid $\{0,\delta,2\delta,...,n\delta\}$, with $T=n\delta$. We denote $\underline{t}:=i\delta$ for $t\in [k\delta,(k+1)\delta)$. 

We split the dataset into two disjoint subsets, indexed by $J$ and $J^c$, respectively, with $N:=|J|, N_c:=|J^c|$, and $J\cup J^c = \{1,...,N_0\}$. Then, we estimate $\Pi$ and $\mu$ using data indexed by $J^c$. By an estimate, $\hat{\Pi}$ (resp. $\hat{\mu}$), of $\Pi$ (resp. $\mu$), we mean a (stochastic) function fitted on samples indexed by $J^c$, and can be evaluated on the basis of $\mathcal{F}_{j,t}$ for each $j\in J$. In practice, these estimators can be constructed with the \emph{optimal filtering equations} \cite[Thm. 8.1, vol. 1. p. 318]{liptser2013statistics}, which are SDEs characterizing the evolutions of $\Pi$ and $\mu$. We call such estimators the optimal filtering estimators. In Sec.~\ref{sec.estimation}, we introduce an example of such estimators and discuss its statistical properties.

After having estimated the functional forms of $\hat{\Pi}$ and $\hat{\mu}$, we can evaluate them on the samples from $J$, for which we denote as $\hat{\Pi}_j, \hat{\mu}_j$ ($j\in J$), respectively. From these, we can obtain an estimate of $G_t$ with
\begin{equation*}
    \hat{G}_{j,t} = \hat{G}_{j,\underline{t}} = X_{\alpha,j}(\underline{t}) - \hat{\Pi}_{j,\underline{t}},
\end{equation*}
and an estimate of $M_t$ by 
\begin{equation*}
    \hat{M}_{j,t} = \hat{M}_{j,\underline{t}} = X_{\beta,j}(\underline{t}) - \int_0^t \hat{\mu}_{j,\underline{s}} ds.
\end{equation*}
It then follows that we can estimate the LCM-Itô by
\begin{align*}
    \hat{\gamma}_t = \frac{1}{N} \sum_{j\in J} \int_0^t \hat{G}_{j,\underline{s}} d\hat{M}_{j,\underline{s}} = \frac{1}{N} \sum_{j\in J} \sum_{l=1}^k \hat{G}_{j,(l-1)\delta} \{\hat{M}_{j,l\delta}-\hat{M}_{j,(l-1)\delta}\}
\end{align*}
for $t\in [k\delta,(k+1)\delta)$. We summarize the above procedure in Alg.~\ref{alg:estimation-lcm}.

\algdef{SE}{Begin}{End}{\textbf{begin}}{\textbf{end}}

\begin{algorithm}[htp]
\caption{Estimation for LCM-Itô}
\label{alg:estimation-lcm}
\begin{algorithmic}[1]
    \State \textbf{inputs:} samples $\{X_j(t)|j=1,...,N_0, t=0,...,n\delta\}$, a partition of indices $J\cup J^c$.
    \State \textbf{options:} optimal filtering estimators for $\Pi$ and $\mu$.
    \State Fit the estimators $\hat{\Pi}$ and $\hat{\mu}$ from the data $\{X_j(t)|j\in J^c, t=0,...,n\delta\}$.
    \State For each $j\in J$ and $k=1,...,n$, compute the estimated residual $\hat{G}_{j,k\delta}=X_{\alpha,j}(k\delta)-\hat{\Pi}_{j,k\delta}$ and the martingale $\hat{M}_{j,k\delta} = X_{\alpha,j}(k\delta) - \delta\sum_{l=1}^k \hat{\mu}_{j,k\delta}$.
    \State For each $k=1,...,n$, compute $\hat{\gamma}_{k\delta} = \frac{1}{N}\sum_{j\in J} \sum_{l=1}^k \hat{G}_{j,(l-1)\delta}\{\hat{M}_{j,l\delta}-\hat{M}_{j,(l-1)\delta}\}$.
    \State \textbf{output:} the estimated LCM-Itô $\hat{\gamma}_t$.
\end{algorithmic}
\end{algorithm}

We note that $\hat{\gamma}_t$ is a double machine learning estimator \cite{chernozhukov2018double} for $\gamma_t$, with the observations indexed by $J^c$ used to learn models of $\Pi$ and $\mu$, and with observations indexed by $J$ to estimate $\gamma_t$ based on these models. In Sec.~\ref{sec.lct}, we will introduce a more efficient estimator based on cross-fitting that can lead to a more powerful test. But for now, we first illustrate the asymptotic properties of $\hat{\gamma}$ using the sample-splitting estimator which is simpler.

\section{Asymptotic analyses}
\label{sec.asym-ana-ii}

In this section, we show asymptotic properties of the LCM-Itô estimator introduced in Sec.~\ref{sec.general-estimation}. Based on these asymptotic results, we will derive a test of $\mathbb{H}_0$ in Sec.~\ref{sec.lct}. To begin with, recall that the estimator $\hat{\gamma}$ satisfies:
\begin{equation*}
    \hat{\gamma}_t = \frac{1}{N} \sum_{j\in J} \int_0^t \hat{G}_{j,s} d\hat{M}_{j,s}.
\end{equation*}

With this form, we consider the decomposition
\begin{equation}
    \sqrt{N}\hat{\gamma} = U + R_1 + R_2 + R_3 + D_1 + D_2,
    \label{eq.decomposition}
\end{equation}
where
\begin{align*}
    &U_t = \frac{1}{\sqrt{N}} \sum_{j\in J} \int_0^t G_{j,s} d\mathbf{M}_{j,s}, \\
    &R_{1,t} = \frac{1}{\sqrt{N}} \sum_{j\in J} \int_0^t G_{j,s} (\mu_{j,s}-\hat{\mu}_{j,s}) ds, \\
    &R_{2,t} =  \frac{1}{\sqrt{N}} \sum_{j\in J} \int_0^t (\hat{G}_{j,s}-G_{j,s}) d\mathbf{M}_{j,s}, \\
    &R_{3,t} =  \frac{1}{\sqrt{N}} \sum_{j\in J} \int_0^t (\hat{G}_{j,s}-G_{j,s}) (\mu_{j,s}-\hat{\mu}_{j,s}) ds,\\
    &D_{1,t} = \frac{1}{\sqrt{N}} \sum_{j\in J} \int_0^t G_{j,s} (\boldsymbol{\mu}_{j,s}-\mu_{j,s})ds, \\
    &D_{2,t} = \frac{1}{\sqrt{N}} \sum_{j\in J} \int_0^t (\hat{G}_{j,s} - G_{j,s})(\boldsymbol{\mu}_{j,s}-\mu_{j,s})ds.
\end{align*}

In the following, we will show that the processes $U$ and $D_1-\sqrt{N}\gamma$ each converge in distribution, and that the remaining terms converge to the zero-processes. In this regard, we have that $\sqrt{N}(\hat{\gamma}-\gamma)$ is stochastically bounded, so the LCM-Itô estimator will asymptotically detect if the $\gamma$ is non-zero. Moreover, under $\mathbb{H}_0$, $\mu$ is a version of $\boldsymbol{\mu}$, which means the processes $D_1$ and $D_2$ are zero-processes almost surely. It then follows that $U$ drives the asymptotic limit distribution of the LCM-Itô estimator under $\mathbb{H}_0$. During the discussion, we will use $\overset{\mathcal{D}}{\to}$ and $\overset{\mathcal{P}}{\to}$ to denote convergence in distribution and convergence in probability, respectively. We denote $C[0,T]$ as the space of continuous functions on $[0,T]$.

We first show the process $U$ converges in distribution to a Gaussian martingale as $N\to\infty$. In what follows, we define a function $\mathcal{V}:[0,T]\mapsto \mathbb{R}^+$ that is the variance of the limiting process.

\begin{definition}
    We define the variance function $\mathcal{V}$ as:
    \begin{equation}
        \mathcal{V}_t = \|\Sigma_\beta\|^2_2 \mathbb{E}\Big(\int_0^t G_s^2 ds \Big),
        \label{eq.def-variance-function}
    \end{equation}
    where $\Sigma_\beta := [\Sigma_{\beta 1},...,\Sigma_{\beta d}]^\top$.
\end{definition}

We note that under the regularity \eqref{eq.regularity-lambda-bounded-inL2}, $\mathcal{V}_t<\infty$ for every $t\in [0,T]$. Moreover, it can be shown that $\mathcal{V}_t$ is the variance of $\int_0^t G_s d\mathbf{M}_s$. We then show that:

\begin{proposition}
    Assume \eqref{eq.regularity-lambda-bounded-inL2}, then $U \overset{\mathcal{D}}{\to} U_0$ in $C[0,T]$ as $N\to \infty$, where $U_0$ is a mean-zero continuous Gaussian martingale on $[0,T]$ with variance function $\mathcal{V}$.
    \label{prop.convergence-u}
\end{proposition}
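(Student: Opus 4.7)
The plan is to invoke the functional martingale central limit theorem (Rebolledo), treating $U$ as a normalized sum of i.i.d.\ continuous square-integrable martingales. First I would observe that each summand $U^{(j)}_t := \int_0^t G_{j,s}\,d\mathbf{M}_{j,s}$ is a continuous $\mathcal{G}^j_t$-martingale: $\mathbf{M}_j$ is continuous and square-integrable by Prop.~\ref{prop.is-martingale}, the integrand $G_{j,s}$ is caglad and $\mathcal{G}^j_t$-adapted hence predictable, and the $L^p$ moment bound on $G$ from Asm.~\ref{asm.moment-bound} ensures that $\mathrm{E}\int_0^T G_{j,s}^2\,d\langle \mathbf{M}_j\rangle_s < \infty$. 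Because the samples are i.i.d., each $U^{(j)}$ remains a martingale in the joint filtration $\mathcal{G}_t^{(N)} := \mathcal{G}_t^1 \vee \cdots \vee \mathcal{G}_t^N$, and hence $U_t = N^{-1/2}\sum_{j=1}^N U^{(j)}_t$ is itself a continuous $\mathcal{G}_t^{(N)}$-martingale null at zero.

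Next I would compute the predictable quadratic variation. Since $\mathbf{M}_j$ differs from $X_\beta(\cdot,j)$ only by the absolutely continuous term $\int_0^\cdot \boldsymbol{\mu}_{j,s}\,ds$, which has zero quadratic variation, one obtains $\langle \mathbf{M}_j\rangle_s = \langle X_\beta\rangle_s = |\sigma_\beta|^2 s$, where $|\sigma_\beta|^2 := \sum_k \sigma_{\beta k}^2$ is read off the $\beta$-row of the diffusion matrix in \eqref{def.multivariate-ito-process}. Consequently
\[
\langle U \rangle_t \;=\; \frac{|\sigma_\beta|^2}{N}\sum_{j=1}^N \int_0^t G_{j,s}^2\,ds.
\]
The i.i.d.\ summands $\int_0^t G_{j,s}^2\,ds$ have finite expectation equal to $|\sigma_\beta|^{-2}\mathcal{V}(t)$ by Asm.~\ref{asm.moment-bound} (since $p>2$ gives $L^2$-boundedness of $G$), so the strong law of large numbers yields $\langle U \rangle_t \to \mathcal{V}(t)$ almost surely, and in particular in probability, for every fixed $t\in[0,T]$.

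With these ingredients, the functional CLT for continuous local martingales applies: pointwise convergence of the predictable quadratic variation to a continuous deterministic limit $\mathcal{V}$, together with path-continuity of $U$ (which makes the Lindeberg/asymptotic-negligibility condition on jumps automatic), yields weak convergence of $U$ in the Skorokhod space $D[0,T]$ to a continuous mean-zero Gaussian martingale with variance function $\mathcal{V}$. Since the limit $U_0$ lives in $C[0,T]$, the convergence automatically upgrades to weak convergence in $C[0,T]$. The only non-routine step I anticipate is rigorously justifying $\langle \mathbf{M}\rangle_s = |\sigma_\beta|^2 s$ from the implicit, optional-projection definition of $\mathbf{M}$; this I would handle via the invariance of quadratic variation under addition of a continuous finite-variation process together with the defining SDE for $X_\beta$. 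Everything else reduces to a verification of classical hypotheses, with Asm.~\ref{asm.moment-bound} supplying precisely the integrability needed for the LLN step.
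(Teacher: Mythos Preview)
Your proposal is correct and follows essentially the same route as the paper: apply Rebolledo's martingale CLT after (i) verifying each $\int_0^t G_{j,s}\,d\mathbf{M}_{j,s}$ is a continuous square-integrable martingale in the enlarged filtration, (ii) computing $\langle\mathbf{M}\rangle_t=[\mathbf{M}]_t=|\sigma_\beta|^2 t$ via finite-variation invariance, and (iii) using the LLN on the i.i.d.\ summands $\int_0^t G_{j,s}^2\,ds$ to get $\langle U\rangle_t\to\mathcal{V}(t)$. The paper's argument is the same, only phrased with $[U]$ instead of $\langle U\rangle$ (these coincide by continuity) and citing a version of the CLT that already has the jump condition absorbed.
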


To control the remainder terms $R_1, R_2, R_3$, we require the estimators $\hat{\mu}$ and $\hat{G}$ to be consistent. Specifically, define:
\begin{equation*}
    \|Z\|_p := \left(\mathbb{E}\int_0^T \|Z_s\|^p ds\right)^{\frac{1}{p}}
\end{equation*}
as the $\ell_p$-norm for any (vector) random process $Z\in L_p(\Omega\times [0,T])$. Then, we assume:

\begin{assumption}
    There exists $p>2$ such that: 
    \begin{equation*}
        \lim_{N \to \infty, \delta\to 0} \max \left\{\|\mu-\hat{\mu}\|_p,\, \|G-\hat{G}\|_p, \, N^{\frac{1}{2}}\|\mu-\hat{\mu}\|_2 \|G-\hat{G}\|_2\right\} = 0.
    \end{equation*}
    \label{asm.Lp-consistency-estimator}
\end{assumption}

We make the following remark on Asm.~\ref{asm.Lp-consistency-estimator}. Compared with \cite{christgau2024nonparametric} that studied discrete-time processes, we here study the continuous-time Itô process. Therefore, the error terms in Asm.~\ref{asm.Lp-consistency-estimator} are also influenced by the granularity of the time grid. Thus, to vanish the error, we require not only a sufficiently large sample size $N\to \infty$, but also a refined time partition $\delta \to 0$. This is similarly adopted in \cite{yoshida1992estimation,denis2020consistent,denis2021ridge}. In Sec.~\ref{sec.estimation}, we will provide a practical estimator based on the optimal filtering equation, such that the Asm.~\ref{asm.Lp-consistency-estimator} is satisfied. 

We also need the following assumption that bounds the moments of $\mu$ and $G$.
\begin{assumption}
    For the $q>2$ such that $\frac{1}{p}+\frac{1}{q}=\frac{1}{2}$ with the $p$ in Asm.~\ref{asm.Lp-consistency-estimator}, the processes $\mu$ and $G$ are bounded in $L_{q}$. That is, we assume:
    \begin{equation*}
        \sup_{t\in [0,T]} \mathbb{E}|\mu_t|^q <\infty, \quad \sup_{t\in [0,T]} \mathbb{E}|G_t|^q <\infty.
    \end{equation*}
    We also assume the process $\boldsymbol{\mu}$, the estimators $\hat{\mu}$ and $\hat{G}$ satisfy the same bounds as $\mu$ and $G$, respectively.
    \label{asm.moment-bound}
\end{assumption}

A similar but stronger assumption on bounding the infinity norms of $\mu$ and $G$ was required in \cite{christgau2024nonparametric}. This kind of assumption is important to establish the stochastic equicontinuity of the remainder processes, and hence achieve convergence uniformly over $t$ using the chaining argument. Nonetheless, unlike \cite{christgau2024nonparametric} that requires $\mu$ and $G$ to be bounded, which can be impractical for Itô processes, we here only bound their moments. These moment bounds can be satisfied by diffusion processes under the global Lipschitz condition and the linear growth condition. Please refer to Corollary~\ref{coro.lp-bounded} in the appendix for details.

With these assumptions, we can establish that the remainder terms converge uniformly to the zero-process.

\begin{proposition}
    Assume Asms.~\ref{asm.Lp-consistency-estimator}, \ref{asm.moment-bound}, then we have $\sup_{t\in [0,T]}|R_{i,t}| \overset{\mathcal{P}}{\to} 0$ as $N\to \infty, \delta\to 0$, for $i=1,2,3$.
   \label{prop.R1-3-zero}
\end{proposition}

To control the asymptotic behavior of the LCM estimator in the alternative $\mathbb{H}_1$, we also need to control the two terms $D_1$ and $D_2$. To be specific, let $Q_t := \int_0^t G_{s}(\boldsymbol{\mu}_{s}-\mu_{s}) ds - \gamma_t$, we show that:

\begin{proposition}
    Assume Asms.~\ref{asm.Lp-consistency-estimator}, \ref{asm.moment-bound}, then the process $D_1 - \sqrt{N} \gamma$ converges in distribution in $C[0,T]$ to a Gaussian process with covariance function $(s,t)\mapsto \mathrm{Cov}(Q_s,Q_t)$ as $N\to\infty,\delta\to 0$. Moreover, we have $\sup_{t\in [0,T]} |D_{2,t}|\overset{\mathcal{P}}{\to} 0$ as $N\to\infty,\delta\to 0$.
    \label{prop.asym-D1-D2}
\end{proposition}

We can then combine the above propositions into a single theorem regarding the asymptotic behavior of the LCM-Itô estimator:

\begin{theorem}
    Assume \eqref{eq.regularity-lambda-bounded-inL2} and Asms.~\ref{asm.Lp-consistency-estimator}, \ref{asm.moment-bound}, then 
    \begin{enumerate}[label=(\roman*)]
        \item  under $\mathbb{H}_0$, it holds that
        \begin{equation*}
            \sqrt{N} \hat{\gamma} \overset{\mathcal{D}}{\to} U_0
        \end{equation*}
        in $C[0,T]$ as $N\to\infty, \delta\to 0$, where $U_0$ is a mean zero continuous Gaussian martingale on $[0,T]$ with variance function $\mathcal{V}$.
        \item For every $\epsilon>0$, there exists $K>0$ such that
        \begin{equation*}
            \limsup_{N\to\infty,\delta\to 0} P(\sqrt{N} \sup_{t\in [0,T]}|\hat{\gamma}_t-\gamma_t|> K) < \epsilon.
        \end{equation*}
    \end{enumerate}
    \label{thm.main-4.6}
\end{theorem}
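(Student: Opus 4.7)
The plan is to assemble Propositions~\ref{prop.R1-3-zero}, \ref{prop.convergence-u}, and \ref{prop.asym-D1-D2} via the given decomposition
\[
    \sqrt{N}\hat{\gamma} \;=\; U + R_1 + R_2 + R_3 + D_1 + D_2.
\]
The result is almost a book-keeping exercise once one checks carefully how the null hypothesis simplifies the terms $D_1$ and $D_2$, and once Slutsky/tightness are applied in the space $(C[0,T],\ell_\infty)$.

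For part (i) I will first observe that under $\mathbb{H}_0:\alpha\not\to\beta\,|\,\beta\cup C$ the processes $\mu_t$ and $\boldsymbol{\mu}_t$ are versions of one another, so $\boldsymbol{\mu}_s-\mu_s=0$ for almost all $s$ with probability one. Consequently the integrands defining $D_1$ and $D_2$ vanish identically, giving $D_1\equiv 0$ and $D_2\equiv 0$. The decomposition therefore reduces to
\[
    \sqrt{N}\hat{\gamma} \;=\; U + R_1 + R_2 + R_3.
\]
By Proposition~\ref{prop.R1-3-zero}, $\sup_{t\in[0,T]}|R_{i,t}|\to 0$ in probability for $i=1,2,3$, and by Proposition~\ref{prop.convergence-u}, $U\to_{\mathcal{D}} U_0$ in $C[0,T]$ with $U_0$ a mean-zero Gaussian martingale of variance function $\mathcal{V}$. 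Applying Slutsky's theorem in $(C[0,T],\ell_\infty)$ to the sum then delivers $\sqrt{N}\hat{\gamma}\to_{\mathcal{D}} U_0$, completing (i).

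For part (ii) I subtract $\sqrt{N}\gamma$ from both sides of the decomposition to get
\[
    \sqrt{N}(\hat{\gamma}-\gamma) \;=\; U + R_1 + R_2 + R_3 + (D_1 - \sqrt{N}\gamma) + D_2,
\]
and I control the $\ell_\infty$-norm of each summand separately. The process $U$ is tight in $(C[0,T],\ell_\infty)$ since it converges in distribution by Proposition~\ref{prop.convergence-u}, so $\|U\|_\infty$ is stochastically bounded. By Proposition~\ref{prop.R1-3-zero}, $\|R_i\|_\infty\to_{\mathcal{P}}0$ for $i=1,2,3$, so each contributes an asymptotically negligible term. By Proposition~\ref{prop.asym-D1-D2}(i), $D_1-\sqrt{N}\gamma$ converges in distribution in $(C[0,T],\ell_\infty)$, hence $\|D_1-\sqrt{N}\gamma\|_\infty$ is stochastically bounded, and by Proposition~\ref{prop.asym-D1-D2}(ii), $\|D_2\|_\infty\to_{\mathcal{P}}0$. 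A standard tightness argument (given $\epsilon>0$, pick $K_U,K_D$ so that each of the tight pieces exceeds its bound with probability less than $\epsilon/6$ and the vanishing pieces exceed $1$ with probability less than $\epsilon/6$) then yields the existence of $K>0$ with $\limsup_{N\to\infty,\delta\to 0} P(\sqrt{N}\|\hat{\gamma}-\gamma\|_\infty>K)<\epsilon$.

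The only substantive point is making the vanishing of $D_1$ and $D_2$ under $\mathbb{H}_0$ rigorous, because $\boldsymbol{\mu}$ and $\mu$ are only defined up to indistinguishability. However, since the integrals defining $D_1,D_2$ are Lebesgue integrals in $s$, and $\boldsymbol{\mu}_s=\mu_s$ almost surely for each $s$, Fubini's theorem gives that $\boldsymbol{\mu}_s-\mu_s=0$ for $(P\times ds)$-almost every $(\omega,s)$, so the path-wise integrals vanish almost surely. Everything else is either direct appeal to the three cited propositions or the standard combination of convergence in distribution with convergence in probability via Slutsky's lemma, so I do not anticipate further obstacles.
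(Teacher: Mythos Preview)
Your proposal is correct and follows essentially the same route as the paper: for (i) you use that $\mathbb{H}_0$ forces $D_1\equiv D_2\equiv 0$ and then apply Slutsky with Propositions~\ref{prop.R1-3-zero} and~\ref{prop.convergence-u}; for (ii) you bound $\sqrt{N}\|\hat{\gamma}-\gamma\|_\infty$ term-by-term using the same decomposition, invoking tightness of $U$ and of $D_1-\sqrt{N}\gamma$ via their weak convergence (Propositions~\ref{prop.convergence-u} and~\ref{prop.asym-D1-D2}(i)) and the vanishing of the remaining pieces. The paper proceeds identically, only packaging the tightness of the two non-vanishing terms into separate lemmas (Lemmas~\ref{lem.U0-is-tight} and~\ref{lem.Tilde-D>k}) rather than appealing directly, as you do, to the general fact that weak convergence in a Polish space implies uniform tightness.
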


Thus, we have established the weak asymptotic limit of $\sqrt{N} \gamma$ under $\mathbb{H}_0$. However, the variance function $\mathcal{V}$ of the limiting Gaussian martingale is unknown and must be estimated from data. We consider the following plug-in estimator:
\begin{equation}
    \hat{\mathcal{V}}_t = \|\hat{\Sigma}_\beta\|^2 \frac{1}{N} \sum_{j\in J} \int_0^t \hat{G}_{j,s}^2 ds,
    \label{eq.variance-estimator}
\end{equation}
where $\hat{\Sigma}_\beta$ is from the estimator $\hat{\Sigma}$ of the diffusion matrix. To establish the consistency of $\hat{\mathcal{V}}$, we need to assume $\hat{\Sigma}$ is consistent:
\begin{assumption}
    We assume $\hat{\Sigma}\overset{\mathcal{P}}{\to} \Sigma$ as $N\to \infty, \delta\to 0$.
    \label{asm.Sigma-convergence-in-probability}
\end{assumption}

We then have the following result:
\begin{proposition}
     Assume Asms.~\ref{asm.Lp-consistency-estimator}, \ref{asm.moment-bound}, \ref{asm.Sigma-convergence-in-probability}, then we have $\sup_{t\in [0,T]} |\hat{\mathcal{V}}_t - \mathcal{V}_t| \overset{\mathcal{P}}{\to} 0$ as $N\to\infty, \delta\to 0$.
     \label{prop.consistency-vt}
\end{proposition}
\section{Conditional local independence testing}
\label{sec.lct}

In this section, we introduce a test based on the LCM-Itô estimator. Using the asymptotic properties derived in Sec.~\ref{sec.asym-ana-ii}, we show the asymptotic distribution of our testing statistic under the null hypothesis, and thus establish the level of our test. In addition, we show a power result for alternatives where $\gamma$ decays at an order of at most $N^{-\frac{1}{2}}$. Finally, we extend the test to be based on a cross-fitted LCM-Itô estimator and establish the corresponding level result.

To be specific, we consider the Local Covariance Test statistic introduced by \cite{christgau2024nonparametric}:
\begin{equation}
    \hat{T}_N := \frac{\sqrt{N}}{\sqrt{\hat{\mathcal{V}}_T}} \sup_{t\in [0,T]} |\hat{\gamma}_t|.
    \label{eq.lct-statistic}
\end{equation}

This statistic captures the deviation of $\gamma$ from zero at any $t\in [0,T]$ by taking the infinity norm of $\hat{\gamma}$. Moreover, it is normalized over $\hat{\mathcal{V}}_T$, which is the limiting variance $\sqrt{N} \sup_t |\hat{\gamma}_t|$ under $\mathbb{H}_0$, so that the limiting null distribution does not depend on $\mathcal{V}$.

\subsection{Type I and type II error control}

In this section, we establish the asymptotic level and power of our test. For that purpose, we first need to assume the variance at $t=T$ is bounded away from zero:
\begin{assumption}
    There exists a $\nu_0>0$ such that $\mathcal{V}_T\geq \nu_0$.
    \label{asm.bound-away-from-zero}
\end{assumption}

We proceed to show that under $\mathbb{H}_0$, the statistic $\hat{T}_N$ is asymptotically distributed as $S:=\sup_{t\in [0,T]} |W_t|$, where $(W_t)$ is a standard Wiener process. We note that the Cumulative Density Function (CDF) of $S$ can be written as:
\begin{equation}
    F_S(x) = P(S\leq x) = \frac{4}{\pi} \sum_{i=0}^\infty \frac{(-1)^i}{2i+1} \exp\left\{-\frac{\pi^2 (2i+1)^2T}{8x^2}\right\}, \quad x>0.
    \label{eq.sup_t-W_t-cdf}
\end{equation}

The $p$-value for a test of $\mathbb{H}_0$ then equals $1-F_S(\hat{T}_N)$, and since the series in \eqref{eq.sup_t-W_t-cdf} converges at an exponential rate, the $p$-value can be computed with high numerical precision. Furthermore, for a significance level $\alpha\in (0,1)$, the $(1-\alpha)$-quantile of $F_S$, denoted as $z_{1-\alpha}$, exists and is unique since the right-hand side of \eqref{eq.sup_t-W_t-cdf} is strictly increasing and continuous.

Then, based on Thm.~\ref{thm.main-4.6}, we can deduce the asymptotic level of our test:
\begin{theorem}
    Under \eqref{eq.regularity-lambda-bounded-inL2} and  Asms.~\ref{asm.Lp-consistency-estimator}, \ref{asm.moment-bound}, \ref{asm.Sigma-convergence-in-probability}, and \ref{asm.bound-away-from-zero}, we have:
    \begin{equation*}
        \hat{T}_N \overset{\mathcal{D}}{\to} S
    \end{equation*}
    as $N\to\infty, \delta\to 0$. As a consequence, for any $\alpha\in (0,1)$, we have:
    \begin{equation*}
        \limsup_{N\to\infty, \delta\to 0} P(\hat{T}_N>z_{1-\alpha}) \leq \alpha.
    \end{equation*}
    In other words, our test has an asymptotic level $\alpha$.
    \label{thm.asym-level}
\end{theorem}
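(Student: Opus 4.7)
The plan is to combine Thm.~\ref{thm.main-4.6}(i), Prop.~\ref{prop.consistency-vt}, and the continuous-mapping device of Prop.~\ref{prop.continuous-mapping}, and then identify the limiting distribution via a Brownian time change.

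First I would set up the relevant functional. Define $\mathcal{J}:C[0,T]\times C[0,T]\to\mathbb{R}$ by
\[
\mathcal{J}(f_1,f_2):=\frac{\sup_{t\in[0,T]}|f_1(t)|}{\sqrt{f_2(T)}},
\]
so that $\hat{T}_N=\mathcal{J}(\sqrt{N}\hat{\gamma},\hat{\mathcal{V}})$. On the closed subset $\{(f_1,f_2):f_2(T)\geq \nu_0\}$ of the product space, $\mathcal{J}$ is continuous in the uniform topology, since both $f_1\mapsto \sup_t |f_1(t)|$ and $f_2\mapsto f_2(T)$ are continuous and division by a quantity bounded away from zero is continuous. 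Asm.~\ref{asm.bound-away-from-zero} places the target point $(U_0,\mathcal{V})$ in this set.

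Second, I would assemble the joint weak convergence $(\sqrt{N}\hat{\gamma},\hat{\mathcal{V}})\to_\mathcal{D}(U_0,\mathcal{V})$ under $\mathbb{H}_0$. Thm.~\ref{thm.main-4.6}(i) gives the marginal $\sqrt{N}\hat{\gamma}\to_\mathcal{D} U_0$, and Prop.~\ref{prop.consistency-vt} gives $\hat{\mathcal{V}}\to_\mathcal{P}\mathcal{V}$ uniformly on $[0,T]$. Because the second limit is the deterministic function $\mathcal{V}$, a standard Slutsky-type argument promotes these to joint weak convergence. Applying Prop.~\ref{prop.continuous-mapping} with $\mathcal{J}$ then yields
\[
\hat{T}_N\to_\mathcal{D}\frac{\sup_{t\in[0,T]}|U_0(t)|}{\sqrt{\mathcal{V}(T)}}.
\]

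Third, I would identify this limit as $S$. By the Dambis--Dubins--Schwarz representation applied to the continuous Gaussian martingale $U_0$ with the \emph{deterministic} quadratic variation $\mathcal{V}$, there exists a standard Brownian motion $W$ such that $U_0(t)=W(\mathcal{V}(t))$. Since $\mathcal{V}$ is nondecreasing with $\mathcal{V}(T)\geq \nu_0>0$, $\sup_{t\in[0,T]}|U_0(t)|=\sup_{s\in[0,\mathcal{V}(T)]}|W(s)|$, and Brownian scaling gives $\sup_{s\in[0,\mathcal{V}(T)]}|W(s)|\stackrel{d}{=}\sqrt{\mathcal{V}(T)}\sup_{u\in[0,1]}|W(u)|$, whose law is $S$. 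The level conclusion then follows from Portmanteau: since $F_S$ in \eqref{eq.sup_t-W_t-cdf} is strictly increasing and continuous, $z_{1-\alpha}$ is a continuity point of the limit law, so $P(\Phi=1)=P(\hat{T}_N>z_{1-\alpha})\to P(S>z_{1-\alpha})=\alpha$.

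The heavy analytic work has been done upstream, so I do not expect a serious obstacle. The one detail that warrants care is that $\hat{\mathcal{V}}$ and $\hat{\gamma}$ are computed from the same data and are not independent; a joint CLT would normally be required. However, because $\hat{\mathcal{V}}$ converges in probability to the \textbf{deterministic} function $\mathcal{V}$, joint weak convergence to $(U_0,\mathcal{V})$ follows from the marginal weak convergence of $\sqrt{N}\hat{\gamma}$ via the standard coupling-with-a-constant argument, so the Slutsky step is clean. A minor housekeeping item is checking that $\hat{\mathcal{V}}(T)\geq \nu_0/2$ with probability tending to one, which is immediate from $\hat{\mathcal{V}}(T)\to_\mathcal{P}\mathcal{V}(T)\geq \nu_0$ and allows $\mathcal{J}$ to be applied even though it is only defined on the sub-level set.
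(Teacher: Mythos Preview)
Your proposal is correct and follows essentially the same route as the paper: define the functional $\mathcal{J}(f_1,f_2)=\|f_1\|_\infty/\sqrt{f_2(T)}$, verify its continuity on the subset where $f_2(T)\geq\nu_0$, invoke Prop.~\ref{prop.continuous-mapping}, and then identify the limit via the Brownian time-change representation of the Gaussian martingale $U_0$ together with Brownian scaling. The only cosmetic differences are that the paper cites Prop.~C.2 of \cite{christgau2024nonparametric} for the time change (your Dambis--Dubins--Schwarz) and Thm.~4.1 of \cite{bengs2019uniform} for the level statement (your Portmanteau argument via continuity of $F_S$); your explicit Slutsky discussion for joint convergence is in fact the content hidden inside the one-line proof of Prop.~\ref{prop.continuous-mapping}.
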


In general, we can not expect the test to have power against all alternatives to $\mathbb{H}_0$. This is analogous to other types of conditional independence tests \cite{lundborg2022conditional,cai2022distribution}. However, we do have the following result that establishes power against local alternatives in which $\sup_t |\gamma_t|$ decays at an order of at most $N^{-\frac{1}{2}}$:

\begin{theorem}
    Assume \eqref{eq.regularity-lambda-bounded-inL2} and Asms.~\ref{asm.Lp-consistency-estimator}, \ref{asm.moment-bound}, \ref{asm.Sigma-convergence-in-probability}. Then, for any $0<\alpha<\beta<1$, there exists $c>0$ such that:
    \begin{equation*}
        \liminf_{N\to\infty,\delta\to 0} \inf_{\gamma \in\mathcal{A}_{c,N}} P(\hat{T}_N>z_{1-\alpha})\geq \beta,
    \end{equation*}
    where $\mathcal{A}_{c,N} := \left\{ \sup_{t\in [0,T]} |\gamma_t | \geq c N^{-\frac{1}{2}} \right\}$. In other words, the type II error rate of the test can be controlled to arbitrarily small.
    \label{thm.asym-power}
\end{theorem}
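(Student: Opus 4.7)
The plan is to leverage a signal-versus-noise argument: for any alternative $\theta\in\mathcal{A}_{c,N}$, the signal satisfies $\sqrt{N}\|\gamma\|_\infty\geq c$ by definition of $\mathcal{A}_{c,N}$, while Theorem~\ref{thm.main-4.6}(ii) controls the estimation noise $\sqrt{N}\|\hat{\gamma}-\gamma\|_\infty$ uniformly in $N$. A reverse triangle inequality then shows that, on a high-probability event, $\hat{T}_N$ exceeds the rejection threshold $z_{1-\alpha}$ whenever $c$ is taken large enough; choosing $c$ appropriately yields the desired asymptotic power.

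Concretely, I would carry out the argument in three steps. First, invoke Theorem~\ref{thm.main-4.6}(ii) to pick $K=K(\eta)$ with $P(\sqrt{N}\|\hat{\gamma}-\gamma\|_\infty > K) < \eta/2$ asymptotically, for any fixed $\eta>0$. Second, combine Proposition~\ref{prop.consistency-vt} with the $L_p$ moment bound in Assumption~\ref{asm.moment-bound} (which implies $\mathrm{E}\int_0^T G_s^2\,ds<\infty$) to obtain a finite $V_{\max}$ such that $P(\hat{\mathcal{V}}(T)>V_{\max})<\eta/2$ eventually. On the intersection of these two events, the reverse triangle inequality gives
\begin{equation*}
\hat{T}_N \;=\; \frac{\sqrt{N}\|\hat{\gamma}\|_\infty}{\sqrt{\hat{\mathcal{V}}(T)}} \;\geq\; \frac{\sqrt{N}\|\gamma\|_\infty-\sqrt{N}\|\hat{\gamma}-\gamma\|_\infty}{\sqrt{\hat{\mathcal{V}}(T)}} \;\geq\; \frac{c-K}{\sqrt{V_{\max}}}.
\end{equation*}
Third, set $c$ larger than $K+z_{1-\alpha}\sqrt{V_{\max}}$. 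Then $\hat{T}_N>z_{1-\alpha}$ on the good event, hence $P(\Phi=1)\geq 1-\eta$; taking $\eta=1-\beta$ concludes the proof.

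The main obstacle is that Theorem~\ref{thm.main-4.6}(ii) and Proposition~\ref{prop.consistency-vt} are stated pointwise in $\theta$, whereas the conclusion is uniform over $\mathcal{A}_{c,N}$. I would address this by revisiting the appendix proofs and tracking the constants, verifying that they depend only on the moments appearing in Assumption~\ref{asm.moment-bound} and on the rates appearing in Assumption~\ref{asm.Lp-consistency-estimator}; this makes the bounds uniform provided those assumptions hold with common constants over the alternative class, which appears to be the natural reading of the theorem's hypotheses. For the variance upper bound, a uniform constant $V_{\max}$ follows directly from a uniform $L_2$ bound on $G$, together with consistency of $\hat{\sigma}_\beta$.
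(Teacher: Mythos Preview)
Your proposal is correct and follows essentially the same approach as the paper: both use the reverse triangle inequality $\sqrt{N}\|\hat{\gamma}\|_\infty\geq\sqrt{N}\|\gamma\|_\infty-\sqrt{N}\|\hat{\gamma}-\gamma\|_\infty$, control the noise term via Theorem~\ref{thm.main-4.6}(ii), bound $\hat{\mathcal{V}}(T)$ through Proposition~\ref{prop.consistency-vt} combined with a deterministic bound on $\mathcal{V}(T)$, and then choose $c$ large enough to force $\hat{T}_N>z_{1-\alpha}$ on the high-probability event. Your explicit flagging of the uniformity-in-$\theta$ issue is in fact more careful than the paper's proof, which simply writes $\sup_{\theta\in\mathcal{A}_{c,N}}$ in the first display without further comment.
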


\subsection{Extension to cross-fitting}

In this section, we improve the efficiency of the LCM-Itô estimator using cross-fitting \cite{chernozhukov2018double}, which then leads to a more powerful test. 

To be specific, we partite the dataset into $K$ disjoint folds $J_1 \cup \cdots \cup J_K = \{1,...,N_0\}$ with $N_k:=|J_k|$. Then, for each $k=1,...,K$, we estimate the models using data indexed by $J_k^c=\{1,...,N_0\}\backslash J_k$, and subsequently estimate the LCM-Itô using data indexed by $J_k$. Heuristically, the obtained $K$ estimators are approximately independent, thus their average should be a more efficient estimator. Following this, we define the $K$-fold Cross-fitted LCM-Itô estimator as:
\begin{equation*}
    \hat{\gamma}_K = \frac{1}{K} \sum_{k=1}^K \frac{1}{N_k} \sum_{j\in J_k} \int_0^t \hat{G}_{j,s}^k d\hat{M}_{j,s}^k,
\end{equation*}
where $\hat{G}_{j,s}^k, \hat{M}_{j,s}^k$ are the model prediction of $G$ and $M$, respectively, based on training data index by $J_k^c$. We also define the $K$-fold version of the variance estimator:
\begin{equation*}
    \hat{\mathcal{V}}_K(t) = \frac{1}{K} \sum_{k=1}^K \|\hat{\Sigma}_\beta^k\|^2 \frac{1}{N_k} \sum_{j\in J_k} \int_0^t (\hat{G}_{j,s}^k)^2 ds,
\end{equation*}
where $\hat{\Sigma}^k_\beta$ is from the estimator $\hat{\Sigma}^k$ of the diffusion matrix using data index by $J_k^c$.

Now, we can define the testing statistic as:
\begin{equation*}
    \hat{T}_K = \frac{\sqrt{N_0}}{\sqrt{\hat{\mathcal{V}}_K(T)}} \sup_{t\in [0,T]} |\hat{\gamma}_K|.
\end{equation*}

We summarize the computation of $\hat{T}_K$ in Alg.~\ref{alg:estimation-crossfit-lcm}.  Assume the partition has a uniform asymptotic density, i.e.,  $N_k/N_0\to \frac{1}{K}$ for each $k=1,2,...,K$, then the asymptotic analysis of $\hat{\gamma}$ in Sec.~\ref{sec.asym-ana-ii} generalized directly to $\hat{\gamma}_K$. For simplicity, we omit to restate all results for the K-fold cross-fitting estimator and focus on the asymptotic level of the test.

\begin{theorem}
    Suppose that Asms.~\ref{asm.Lp-consistency-estimator}, \ref{asm.Sigma-convergence-in-probability} are satisfied for every sample split $J_k \cup J_k^c$. Under\eqref{eq.regularity-lambda-bounded-inL2} and  Asms.~\ref{asm.moment-bound}, \ref{asm.bound-away-from-zero}, we have:
    \begin{equation*}
        \hat{T}_K \overset{\mathcal{D}}{\to} S
    \end{equation*}
    as $N\to\infty, \delta\to 0$. As a consequence, for any $\alpha\in (0,1)$, we have:
    \begin{equation*}
        \limsup_{N\to\infty, \delta\to 0} P(\hat{T}_K>z_{1-\alpha}) \leq \alpha.
    \end{equation*}
    \label{thm.asym-level-Kfold}
\end{theorem}

A similar result regarding the power of the test, as established in Thm.~\ref{thm.asym-power}, can also be demonstrated, although we will not delve into the details here. Notably, the cross-fitting approach recovers full efficiency in the sense that the rate is $\sqrt{N_0}$ instead of $\sqrt{N}$, which leads to a more powerful test. Moreover, the asymptotic distribution of $\hat{T}_K$ does not rely on $K$, and any difference between various choices of $K$ can thus be attributed to finite sample errors. In our experiments, we found that setting $K$ to $3,4$ or $5$ yields consistent results.

\algdef{SE}{Begin}{End}{\textbf{begin}}{\textbf{end}}

\begin{algorithm}[htp]
\caption{$K$-fold cross-fitted local independence test}
\label{alg:estimation-crossfit-lcm}
\begin{algorithmic}[1]
    \State \textbf{inputs:} samples $\{X_j(t)|j=1,...,N_0, t=0,...,n\delta\}$, a partition of indices $J_1\cup \cdots \cup J_K$, the significance level $\alpha$.
    \For{$k=1,...,K$}
        \State Apply Alg.~\ref{alg:estimation-lcm} on the sample split $J_k \cup J_k^c$ to compute $\hat{\gamma}_k(l\delta), l=1,...,n$.
        \State Apply Eq.~\eqref{eq.variance-estimator} on the sample split $J_k \cup J_k^c$ to compute $\hat{\mathcal{V}}_k(T)$.
    \EndFor
    \State Let $\hat{\gamma}_K(i\delta) = \frac{1}{K}\sum_{k=1}^K \hat{\gamma}_k(l\delta), l=1,...,n$ and $\hat{\mathcal{V}}_K(T) = \frac{1}{K}\sum_{k=1}^K \hat{\mathcal{V}}_k(T)$.
    \State Compute the $K$-fold statistic $\hat{T}_K=\sqrt{N_0/\hat{\mathcal{V}}_K(T)}\max_l |\hat{\gamma}_K(l\delta)|$.
    \State Compute the $p$-value $\hat{p}=1-F_S(\hat{T}_K)$ with \eqref{eq.sup_t-W_t-cdf}.
    \State \textbf{output:} the $p$-value $\hat{p}$.
\end{algorithmic}
\end{algorithm}
\section{An optimal filtering estimator}
\label{sec.estimation}

In this section, we estimate the processes $\Pi$ and $\mu$ such that the consistency in Asm.~\ref{asm.Lp-consistency-estimator} is satisfied. Recall that $\Pi_t=\mathbb{E}\{X_\alpha(t)|\mathcal{F}_t\}$ and $\mu=\mathbb{E}\{\lambda_\beta(t)|\mathcal{F}_t\}$ are both projection processes. Hence, their evolutions follow the \emph{optimal filtering equations} \cite[Thm.~8.1, vol. 1, p. 318]{liptser2013statistics}, which are essentially SDEs with drift and diffusion related to $\lambda$ and $\Sigma$. For instance, consider the following diffusion process, which is the most common special case of the Itô process:
\begin{equation*}
    dX_t = \lambda(t,X_t)dt + \Sigma dW_t.
\end{equation*}

Let $h_t:=h(t,X_t)$, with $h$ a measurable function such that $\sup_t \mathbb{E}|h_t|<\infty$, then the process $\pi_t(h)=\mathbb{E}(h_t|\mathcal{F}_t)$ has an evolution characterized by\footnote{Please refer to \eqref{eq.filtering-eq-complete-form} for the complete form.}:
\begin{equation}
    d\pi_t(h) = \pi_t(\mathcal{L}h) dt + \left\{\pi_t(\mathcal{N} h) \Sigma_C + \pi_t(\lambda_{C} h)-\pi_t(\lambda_{C}) \pi_t(h)\right\} \Psi^{-1} \{dX_{C,t}-\pi_t(\lambda_C)dt\},
    \label{eq.filter-eq-general}
\end{equation}
where $\Psi = \Sigma_{C,V\backslash C}\Sigma_{C,V\backslash C}^\top + \Sigma_C \Sigma_C^\top$. Then let $h(t,X_t)=X_\alpha(t)$ and $h(t,X_t)=\lambda_\beta(t,X_t)$, we can get the evolution equations for $\Pi$ and $\mu$, respectively. In this regard, we can first estimate $\lambda$ and $\Sigma$, then use the Euler-Maruyama method to solve these equations with plugins to estimate $\Pi$ and $\mu$. Nonetheless, one may notice that \eqref{eq.filter-eq-general} is not always \emph{closed}, i.e., it may contain terms that cannot be expressed (in general) as functions of $h_t$. Indeed, it is shown that \eqref{eq.filter-eq-general} is closed only in particular cases \cite{bain2009fundamentals}. For cases where it is not closed, however, it is possible to find numerically tractable approximations to the solution \cite{crisan2002survey,budhiraja2007survey}.

In this paper, we consider an important model where \eqref{eq.filter-eq-general} is closed, i.e., the Ornstein-Uhlenbeck (OU) process that is defined by
\begin{equation}
        dX_t = \Phi X_t dt + \Sigma dW_t \quad \text{with initial value $X_0$},
        \label{eq.multivariate-ou}
    \end{equation}
where $\Phi$ is a $d\times d$ matrix and $\Sigma$ is the matrix of the diffusion coefficients. In particular, we consider isometric noise $\Sigma=\sigma I_d$ as in \cite{bellot2022neural}. We assume the initial value $X_0$ is sampled from the following linear structural equation model:
\begin{equation}
    X_0 = \Phi X_0 + e \quad e\sim \mathcal{N}(0,\sigma^2 I_d),
    \label{eq.initial-value-linear-model}
\end{equation}
which is compatible with the OU process \cite{mogensen2022graphical}. To ensure the identifiability of $X_0$, we assume the matrix $(I-\Phi)$ is positive definite.

The OU process is also known as the (multi-factor) Vasicek model in economics \cite{vasicek1977equilibrium,langetieg1980multivariate}. It also has important applications in physics, chemistry \cite{van1992stochastic}, and neuroscience \cite{wilmer2010method,gilson2019network}. For the OU model, we can first estimate the parameters, then estimate $\Pi$ and $\mu$ by plugging in the filtering equation. We will introduce the details in the subsequent two sections.

\subsection{OU model estimation}

In this section, we estimate the parameters $\Phi$ and $\Sigma$ of the OU process, with data indexed by $J^c$. For technical reasons, we let $\delta_c$, which can be larger than $\delta$, be the sample interval of our data. In practice, this can be achieved by subsampling the original series.

For estimation, \cite{phillips1972structural} showed that the discrete-time observation follows the Vector Auto-Regression (VAR) process:
\begin{equation}
    X_t = F X_{t-1} + \varepsilon_t, \quad t=0,\delta_c,...
    \label{eq.var(1)}
\end{equation}
where $F=e^{\Phi\delta_c}$ defined by matrix exponential and $\{\varepsilon_t\}$ is a martingale difference sequence with covariance matrix $\Omega = \mathbb{E} (\varepsilon_t \varepsilon_t^\top) = \int_0^{\delta_c} e^{\Phi s} \Sigma e^{\Phi^\top s} ds$. \cite{PHILLIPS1973351} showed that $(\Phi,\Sigma)$ is identifiable from $(F,\Omega)$ if and only if the matrix $\Phi$ is identifiable from $F=e^{\Phi\delta_c}$. \cite[Thm. 5]{hansen1983dimensionality} gave a condition where this identifiability holds for sufficiently small $\delta_c$, for which we assume below:

\begin{assumption}
    Denote $Q$ as the matrix composed of eigenvectors of $\Phi$ and denote $\bar{Q}$ as the complex conjugate of $Q$. We assume the matrix $R_0=Q^{-1} \Sigma (\bar{Q}^{-1})^\top$ does not have any zero element.
    \label{asm.identifiable-ou}
\end{assumption}

For the process \eqref{eq.var(1)}, it can be shown that:
\begin{equation*}
    F = \Gamma(-1) \Gamma(0)^{-1},
\end{equation*}
where $\Gamma(-1):=\mathbb{E} X_{\delta_c} X_0^\top$ and $\Gamma(0):= EX_0 X_0^\top$ are auto-covariance matrices. Hence, we can estimate $F$ by:
\begin{equation}
    \hat{F} = C_N(-1)C_N(0)^{-1},\label{eq.hatF}
\end{equation}
where $C_{N}(-1):=\frac{1}{N_c}\sum_{j\in J^c} X_{\delta_c}(j) X_0(j)^\top$ and $C_{N}(0):=\frac{1}{N_c}\sum_{j\in J^c} X_0(j) X_0(j)^\top$ are sample covariance matrices. 

To obtain an estimate of $\Phi$ from $\hat{F}$, we can consider the Taylor approximation:
\begin{equation*}
    F = e^{\Phi \delta_c} = \sum_{i=0}^\infty \frac{(\Phi\delta_c)^i}{i!} = I + \Phi\delta_c + \frac{\Phi\delta_c}{2}(F-I) + \eta, \quad \text{where} \,\,\, \eta=\sum_{i=3}^\infty \frac{(i-2)(\Phi\delta_c)^i}{2 i!}
\end{equation*}
Consequently, we have:
\begin{equation*}
    \Phi = \frac{2}{\delta_c}(F-I)(F+I)^{-1} - \frac{2}{\delta_c} \eta (F+I)^{-1}.
\end{equation*}

After neglecting terms smaller than $O(\delta_c^2)$, we obtain an estimator:
\begin{equation*}
    \hat{\Phi} = \frac{2}{\delta_c}(\hat{F}-I)(\hat{F}+I)^{-1}.
\end{equation*}

In practice, to achieve consistency, we introduce two refinements to the estimator. \emph{First}, we truncate the singular values of the matrix $(I+\hat{F})$ into the interval $[1,3]$. This truncation bounds the norms of $\hat{F}$ and $(I+\hat{F})^{-1}$, which is essential for proving the convergence of $\mathbb{E}\|\hat{\Phi}-\Phi\|^p_\mathrm{sp}$  ($p\geq 1$). We justify this step by showing that the singular values of $(I+\hat{F})$ fall into $[1,3]$ almost surely (Lem.~\ref{lem.order-P(E)}). \emph{Second}, we truncate the singular values of $(I-\hat{\Phi})$ into $[\frac{1}{u},u]$, where $u$ is a threshold that goes to infinity with some appropriate rate. This modification controls the norm of $\hat{\Phi}$ and $(I-\hat{\Phi})^{-1}$, ensuring the convergence of the plug-in estimators. We summarize the estimation of $\Phi$ in Alg.~\ref{alg:estimation-ou} (line 5-9).

Furthermore, the covariance matrix $\Omega$ of the VAR(1) process \eqref{eq.var(1)} can be estimated by the sample covariance of the fitting residual:
\begin{equation*}
    \hat{\Omega} = \frac{1}{N_c-d-1} \sum_{j\in J^c} \left\{X_t(j)-\hat{F} X_{t-1}(j)\right\}\left\{X_t(j)-\hat{F} X_{t-1}(j)\right\}^\top.
\end{equation*}

\cite{PHILLIPS1973351} introduced a connection between $\Omega$ and the diffusion matrix $\Sigma$:
\begin{equation*}
    \mathrm{vec}\,\Sigma = \left\{F\otimes F - I\otimes I\right\}^{-1} \left\{\Phi\otimes I + I\otimes \Phi\right\} \mathrm{vec}\,\Omega,
\end{equation*}
where $\otimes$ denotes the Kronecker product. Therefore, we can estimate $\Sigma$ by:
\begin{equation}
    \mathrm{vec}\,\hat{\Sigma} = \left\{\hat{F}\otimes \hat{F} - I\otimes I\right\}^{-1} \left\{\hat{\Phi}\otimes I + I\otimes \hat{\Phi}\right\} \mathrm{vec}\,\hat{\Omega}.
    \label{eq.def-hat-Sigma}
\end{equation}

\begin{algorithm}[htp]
\caption{OU model estimation}
\label{alg:estimation-ou}
\begin{algorithmic}[1]
    \State \textbf{input:} samples $\{X_j(t)|j\in J^c, t=0,\delta_c,...\}$.
    \State \textbf{option:} threshold $u>0$.
    \State Compute $\hat{F}$ with \eqref{eq.hatF} and compute $\hat{\Sigma}$ with \eqref{eq.def-hat-Sigma}. 
    \State Apply the singular value decomposition (SVD): $I+\hat{F}=USV^\top$.
    \State Truncate the elements of $S$ into $[1,3]$. Call the resulting matrix $\bar{S}$.
    \State Let $\bar{F}=U\bar{S}V^\top - I$ and let $\bar{\Phi}= \frac{2}{\delta_c}(\bar{F}-I)(\bar{F}+I)^{-1}$.
    \State Apply the SVD: $I-\bar{\Phi}=U_1 S_1 V_1^\top$.
    \State Truncate the elements of $S_1$ into $[\frac{1}{u},u]$. Call the resulting matrix $\Tilde{S}_1$.
    \State Let $\Tilde{\Phi}=I-U_1 \Tilde{S}_1 V_1^\top$.
    \State\textbf{outputs:} $\Tilde{\Phi},\hat{\Sigma}$.
    \end{algorithmic}
\end{algorithm}

For consistency of the estimation, we first show that:
\begin{proposition}
    Assume Asm.~\ref{asm.identifiable-ou}. Let $p\geq 1$, then there exists a positive constant $K=K(p,\Phi,\Sigma)$ such that:
    \begin{equation*}
        \mathbb{E} \|\Phi-\Tilde{\Phi}\|^p_\mathrm{sp} \leq K u^p \left(\delta_c^{-p} N_c^{-\frac{p}{2}} + \delta_c^{2p} \right)
    \end{equation*}
    for sufficiently large $N_c, u$ and sufficiently small $\delta_c$.
    \label{prop.consistency-ou}
\end{proposition}

Furthermore, we have that:
\begin{proposition}
    Assume Asm.~\ref{asm.identifiable-ou}, then we have $\hat{\Sigma} \overset{\mathcal{P}}{\to} \Sigma$ if $N_c\to \infty, \delta_c\to 0, u\to \infty$ and $\exists p\geq 1$ such that $u^p \{\delta_c^{-p} N_c^{-\frac{p}{2}} + \delta_c^{2p} \} \to 0$.
    \label{prop.consistency-hat-Sigma}
\end{proposition}

\subsection{Estimation of the projection processes}

We then estimate the projection processes $\Pi$ and $\mu$. To this end, we first show an optimal filtering equation related to $\Pi$ and $\mu$.

\begin{proposition}
    \label{prop.form-of-filtering-eq-in-OU}
    The processes $m_t = \mathbb{E}\left\{X_{V\backslash C}(t)|\mathcal{F}_t\right\}, y_t = \frac{1}{\sigma^2}\mathrm{Var}\left\{X_{V\backslash C}(t)|\mathcal{F}_t\right\}$ satisfy the following system of filtering equations:
    \begin{subequations}
    \label{eq.pi-t}
        \begin{align}
            &dm_t = \{A_t X_C(t) + B_t m_t\} dt + C_t d X_C(t),\\
            &\dot{y}_t = \Phi_{V\backslash C} y_t + y_t \Phi_{V\backslash C}^\top + I - y_t \Phi_{C, V\backslash C}^\top \Phi_{C, V\backslash C} y_t,
        \end{align}
    \end{subequations}
    where we denote $\Phi_{V\backslash C}$
 in short of $\Phi_{V\backslash C,V\backslash C}$ and             \begin{equation*}
        A_t := \Phi_{V\backslash C, C} - y_t  \Phi_{C, V\backslash C}^\top  \Phi_{C}, \quad B_t := \Phi_{V\backslash C} - y_t \Phi_{C, V\backslash C}^\top\Phi_{C, V\backslash C}, \quad C_t := y_t \Phi_{C, V\backslash C}^\top.
    \end{equation*}

    Moreover, we have $X_0\sim\mathcal{N}(0,\sigma^2 \Upsilon)$, with $\Upsilon=(I-\Phi)^{-1} (I-\Phi^\top)^{-1}$. Therefore, the initial values of (\ref{eq.pi-t}a)-(\ref{eq.pi-t}b) are (respectively):
    \begin{equation}
        m_0 = \Upsilon_{V\backslash C,C} \Upsilon_C^{-1} X_C(0), \quad y_0 = \Upsilon_{V\backslash C} - \Upsilon_{V\backslash C,C} \Upsilon_C^{-1} \Upsilon_{C,V\backslash C}.
        \label{eq.m0-y0}
    \end{equation}
\end{proposition}

We also have the following result that ensures the solution obtained from forwarding (\ref{eq.pi-t}a)-(\ref{eq.pi-t}b) is exactly the projection progress $m_t$ and the covariance process $y_t$.

\begin{proposition}
    The system of equations (\ref{eq.pi-t}a)-(\ref{eq.pi-t}b), subject to the initial condition on $m_0, \gamma_0$, has a unique, continuous, $\mathcal{F}_t$-adapted solution for any $t\in [0,T]$.
    \label{prop.filtering-eq-uniqueness-solution}
\end{proposition}

For the solution, we first note that (\ref{eq.pi-t}b) is a symmetric Riccati Differential Equation (RDE), which can be solved using the negative exponential method \cite[Sec. 2.3]{choi1990survey}. Specifically, denote 
\begin{equation*}
    M = \begin{bmatrix}
        -\Phi_{V\backslash C}^\top & \Phi_{C, V\backslash C}^\top \Phi_{C, V\backslash C}\\
        I & \Phi_{V\backslash C}
     \end{bmatrix} \quad \text{and} \quad  W^{-1} M W = \begin{bmatrix}
        \Lambda & 0 \\
        0 & -\Lambda
     \end{bmatrix},
\end{equation*}
which are a Hamiltonian matrix and its Jordan decomposition, respectively, with $\Lambda$ being the real Jordan canonical form with right-half-plane eigenvalues. Write the matrix $W$ in the form of a $2\times 2$ block matrix:
\begin{equation*}
    W=\begin{bmatrix}
        W_{11} & W_{12}\\
        W_{21} & W_{22} 
     \end{bmatrix},
\end{equation*}
and let $R := -(W_{22}-y_0 W_{12})^{-1} (W_{21}-y_0 W_{11})$. Then, \cite{choi1990survey} showed that:
\begin{align}
    y_t = (W_{21} + W_{22} e^{-t\Lambda} R e^{-t\Lambda}) (W_{11}+W_{12} e^{-t\Lambda} R e^{-t\Lambda})^{-1} \quad \forall t\in [0,T].\label{eq.compute-solution-rde}
\end{align}

The matrix exponential $e^{-t\Lambda}$ of the real Jordan form $\Lambda$ can be computed with closed forms. Therefore, we can compute the solution of (\ref{eq.pi-t}b) at any $t$ with \eqref{eq.compute-solution-rde}. After solving (\ref{eq.pi-t}b), we can solve (\ref{eq.pi-t}a) with the Euler-Maruyama method. More specifically, let $\{\bar{m}_{k\delta}\}$ ($\{\bar{m}_k\}$ for brevity) be the Euler-Maruyama approximation for the solution of (\ref{eq.pi-t}a) that is defined by the following difference equation:
\begin{equation*}
    \bar{m}_k - \bar{m}_{k-1} = (A_{k-1} X_{C,k-1} + B_{k-1} \bar{m}_{k-1}) \delta + C_{k-1} (X_{C,k} - X_{C,k-1}), \,\, \bar{m}_0 = \Upsilon_{V\backslash C,C} \Upsilon_C^{-1} X_{C,0}
\end{equation*}
where we denote $A_k:=A_{k\delta}, B_k:=B_{k\delta}, C_k:=C_{k\delta}$ and $ X_{C,k}:=X_C(k\delta)$.

Plugging-in the estimator $\Tilde{\Phi}$ obtained in Alg.~\ref{alg:estimation-ou}, we then obtain an estimator for $m_t$, namely $\hat{m}_t=\hat{m}_{k\delta}$ for $k\delta\leq t< (k+1)\delta$, with the sequence $\{\hat{m}_k\}$ defined by:
\begin{equation}
    \hat{m}_k - \hat{m}_{k-1} = (\Tilde{A}_{k-1} X_{C,k-1} + \Tilde{B}_{k-1} \hat{m}_{k-1}) \delta + \Tilde{C}_{k-1} (X_{C,k} - X_{C,k-1}), \,\, \hat{m}_0 = \Tilde{\Upsilon}_{V\backslash C,C} \Tilde{\Upsilon}_C^{-1} X_{C,0},\label{eq.hat-m-t}
\end{equation}
where $\Tilde{A}_t, \Tilde{B}_t, \Tilde{C}_t, \Tilde{\Upsilon}$ are the plugins of $A_t,B_t,C_t,\Upsilon$, respectively, 

After solving $\hat{m}$, we can take the $\alpha$-th dimension of $\hat{m}$ to be $\hat{\Pi}$. To estimate $\mu$, we note: 
\begin{align*}
    \mu_t :=\mathbb{E}\left\{\lambda_\beta(t)|\mathcal{F}_t\right\} &= \mathbb{E}\left\{\Phi_{\beta,V\backslash C} X_{V\backslash C}(t) + \Phi_{\beta,C} X_{C}(t) |\mathcal{F}_t\right\} \\
    &= \Phi_{\beta,V\backslash C} m_t + \Phi_{\beta,C} X_C(t).
\end{align*}

Hence, we can estimate $\mu_t$ by $\hat{\mu}_t = \hat{\mu}_{k\delta}$ for $k\delta\leq t<(k+1)\delta$, with $\hat{\mu}_k$ defined by:
\begin{equation*}
    \hat{\mu}_k = \Tilde{\Phi}_{\beta,V\backslash C} \hat{m}_k + \Tilde{\Phi}_{\beta,C} X_{C,k}.
\end{equation*}

We summarize the estimation procedure in Alg.~\ref{alg:estimation-pi-mu}.

\begin{algorithm}[htp]
\caption{Optimal filtering estimators}
\label{alg:estimation-pi-mu}
\begin{algorithmic}[1]
    \State \textbf{inputs:} training samples $\{X_j(t)|j\in J^c, t=0,\delta_c,...\}$, evaluation samples $\{X_j(t)|j\in J, t=0,\delta,...\}$, cause, effect and the conditioning set $\alpha,\beta \in V, C\subset V$.
    \State Compute $\Tilde{\Phi}$ with the training samples via Alg.~\ref{alg:estimation-ou}.
    \For{$j\in J$}
        \For{$k=1,...,n$}
            \State Compute $\hat{y}_{k\delta}$ by plugging  $\Tilde{\Phi}$ into \eqref{eq.compute-solution-rde}.
            \State Compute $\hat{m}_j(k\delta)$ with \eqref{eq.hat-m-t}.
            \State Let $\hat{\Pi}_{j}(k\delta)$ be the $\alpha$-th dimension of $\hat{m}_j(k\delta)$.
            \State Let $\hat{\mu}_j(k\delta) = \Tilde{\Phi}_{\beta,V\backslash C} \hat{m}_j(k\delta) + \Tilde{\Phi}_{\beta,C} X_{C,j}(k\delta).$
        \EndFor
    \EndFor
    \State \textbf{outputs:} $\{\hat{\Pi}_j(k\delta),\hat{\mu}_j(k\delta) | j\in J, k=0,...,n\}$.
    \end{algorithmic}
\end{algorithm}

For consistency of the estimation, we can show that:

\begin{proposition}
    Assume Asm.~\ref{asm.identifiable-ou}, then for $p>2$, we have:
    \begin{equation*}
        \max \left\{\|\mu-\hat{\mu}\|_p,\, \|G-\hat{G}\|_p, \, N^{\frac{1}{2}}\|\mu-\hat{\mu}\|_2 \|G-\hat{G}\|_2\right\} \to 0
    \end{equation*}
    as $N_c\to\infty, \delta_c=\Theta(N_c^{-\frac{1}{6}}), u = O(\frac{1}{4T} \log\log N_c^{\frac{1}{6}}), \delta=O(N_c^{-\frac{2}{3}})$. In this regard, the requirements in Asm.~\ref{asm.Lp-consistency-estimator} are satisfied.
    \label{prop.consistency-projection-estimator}
\end{proposition}

We sketch the outline of the proof below. The key is to control the error $\|m-\hat{m}\|_p$, with the statements in Prop.~\ref{prop.consistency-projection-estimator} follow. We can decompose the error into:
\begin{equation}
    \|m-\hat{m}\|_p \leq \|m-\Tilde{m}\|_p + \|\Tilde{m}-\hat{m}\|_p,
    \label{eq.decompose-m-hatm}
\end{equation}
with $\Tilde{m}$ being the continuous process satisfying:
\begin{equation*}
    d\Tilde{m}_t = \{\Tilde{A}_t X_C(t) + \Tilde{B}_t \Tilde{m}_t\}dt + \Tilde{C}_t d X_C(t), \quad \tilde{m}_0 = \hat{m}_0.
\end{equation*}

On the right-hand side of \eqref{eq.decompose-m-hatm}, the first term $\|m-\Tilde{m}\|_p$ represents the estimation error of plugging $\Tilde{\Phi}$ in $\Phi$. We can show that $\|m-\Tilde{m}\|_p^p=o\{\exp(e^{2puT})\mathbb{E}\|\Phi-\Tilde{\Phi}\|^p\}$ (Lem.~\ref{lem.m-tildem-consistency}), where the term $\exp(e^{2puT})$ originates from the Lipschitz constant of the mapping $(t,\Phi)\mapsto m(t,\Phi)$ on $[0,T]\times [\frac{1}{u},u]$. Combine with $\mathbb{E} \|\Phi-\Tilde{\Phi}\|^p = O\left\{u^p \left(\delta_c^{-p} N_c^{-\frac{p}{2}} + \delta_c^{2p} \right)\right\}$ shown in Prop.~\ref{prop.consistency-ou}, we can show the order of $\|m-\Tilde{m}\|_p$ and make it vanish by managing $N_c,\delta_c$ and $u$. For the second term $\|\Tilde{m}-\hat{m}\|_p$, which is the approximation error of the Euler-Maruyama method, we can show that $\|\Tilde{m}-\hat{m}\|_p\leq K_{u,T} \delta^{\frac{p}{2}}$, similar to the $\delta^{\frac{p}{2}}$-rate of strong convergence result of the Euler-Maruyama method. Hence, by setting $\delta$ to be sufficiently small, we can control the error $\|\Tilde{m}-\hat{m}\|_p$. 
\section{Experiment}
\label{sec.exp}

\subsection{Local independence test}
\label{sec.exp-CL-test}
In this section, we evaluate our test on synthetic data. For all experiments, we set the significance level $\alpha=0.05$ and the number of folds $K=3$.

We first investigate the type I error rate and recall of the test under different sample sizes and scales. For data generation, we first sample the elements of $\Phi$ from $\mathrm{Bern}(0.3)$ to simulate the presence and absence of causal relationships. We then replace the nonzero entries of $\Phi$ with independent realizations of a $\mathrm{Unif}(0,1)$ r.v. to simulate various causal strengths. To ensure $(I-\Phi)$ is nonsingular, the diagonal of $\Phi$ is set to $2$. To simulate data from $\mathbb{H}_0$ and $\mathbb{H}_1$, $\Phi_{\beta \alpha}$ is set to $0$ and $0.75$, respectively. In addition, we use $\sigma=1,\delta=0.01,T=1$. With these parameters, we use the Euler-Maruyama scheme to sample data from \eqref{eq.multivariate-ou} and \eqref{eq.initial-value-linear-model}. We consider different sample sizes $N_0\in \{50,100,150,200,250\}$ and scales $d\in \{10,15\}$. For each setting, we replicate $5$ random $\Phi$s. For each $\Phi$, we repeat $50$ times to estimate the type I error rate and recall, respectively.

We compare our proposed test with the following procedures: the Granger's test of noncausality \cite{granger1969investigating} and the test based on linear regression (denoted as LinReg), which applies our test with $\hat{m}$ estimated by historical linear regression.

The results are shown in Fig.~\ref{fig:exp-1}. We see that the empirical size of our test are close to the significance level, which means the type I error can be effectively controlled. Moreover, our test achieves high recall, particularly when implemented with cross-fitting. This demonstrates the power of the proposed procedure. In contrast, the Granger's test, which quantifies the deviation of $F_{\beta \alpha}=(e^{\Phi\delta})_{\beta \alpha}$ from zero, suffers from low recall, since $F_{\beta \alpha}\to 0$ as $\delta\to 0$. We also observe that the linear regression test loses control of the type I error, which can be attributed to the violation of the consistency Asm.~\ref{asm.Lp-consistency-estimator}.

\begin{figure}[htp]
    \centering
    \includegraphics[width=0.75\linewidth]{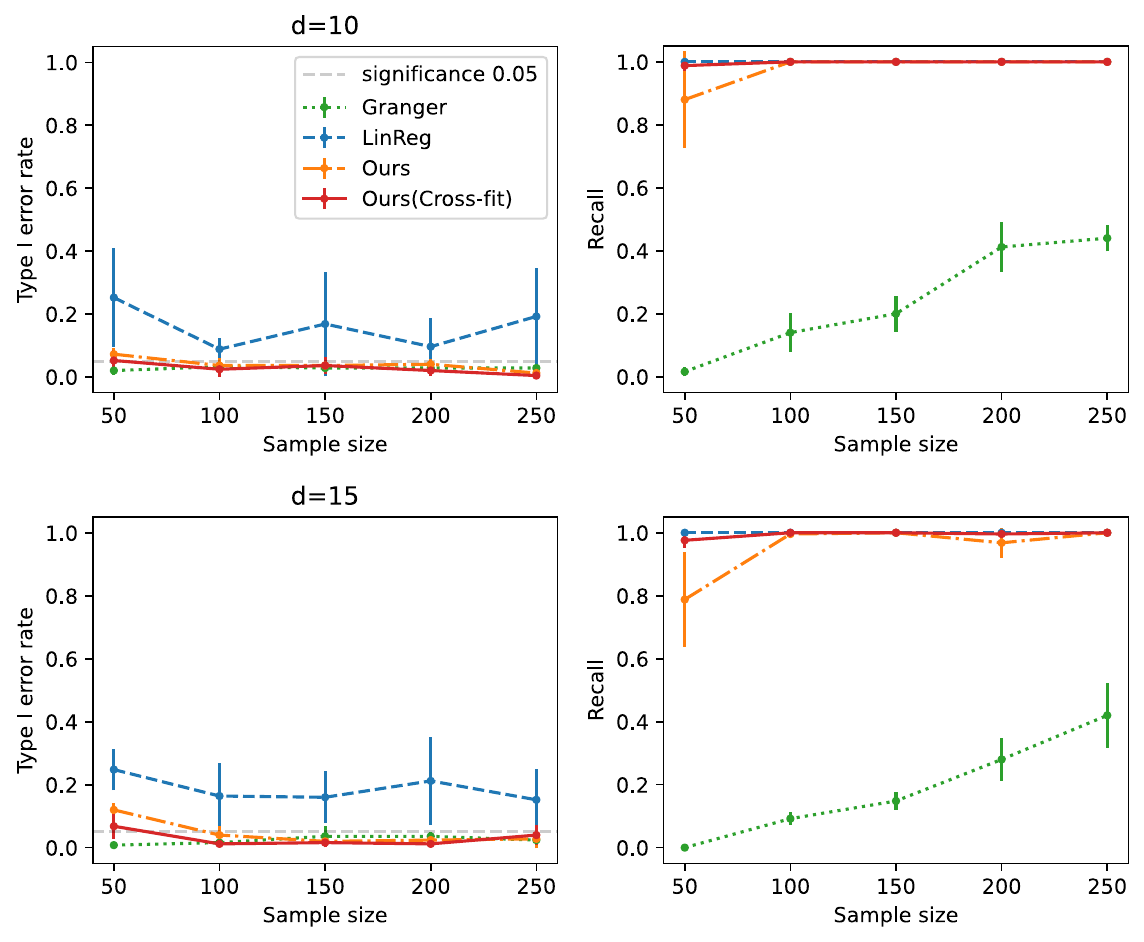}
    \caption{Type I error rate and recall of local independence tests with $d=10, 15$.}
    \label{fig:exp-1}
\end{figure}

\begin{figure}[htp]
    \centering
    \includegraphics[width=0.8\linewidth]{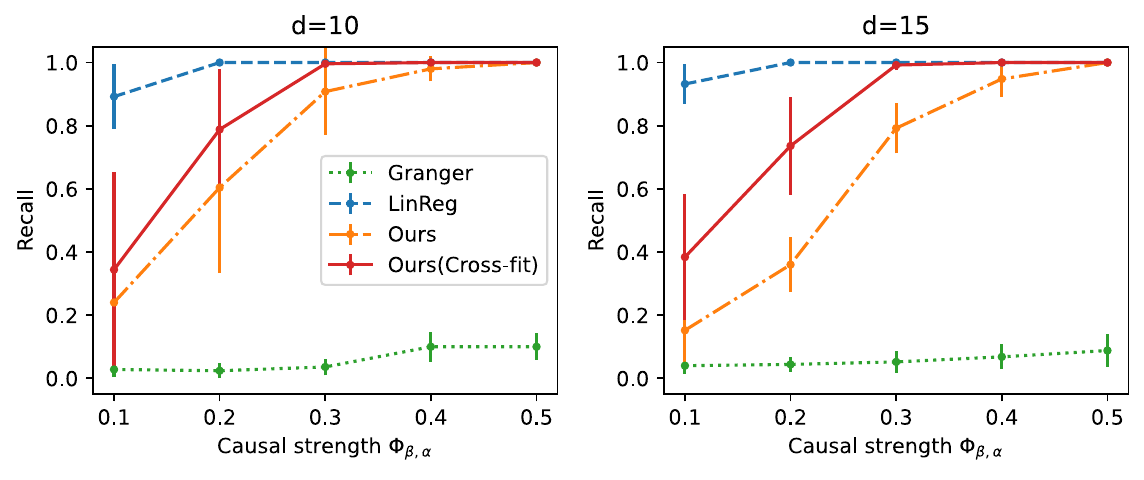}
    \caption{Recall of local independence tests under different causal strengths.}
    \label{fig:exp-2}
\end{figure}

We then study the power of local independence tests under different causal strengths. To this end, we vary the parameter $\Phi_{\beta \alpha}\in \{0.1,0.2,0.3,0.4,0.5\}$ to simulate different strengths, with the sample size $N_0=150$ kept fixed. The results are shown in Fig.~\ref{fig:exp-2}. As expected, the recalls of all procedures increase as $\Phi_{\beta \alpha}$ increases. In particular, we see that the power of our test with cross-fitting increases most rapidly, reaching $1$ when $\Phi_{\beta \alpha}=0.3$. This can be attributed to the fact that the cross-fitting recovers the full root-$N_0$ efficiency, which leads to a more powerful test.

Finally, we verify the necessity of requiring $\delta\to 0$ as $N\to\infty$, as discussed in Prop.~\ref{prop.consistency-projection-estimator}. In details, in Fig.~\ref{fig:exp-5} (left), we report the type I error rate as $N_0$ increases from $\{10^2,10^3,10^4,1.5\times 10^4\}$ and $\delta=0.01$ kept fixed. We observe that the type I error of the test loses control, which can be attributed to the fact that the discretization error $\sqrt{N} \|\Tilde{m}-\hat{m}\|_2 \to\infty$ if $N\to\infty$ and $\delta$ is kept unchanged. Then, in Fig.~\ref{fig:exp-5} (right), we further let $\delta\in \{0.01,0.009,0.008,0.007,0.006\}$ converge to zero. As shown, the type I error rate gradually decreases to the significance level.  These observations demonstrate the validity of our study on the consistency of the filtering estimator.

\begin{figure}[htp]
    \centering
    \includegraphics[width=0.8\linewidth]{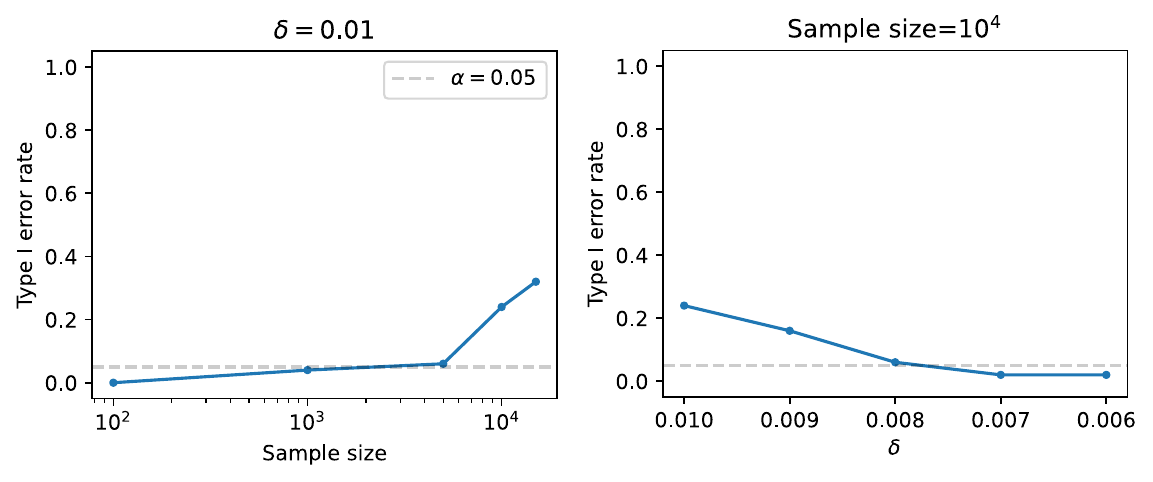}
    \caption{Type I error rate of the proposed test. Left: the sample size $N_0$ goes to infinity while the sampling interval $\delta$ kept fixed. Right: the sampling interval $\delta$ goes to zero.}
    \label{fig:exp-5}
\end{figure}

\subsection{Application to causal discovery}

In this section, we consider a real application of the proposed test in recovering the Local Independence Graph (LIG) \cite{didelez2008graphical}. Consider an Itô process $X$ with a diagonal diffusion matrix, a LIG is a graph $G=(V,E)$, where the nodes $V=\{1,...,d\}$ correspond to the processes $X_1,...,X_d$, and the set of edges $E$ contains $\alpha\to\beta$ if and only if the local independence $\alpha\not\to\beta|V\backslash \alpha$ does not hold. In $G$, two subsets of vertices $A$ and $B$ are $\mu$-separated by a subset of vertices $C$ (denoted as $A\ind_\mu B|C$) if every path between them is blocked by $C$. Please refer to \cite[Def. 4]{mogensen2018causal} for a formal definition. We say $X$ satisfies the \emph{global Markov condition} with respect to $G$ if for all $A,B,C\subseteq V$, $A\ind_\mu B|C$ if and only if $\alpha\not\to\beta |C$ for every $\alpha\in A,\beta \in B$. Under the global Markov condition, the LIG gives a causal representation of a dynamical system \cite{commenges2009general}. In the following, we will recover the LIG from real data by iteratively applying the proposed local independence test. Additional simulation results are relegated to Appx.~\ref{app.experiment}.

We consider the resting-state functional Magnetic Resonance Imaging (rs-fMRI) data from the Human Connectome Project (HCP) \cite{van2013wu}. The subjects included in this dataset underwent $4$ sessions of $15$-minute multi-band accelerated (TR=$0.72$s) rs-fMRI scans on a 3 Tesla scanner with an anisotropic spatial resolution of $2$ mm, for a total of $4800$ volumes \cite{smith2013resting}. Preprocessing details, which include corrections for spatial distortions and head motion, registration to the T1-weighted structural image, resampling to $2$ mm Montreal Neurological Institute (MNI)
space, global intensity normalization, and high-pass filtering, can be found in \cite{glasser2013minimal}. In total, we use $N=1070$ subjects from the \href{https://www.humanconnectome.org/study/hcp-young-adult/document/hcp-young-adult-2025-release}{HCP-Young Adult 2025 Release}. For all subjects, we parcellate the brain into $44$ cortical regions and $19$ subcortical regions, according to the HCP-MMP-1 atlas \cite{glasser2016multi} (please refer to their supplementary material for details).

\begin{figure}[htp]
    \centering
    \includegraphics[width=\linewidth]{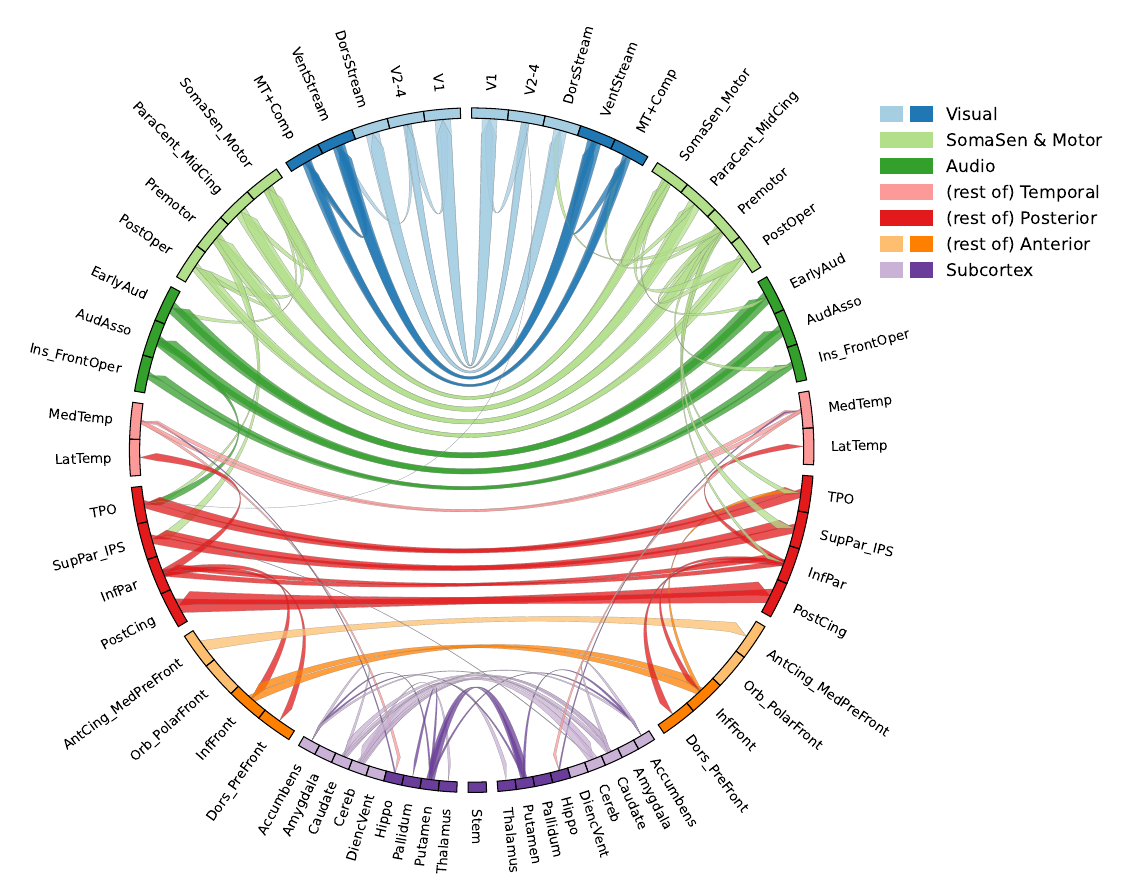}
    \caption{The recovered causal graph from resting-state fMRIs. We have omitted self-loops for clarity. The upper $44$ regions are cortical regions and the bottom $19$ regions (colored in purple) are subcortical regions. The cortical regions are further grouped into $6$ subgroups: Visual, Somatosensory-Motor, Auditory, (rest of) Temporal, (rest of) Posterior, and (rest of) Anterior cortex, according to \cite{glasser2016multi}. The left and right sides represent the left and right hemispheres, respectively. The width of each edge is proportional to the strength of the causal interaction obtained from $\Tilde{\Phi}$.}
    \label{fig:rsfMRI-result}
\end{figure}

The recovered causal structure is shown in Fig.~\ref{fig:rsfMRI-result}. For illustration, we omit all self-loops. We also group the $22$ cortical regions in each hemisphere into $6$ subgroups: Visual, Somatosensory-Motor, Auditory, (rest of) Temporal, (rest of) Posterior, and (rest of) Anterior cortex, according to \cite{glasser2016multi}. To interpret, we see that the graph is symmetric, with dominating links in symmetrically-located regions and regions in the same subgroup. In particular, we observe that the visual cortex and the somatosensory-motor cortex are two subgroups with high connectivity, which can be related to the activation of visual \cite{wang2008spontaneous} and sensory functions \cite{colonnese2012spontaneous,wang2013relationship} in rs-fMRIs. Besides, we note a connection from the inferior partial cortex ($\mathrm{InfPar}$) to the inferior frontal cortex and the dorsolateral prefrontal cortex ($\mathrm{InfFront}, \mathrm{Dors\_PreFront}$), which are known to be related to working memory \cite{baldo2006role,friedman1994coactivation}. The connection between $\mathrm{InfPar}$ and the lateral temporal cortex ($\mathrm{LatTemp}$) can be attributed to the fact that they are both components of the Default Mode Network (DMN) \cite{yusoff2018resting}. Finally, we can observe a clear distinction in causal links among cortical regions and subcortical ones (colored in purple), except for the bidirected edges between hippocampus ($\mathrm{Hippo}$) and the medial temporal lobe ($\mathrm{MedTemp}$), which is also found in \cite[Fig. 4]{arab2025whole}. Nonetheless, we note the graph may present some false negative results, e.g., the edge from the primary visual area ($\mathrm{V1}$) to other visual areas ($\mathrm{V1}$-$\mathrm{V4}$) is missing. This may be due to the low Signal-Noise-Ratio (SNR) of the rs-fMRI data \cite{smith2013resting}.

\begin{table}[b]
    \caption{Causal graph stability across three random dataset splits (S1, S2, S3).}
    \centering
    \begin{tabular}{c c c}
        \toprule
        Splits & Structural Hamming distance & Edge difference\\
        \midrule
        S1 v.s. S2 & 1 & Amygdala(R) $\to$ TPO(R)\\
        \hline
        \multirow{3}{*}{S2 v.s. S3} & \multirow{3}{*}{3} & Amygdala(R) $\to$ TPO(R)\\
        & & InfFront(R)$\to$TPO(R)\\
        & & SomaSen\_Motor(R)$\to$ EarlyAud(R) \\
        \hline
       \multirow{2}{*}{S3 v.s. S1} & \multirow{2}{*}{2} & InfFront(R)$\to$TPO(R) \\
       & & SomaSen\_Motor(R)$\to$ EarlyAud(R)\\
       \bottomrule
    \end{tabular}
    \label{tab:stability-fMRI}
\end{table}

We also analyze the stability of the recovered causal relationships across different dataset splits. In details, we consider three random splits (S1, S2, S3) and show the Structural Hamming Distances (SHDs) between graphs recovered from these splits in Tab.~\ref{tab:stability-fMRI}. As shown, the SHDs are consistently small, showing the graphs differ by only 1–3 edges. This result confirms the stability of our cross-fitting test, with differing edges may be due to the finite sample error.

\section{Conclusion}
\label{sec.conclusion}

In this paper, we propose a conditional local independence test for Itô processes. Our test is based on the semimartingale decomposition, which allows us to establish the martingale property of the stochastic integral process under the null hypothesis. We estimate our test statistic through sample splitting and the optimal filtering equation. We demonstrate the asymptotic level and power of our test. We verify our test on synthetic data and apply it for dynamic causal discovery in resting-state fMRI. In the future, we plan to investigate local independence tests based on observations from a single trajectory of the process. Additionally, we are interested in extending optimal filtering estimation to encompass general diffusion processes.
\bibliographystyle{agsm}
\bibliography{reference}

\newpage
\begin{center}
{\huge\bf Appendix}
\end{center}
\appendix
\section{Proof of Section~\ref{sec.setup}}
\label{appx.proof-sec-setup}
\subsection{Proof of Prop.~\ref{prop.is-martingale}}

\textbf{Proposition \ref{prop.is-martingale}.} \emph{Assume \eqref{eq.regularity-lambda-bounded-inL2}, then $X_\beta$ is an $\mathcal{F}_t$ (resp. $\mathcal{G}_t$)-semimartingale with the decomposition \eqref{eq.F-semimartingale} (resp. \eqref{eq.G-semimartingale}), where $M_t$ (resp. $\mathbf{M}_t$) is a continuous, square-integrable $\mathcal{F}_t$ (resp. $\mathcal{G}_t$)-martingale.}

\begin{proof}
    We show $X_\beta$ is an $\mathcal{F}_t$-semimartingale with the decomposition \eqref{eq.F-semimartingale}. The proof for the $\mathcal{G}_t$-semimartingale can be similarly derived.

    We first show that the process $t\mapsto \int_0^t \mu_s ds$ is a cadlag adapted process with bounded variation. The process is $\mathcal{F}_t$-adapted and continuous. To show it has a bounded variation, note that:
    \begin{equation*}
        V_0^T\left(t\mapsto \int_0^t \mu_s ds\right) = \int_0^T |\mu_s| ds,
    \end{equation*}
    where $V_a^b(f)$ denotes the total variation of $f$ on the interval $[a,b]$. Under \eqref{eq.regularity-lambda-bounded-inL2}, we have $\mathbb{E} \int_0^T |\mu_s| ds<\infty$ (Lem.~\ref{prop.tech-lem-L2-boundedness-under-(4)}), which means the process has bounded variation almost surely.

    Next, we show that the process $M_t$ is a continuous, square-integrable $\mathcal{F}_t$-martingale. First, by the fact that both $X_\beta(t)$ and $\mu_t$ are $\mathcal{F}_t$-adapted ($\beta\in C$), $M_t$ is $\mathcal{F}_t$-adapted. Furthermore, we have $M_t$ is square-integrable under \eqref{eq.regularity-lambda-bounded-inL2} (Lem.~\ref{prop.tech-lem-L2-boundedness-under-(4)}).
    
    In the following, we show $\mathbb{E}(M_t|\mathcal{F}_u) = M_u$ for $0\leq u\leq t$. Denote
    \begin{equation*}
        \Lambda_t = \int_0^t \lambda_\beta(s)ds, \quad A_t = \int_0^t \lambda_\beta(s) - \mathbb{E}\{\lambda_\beta(s)|\mathcal{F}_s\} ds, \quad B_t = \sum_{j=1}^d \Sigma_{\beta j} W_j(t),
    \end{equation*}
    so that $X_\beta(t)=X_\beta(0) + \Lambda_t + B_t$ and $M_t = A_t + B_t$.
    
    With these notations, we first have:
    \small
    \begin{align*}
    \allowdisplaybreaks
        \mathbb{E}(A_t|\mathcal{F}_u) &= \mathbb{E}\left[\int_0^t \lambda_\beta(s) - \mathbb{E}\{\lambda_\beta(s)|\mathcal{F}_s\} ds \Big| \mathcal{F}_u \right] \\
        &= \int_0^t \mathbb{E}\{\lambda_\beta(s)|\mathcal{F}_u\} - \mathbb{E}\left[ \mathbb{E}\{\lambda_\beta(s)|\mathcal{F}_s\}|\mathcal{F}_u \right] ds \\
        &= \int_0^u \mathbb{E}\{\lambda_\beta(s)|\mathcal{F}_u\} - \mathbb{E}\left[ \mathbb{E}\{\lambda_\beta(s)|\mathcal{F}_s\}|\mathcal{F}_u \right] ds + \int_u^t \mathbb{E}\{\lambda_\beta(s)|\mathcal{F}_u\} - \mathbb{E}\left[ \mathbb{E}\{\lambda_\beta(s)|\mathcal{F}_s\}|\mathcal{F}_u \right] ds \\
        & = \int_0^u \mathbb{E}\{\lambda_\beta(s)|\mathcal{F}_u\} - \mathbb{E}\{\lambda_\beta(s)|\mathcal{F}_s\} ds + \int_u^t \mathbb{E}\{\lambda_\beta(s)|\mathcal{F}_u\} - \mathbb{E}\{\lambda_\beta(s)|\mathcal{F}_u\}ds \\
        & = \int_0^u \mathbb{E}\{\lambda_\beta(s)|\mathcal{F}_u\} - \mathbb{E}\{\lambda_\beta(s)|\mathcal{F}_s\} ds, 
    \end{align*}
    \normalsize
    where in the second equation, we use the conditional Fubini's theorem as both $\lambda_\beta(t)$ and $\mathbb{E}\{\lambda_\beta(t)|\mathcal{F}_t\}$ are progressively measurable (Lem.~\ref{lem.fubini-progressive-measurable}). As a result, we have:
    \begin{align*}
        \mathbb{E}(A_t|\mathcal{F}_u) - A_u &= \int_0^u \mathbb{E}\{\lambda_\beta(s)|\mathcal{F}_u\} - \mathbb{E}\{\lambda_\beta(s)|\mathcal{F}_s\} ds  - \int_0^u \lambda_\beta(s) - \mathbb{E}\{\lambda_\beta(s)|\mathcal{F}_s\} ds \\
        &= \int_0^u \mathbb{E}(\lambda_\beta(s)|\mathcal{F}_u) - \lambda_\beta(s) ds = \mathbb{E}(\Lambda_u|\mathcal{F}_u) - \Lambda_u.
    \end{align*}
    Further, observe that:
    \begin{equation*}
        \mathbb{E}\{X_\beta(u)|\mathcal{F}_u\} = \mathbb{E}\{X_\beta(0)+\Lambda_u+B_u|\mathcal{F}_u\} = X_\beta(0) + \mathbb{E}(\Lambda_u|\mathcal{F}_u) + \mathbb{E}(B_u|\mathcal{F}_u).
    \end{equation*}
    On the other hand, since $X_\beta(u)$ is $\mathcal{F}_u$-adapted, we have:
    \begin{equation*}
        \mathbb{E}\{X_\beta(u)|\mathcal{F}_u\} = X_\beta(u)= X_\beta(0) + \Lambda_u + B_u,
    \end{equation*}
    which gives $\mathbb{E}(\Lambda_u|\mathcal{F}_u) - \Lambda_u = B_u - \mathbb{E}(B_u|\mathcal{F}_u)$.  Now, for $0\leq u\leq t$,
    \begin{align*}
        \mathbb{E}(M_t|\mathcal{F}_u) & = \mathbb{E}(A_t+B_t|\mathcal{F}_u) \\
        & = \mathbb{E}(A_t|\mathcal{F}_u) + \mathbb{E}(B_t|\mathcal{F}_u) \\
    & = A_u + \mathbb{E}(\Lambda_u|\mathcal{F}_u) - \Lambda_u + \mathbb{E}(B_t|\mathcal{F}_u) \\
    & = A_u + B_u - \mathbb{E}(B_u|\mathcal{F}_u) + \mathbb{E}(B_t|\mathcal{F}_u) \\
    & = M_u + \mathbb{E}(B_t - B_u|\mathcal{F}_u) = M_u,
    \end{align*}
    where the last equation uses the fact that:
    \begin{align*}
        \mathbb{E}(B_t-B_u|\mathcal{F}_u) &= \sum_{j=1}^d \Sigma_{\beta j} \mathbb{E} \{W_j(t)-W_j(u)|\mathcal{F}_u\} = \sum_{j=1}^d \mathbb{E} \{W_j(t)-W_j(u)\} = 0,
    \end{align*}
    as $\{W_j(t)-W_j(u)\}$ is independent with $\mathcal{F}_u^V$, therefore also independent with $\mathcal{F}_u\subset \mathcal{F}_u^V$.

    We now discuss the continuity of $M_t$. Recall that $M_t:=X_\beta(t) - X_\beta(0) -\int_0^t \mu_s ds$. The sample path of the Itô process $X_\beta(t)$ is continuous. To show $t\mapsto \int_0^t \mu_s ds$ is continuous, it is sufficient to show almost all paths of $\mu$ are in $L_1[0,T]$. Note that $\mu$, as the optional projection of $\lambda_\beta$, is a cadlag \cite[Lem.~7.8, vol. 2, p. 320]{rogers2000diffusions}. Therefore, the sample paths of $\mu$ are cadlag functions on $[0,T]$, which means they are bounded on $[0,T]$ \cite[Lem. 1 in Sec. 12, p. 122]{billingsley2013convergence}, thus integrable on $[0,T]$. This shows that $M_t$ is continuous.
\end{proof}

\subsection{Proof of Prop.~\ref{prop.is-zero-under-h0}}

\textbf{Proposition \ref{prop.is-zero-under-h0}.} \emph{Assume \eqref{eq.regularity-lambda-bounded-inL2}, then under $\mathbb{H}_0$, the process $I=(I_t)$ is a $\mathcal{G}_t$-martingale with $I_0=0$. Thus, we have $\gamma_t=0$ for $t\in[0,T]$.}

\begin{proof}
   We first show $I_t = \int_0^t G_s dM_s$ is a $\mathcal{G}_t$-martingale using Thm.~\ref{thm.tech-lem-integral-process-is-martingale}. Recall that we have shown $M_t$ is a continuous, square-integrable $\mathcal{G}_t$-martingale under $\mathbb{H}_0$ (Prop.~\ref{prop.is-martingale}). According to Thm.~\ref{thm.tech-lem-integral-process-is-martingale}, what's left is to show $G_t$ is a $\cG$-predictable process with $\mathbb{E} \int_0^T |G_t|^2 d[M]_t<\infty$, where $[M]$ is the quadratic variation of $M$ (Def.~\ref{def.tech-lem-quadratic-variation}). 
   
   By the continuity of $X_\alpha(t)$ and that $\Pi(t)$ is also predictable \cite[Thm. 19.6, vol. 2, p. 348]{rogers2000diffusions}, we have $G_t:=X_\alpha(t)-\Pi(t)$ is $\mathcal{G}_t$-predictable. Further, according to Lem.~\ref{lem.tech-lem-[M]_t}, we have $[M]_t=\|\Sigma_\beta\|^2_2 t$, where $\Sigma_\beta:=[\Sigma_{\beta 1}, ..., \Sigma_{\beta d}]^\top$. Therefore, under \eqref{eq.regularity-lambda-bounded-inL2},
   \begin{align*}
        \mathbb{E}\int_0^T |G_t|^2 d[M]_t = \Vert \Sigma_\beta \Vert_2^2 \mathbb{E} \int_0^T |G_t|^2 dt<\infty.
    \end{align*}
    according to Lem.~\ref{prop.tech-lem-L2-boundedness-under-(4)}.
    
    We then show $I_0=0$. Recall that a simple $\mathcal{G}_t$-predictable process $H_t$ is of the form:
    \begin{equation*}
        H_t(\omega) = \sum_i Z_i(\omega) 1_{(t_{i-1},t_i]} (t),
    \end{equation*}
    where $0\leq t_0\leq ... \leq t_n$ and $Z_i$ is a random variable measurable to $\mathcal{G}_t$. For such processes, the stochastic integral is defined as \cite[(25.2), vol. 2, p. 45]{rogers2000diffusions}:
    \begin{equation*}
        (H\cdot M)_t := \int_0^t H_s dM_s := \sum_i Z_i \{M(t \wedge t_i) - M(t \wedge t_{i-1})\},
    \end{equation*}
    where $t \wedge t_i := \min(t,t_i)$. So we have $ (H\cdot M)_0=\sum_i Z_i \{M(0 \wedge t_i) - M(0 \wedge t_{i-1}) = 0$.

    For the (non-simple) process $G$, the integral $(G\cdot M)$ is defined as:
    \begin{equation*}
        (G\cdot M) := \lim_n (G_n\cdot M),
    \end{equation*}
    where $\{G_n\}$ is a sequence of simple predictable processes that converge to $G$. Since for each $n$, $(G_n\cdot M)_0=0$, we have $I_0=(G\cdot M)_0=0$. If $I_t$ is a martingale, then $\gamma_t:=\mathbb{E} I_t = \mathbb{E} I_0 = 0$. This has concluded the proof.
\end{proof}

\subsection{Proof of Prop.~\ref{prop.is-covariance}}

\textbf{Proposition \ref{prop.is-covariance}.} \emph{Assume \eqref{eq.regularity-lambda-bounded-inL2}, then for every $0\leq t\leq T$, we have:
\begin{equation*}
    \gamma_t = \int_0^t \mathrm{cov}(G_s,\boldsymbol{\mu}_s-\mu_s)ds.
\end{equation*}
In particular, $\gamma$ is the zero-function if and only $\mathrm{cov}(G_t,\boldsymbol{\mu}_t-\mu_t)=0$ for almost all $t$.}

\begin{proof}
    Recall $d X_\beta(t) = \boldsymbol{\mu}_t dt + d\mathbf{M}_t$. We have shown $\mathbb{E} \int_0^t G_s d\M_s = 0$ (Prop.~\ref{prop.is-zero-under-h0}). Thus,
    \allowdisplaybreaks
    \begin{align*}
        \mathbb{E} I_t = \mathbb{E} \int_0^t G_s dM_s &= \mathbb{E} \int_0^t G_s d\left\{X_\beta(s) - X_\beta(0) - \int_0^s \mu_u du\right\} \\
        & = \mathbb{E} \int_0^t G_s d X_\beta(s) - \mathbb{E}\int_0^t G_s\mu_s ds \\
        & = \mathbb{E} \int_0^t G_s \boldsymbol{\mu}_s ds +  \mathbb{E} \int_0^t G_s d\mathbf{M}_s - \mathbb{E}\int_0^t G_s\mu_s ds \\
        & = \mathbb{E} \int_0^t G_s(\boldsymbol{\mu}_s - \mu_s) ds \\
        & = \int_0^t \mathbb{E}[G_s(\boldsymbol{\mu}_s - \mu_s)] ds = \int_0^t \mathrm{cov}(G_s, \boldsymbol{\mu}_s - \mu_s) ds,
    \end{align*}
    where the last equation is due to $\mathbb{E} G_t \equiv 0$ for the residual process.
\end{proof}
\section{Proof of Section \ref{sec.asym-ana-ii}}
\label{appx.proof-asym-ana-ii}

\subsection{Proof of Prop.~\ref{prop.convergence-u}}
\label{appx.proof-prop.convergence-u}

\noindent \textbf{Proposition~\ref{prop.convergence-u}.} \emph{Assume \eqref{eq.regularity-lambda-bounded-inL2}, then $U \overset{\mathcal{D}}{\to} U_0$ in $C[0,T]$ as $N\to \infty$, where $U_0$ is a mean-zero continuous Gaussian martingale on $[0,T]$ with variance function $\mathcal{V}$.}

\begin{proof}
    Recall
    \begin{equation*}
        U_t = \frac{1}{\sqrt{N}} \sum_j \int_0^t G_{j,s} d\mathbf{M}_{j,s}
    \end{equation*}

    We prove the proposition by applying the martingale central limit theorem (CLT) (Thm.~\ref{thm.martingale-clt}). We first show that $U$ is a mean-zero, square integrable martingale. Specifically, follow the proof of Prop.~\ref{prop.is-zero-under-h0}, we can show that each component of $U_t$ is a mean-zero, square integrable $\Tilde{\mathcal{G}}^c(j)_t$-martingale under \eqref{eq.regularity-lambda-bounded-inL2}. By independence of the observations for each $j$, we can enlarge the filtration for each term, and conclude that they are also mean-zero, square integrable $\Tilde{\mathcal{G}}^c_t$-martingale. Thus, $U$ is also a mean-zero, square integrable $\Tilde{\mathcal{G}}_t$-martingale.

    Second, apply Thm.~\ref{thm.tech-lem-integral-process-is-martingale} and Lem.~\ref{lem.tech-lem-[M]_t}, we have:
    \begin{equation*}
        \Big[\int_0^t G_s d\mathbf{M}_s\Big]_t = \int_0^t G_s^2 d[\mathbf{M}]_s  = \|\Sigma_\beta\|^2 \int_0^t G_s^2 ds.
    \end{equation*}

    Since the components of $U$ are independent, their co-quadratic variations are zero. Therefore, 
    \begin{equation*}
        [U]_t = \frac{1}{N} \|\Sigma_\beta\|^2 \sum_{j\in J} \int_0^t G_{j,s}^2 ds.
    \end{equation*}

    Under \eqref{eq.regularity-lambda-bounded-inL2}, we have that the random variable $\int_0^t G_s^2 ds$ is integrable (Prop.~\ref{prop.tech-lem-L2-boundedness-under-(4)}). Thus, applying the Law of Large Numbers, we have:
    \begin{equation*}
        [U]_t =\frac{1}{N} \|\Sigma_\beta\|^2 \sum_{j\in J} \int_0^t G_{j,s}^2 ds \overset{\mathcal{P}}{\to} \|\Sigma_\beta\|^2 \int_0^t G_{s}^2 ds := \mathcal{V}_t.
    \end{equation*}

    Applying the martingale CLT (Thm.~\ref{thm.martingale-clt}) then concludes the proof.
\end{proof}

\subsection{Proof of Prop.~\ref{prop.R1-3-zero}}
\label{appx.proof-R1-3-zero}

In this section, we prove Prop.~\ref{prop.R1-3-zero}. For clarity, we divide the proof into three lemmas (Lem.~\ref{lem.convergence-R1}, Lem.~\ref{lem.convergence-R2}, and Lem.~\ref{lem.convergence-R3}) for each of the remainder terms $R_1, R_2$ and $R_3$.

In the proof, we denote $\Tilde{\mathcal{G}}_t^c$ as the smallest right continuous and complete filtration generated by the filtrations $\{\mathcal{G}_{j,t}\mid j\in J^c\}$. 

\begin{lemma}
     Assume Asms.~\ref{asm.Lp-consistency-estimator}, \ref{asm.moment-bound}, then $\sup_{t\in [0,T]}|R_{1,t}| \overset{\mathcal{P}}{\to} 0$ as $N\to \infty, \delta\to 0$.
     \label{lem.convergence-R1}
\end{lemma}

\begin{proof}
    Recall 
    \begin{equation*}
        R_{1,t} = \frac{1}{\sqrt{N}} \sum_{j\in J} \int_0^t G_{j,s} (\mu_{j,s}-\hat{\mu}_{j,s}) ds.
    \end{equation*}

    We first show $R_{1,t}\overset{\mathcal{P}}{\to} 0$ for every $t$. Specifically, we note that:
    \small
    \begin{equation*}
        \mathbb{E}\left\{G_s(\mu_s-\hat{\mu}_s) \mid \Tilde{\mathcal{G}}_1^c\right\} =  \mathbb{E}\left[\mathbb{E}\left\{G_s(\mu_s-\hat{\mu}_s) \mid \mathcal{F}_{s} \lor \Tilde{\mathcal{G}}_1^c\right\} \mid \Tilde{\mathcal{G}}_1^c\right] = \mathbb{E}\left\{\mathbb{E}(G_s|\mathcal{F}_{s}) (\mu_s-\hat{\mu}_s) \mid \Tilde{\mathcal{G}}_1^c \right\} = 0.
    \end{equation*}
    \normalsize 
    
    Since the collection of r.v.'s 
    $\big\{\int_0^t G_{j,s}(\mu_{j,s}-\hat{\mu}_{j,s}) ds \mid j\in J\big\}$
    are i.i.d conditionally on $\Tilde{\mathcal{G}}_1^c$, we have $\mathbb{E}(R_{1,t}|\Tilde{\mathcal{G}}_1^c)=0$. Then, for $p,q>1$ such that $\frac{1}{p}+\frac{1}{q}=\frac{1}{2}$, we have:
    \begin{align*}
        \mathrm{Var}(R_{1,t}) =\mathbb{E}\{\mathrm{Var}(R_{1,t}|\Tilde{\mathcal{G}}_1^c)\} &= \mathbb{E}\Big[ \frac{1}{N} \sum_{j\in J} \mathrm{Var}\big\{\int_0^t G_{j,s}(\mu_{j,s}-\hat{\mu}_{j,s}ds) \mid \Tilde{\mathcal{G}}_1^c\big\} \Big] \\
        &= \mathbb{E} \left(\mathbb{E}\Big[\big\{\int_0^t G_s(\mu_s-\hat{\mu}_s)ds\big\}^2 \mid \Tilde{\mathcal{G}}_1^c \Big]\right) &\\
        &= \mathbb{E} \left\{\int_0^t G_s(\mu_s-\hat{\mu}_s)ds\right\}^2 \overset{(1)}{\leq} T \mathbb{E}\int_0^T |G_s(\mu_s-\hat{\mu}_s)|^2 ds \\
        &= T \|G(\mu-\hat{\mu})\|_2^2 \overset{(2)}{\leq} T(\|G\|_p^2 + \|\mu-\hat{\mu}\|_q^2)\\
        &= T \left\{\mathbb{E}\int_0^T |G_s|^{p} ds\right\}^{\frac{2}{p}} \left\{\mathbb{E}\int_0^T |\mu_s-\hat{\mu}_s|^q ds\right\}^{\frac{2}{q}} \\
        &\leq T \Big(T\sup_{t\in[0,T]}\mathbb{E}|G_t|^p\Big)^\frac{2}{p} \|\mu-\hat{\mu}\|_q^2 \overset{(3)}{\to} 0,
    \end{align*}
    where ``(1)'' applies Lem.~\ref{lem.tech-lem-p-norm-of-integral}, ``(2)'' applies the Hölder inequality, and ``(3)'' applies Asm.~\ref{asm.Lp-consistency-estimator} and Asm.~\ref{asm.moment-bound}. Hence, we have $R_{1,t}\overset{\mathcal{P}}{\to} 0$ for every $t$. 
    
    Then, we show $R_1$ is stochastic equicontinuous (Def.~\ref{def.stoch-equicontinuous}), the desired statement then follows from Thm.~\ref{thm.pointwise-convergence+stoch-equicon=uniform-convergence}. Use Coro.~\ref{coro.max-ineq-2.2.4}, we just need to show $\mathbb{E} |R_{1,t}-R_{1,s}|^2 \leq C|t-s|^2$ for some positive  constant $C$. In detail, we have:
        \begin{align*}
        \mathbb{E}\left(|R_{1,t}-R_{2,s}|^2 \mid \Tilde{\mathcal{G}}_1^c\right) &= N^{-1} \mathbb{E}\Big\{\big|\sum_{j\in J} \int_s^t G_{j,s}(\mu_{j,\tau}-\hat{\mu}_{j,\tau}) d\tau\big|^2 \mid \Tilde{\mathcal{G}}_1^c\Big\} \\
        &= N^{-1} \sum_{j\in J} \mathbb{E}\Big\{\big|\int_s^t G_{j,\tau}(\mu_{j,\tau}-\hat{\mu}_{j,\tau}) d\tau\big|^2 \mid \Tilde{\mathcal{G}}_1^c\Big\} \\
        &= \mathbb{E}\Big\{\big|\int_s^t G_\tau(\mu_\tau-\hat{\mu}_\tau) d\tau\big|^2 \mid \Tilde{\mathcal{G}}_1^c\Big\}.
    \end{align*}
    
    Hence, for $p,q>1$ such that $\frac{1}{p}+\frac{1}{q}=\frac{1}{2}$,
    \begin{align*}
        \mathbb{E}|R_{1,t}-R_{2,s}|^2 &=  \mathbb{E} \left|\int_s^t G_\tau(\mu_\tau-\hat{\mu}_\tau) d\tau\right|^2 \overset{(1)}{\leq} (t-s)\int_s^t \mathbb{E} |G_\tau (\mu_\tau-\hat{\mu}_\tau)|^2 d\tau \\
        &\overset{(2)}{\leq} (t-s)\int_s^t (\mathbb{E} |G_\tau|^p)^{\frac{2}{p}} (\mathbb{E}|\mu_\tau-\hat{\mu}_\tau)|^q)^{\frac{2}{q}} d\tau \\
        &\leq (t-s)^2 \big(\sup_\tau \mathbb{E} |G_\tau|^p\big)^{\frac{2}{p}} \big(2^q \sup_\tau \mathbb{E}|\mu_\tau|^q + 2^q \sup_\tau \mathbb{E}|\hat{\mu}_\tau|^q\big)^{\frac{2}{q}} \overset{(3)}{=:} C(t-s)^2,
    \end{align*}
    where ``(1)'' applies Lem.~\ref{lem.tech-lem-p-norm-of-integral}, ``(2)'' applies the Hölder inequality, and ``(3)'' applies Asm.~\ref{asm.moment-bound}. The desired statement then follows from Thm.~\ref{thm.pointwise-convergence+stoch-equicon=uniform-convergence}.
\end{proof}

\begin{lemma}
    Assume Asms.~\ref{asm.Lp-consistency-estimator}, \ref{asm.moment-bound}, then $\sup_{t\in [0,T]}|R_{2,t}| \overset{\mathcal{P}}{\to} 0$ as $N\to \infty, \delta\to 0$.
     \label{lem.convergence-R2}
\end{lemma}

\begin{proof}
    Recall
    \begin{equation*}
        R_{2,t} =  \frac{1}{\sqrt{N}} \sum_{j\in J} \int_0^t (\hat{G}_{j,s}-G_{j,s}) d\mathbf{M}_{j,s}.
    \end{equation*}

    We first show the process $R_{2,t}$ is a martingale conditionally on $\Tilde{\mathcal{G}}_T^{c}$. To be specific, according to Prop.~\ref{prop.is-martingale}, we have $t\mapsto \int_0^t G_s d\mathbf{M}_s$ and $t\mapsto \int_0^t \hat{G}_s d\mathbf{M}_s$ are both square integrable, mean-zero $\mathcal{G}_t$-martingale conditionally on $\Tilde{\mathcal{G}}_T^{c}$. Therefore, each of the terms in $R_{2,t}$ is a $\mathcal{G}_{j,t}^{c}$-martingale conditionally on $\Tilde{\mathcal{G}}_T^{c}$. Since $j\in J$ are i.i.d, we have $R_{2,t}$ is a $\Tilde{\mathcal{G}}_t^{c}$-martingale conditionally on $\Tilde{\mathcal{G}}_T^{c}$.
    
    Hence, the squared process $(R_{2,t})^2$ is a $\Tilde{\mathcal{G}}^c_t$-submartingale conditionally on $\Tilde{\mathcal{G}}_T^{c}$. Apply the Doob submartingale inequality (Lem.~\ref{lem.doob-submartingale-ineq}), we then have:
    \begin{align*}
        P\Big(\sup_{t\in[0,T]} |R_{2,t}|\geq \epsilon \Big) &= P\Big(\sup_{t\in[0,T]} |R_{2,t}|^2\geq \epsilon^2 \Big) \\
        &= \mathbb{E} \left\{ P\Big(\sup_{t\in[0,T]} |R_{2,t}|^2\geq \epsilon^2 \mid \Tilde{\mathcal{G}}_1^{c} \Big) \right\}\leq \frac{\mathbb{E}\{\mathrm{Var} (R_{2,T}|\Tilde{\mathcal{G}}_1^{c})\}}{\epsilon^2}
    \end{align*}
    for any $\epsilon>0$. Since the collection of r.v.'s $\left\{\int_0^1 (G_{j,s} - \hat{G}_{j,s}) d\mathbf{M}_{j,s} \mid j\in J\right\}$ are i.i.d conditionally on $\Tilde{\mathcal{G}}_T^c$, we have:
    \begin{align*}
        \mathrm{Var}(R_{2,T}|\Tilde{\mathcal{G}}_1^{c}) &= \frac{1}{N} \sum_{j\in J} \mathrm{Var}\left\{\int_0^T (G_{j,s} - \hat{G}_{j,s}) d\mathbf{M}_{j,s} \mid \Tilde{\mathcal{G}}_T^{c}) \right\} \\
        &= \mathbb{E}\left\{\int_0^T (G_s-\hat{G}_s)^2 d[\mathbf{M}]_s \mid \Tilde{\mathcal{G}}_T^{c}\right\} =  \mathbb{E}\left\{\int_0^T (G_s-\hat{G}_s)^2 \|\Sigma_\beta\|^2 ds \mid \Tilde{\mathcal{G}}_T^{c}\right\}.
    \end{align*}
    
    Thus,
    \begin{equation*}
        \mathbb{E}\{\mathrm{Var}(R_{2,1}|\Tilde{\mathcal{G}}_1^{c})\} \leq \|\Sigma_\beta\|^2 \, \mathbb{E} \int_0^T (G_s-\hat{G}_s)^2 ds = \|\Sigma_\beta\|^2 \|G-\hat{G}\|_2^2.
    \end{equation*}

    We then conclude that for $q>2$,
    \begin{equation*}
        P\Big(\sup_{t\in[0,T]} |R_{2,t}|\geq \epsilon \Big) \leq \frac{\|\Sigma_\beta\|^2 \|G-\hat{G}\|_2^2}{\epsilon^2} \leq \frac{\|\Sigma_\beta\|^2 \|G-\hat{G}\|_q^2}{\epsilon^2}\to 0
    \end{equation*}
    as $N\to\infty, \delta\to 0$, due to the Hölder inequality and Asm.~\ref{asm.Lp-consistency-estimator}. This concludes the proof.
\end{proof}

\begin{lemma}
     Assume Asms.~\ref{asm.Lp-consistency-estimator}, \ref{asm.moment-bound}, then $\sup_{t\in [0,T]}|R_{3,t}| \overset{\mathcal{P}}{\to} 0$ as $N\to \infty, \delta\to 0$.
     \label{lem.convergence-R3}
\end{lemma}

\begin{proof}
    Recall
    \begin{equation*}
        R_{3,t} =  \frac{1}{\sqrt{N}} \sum_{j\in J} \int_0^t (\hat{G}_{j,s}-G_{j,s}) (\mu_{j,s}-\hat{\mu}_{j,s}) ds.
    \end{equation*}
    
    We have:
    \begin{align*}
        \mathbb{E}\Big(\sup_{t\in[0,T]} |R_{3,t}|\Big) &=  \mathbb{E}\, \sup_{t\in[0,T]} \Big|\frac{1}{\sqrt{N}} \sum_{j\in J} \int_0^t (G_{j,s}-\hat{G}_{j,s})(\mu_{j,s}-\hat{\mu}_{j,s})ds\Big| \\
        &\leq \frac{1}{\sqrt{N}} \sum_{j\in J} \mathbb{E} \Big(\sup_{t\in[0,T]}\int_0^t |G_{j,s}-\hat{G}_{j,s}| |\mu_{j,s}-\hat{\mu}_{j,s}|ds \Big)\\
        &= \sqrt{N} \mathbb{E} \int_0^T |(G_{s}-\hat{G}_{s})(\mu_{s}-\hat{\mu}_{s})|ds \\
        &= \sqrt{N} \|(G-\hat{G})(\mu-\hat{\mu})\|_1 \leq \sqrt{N} \|G-\hat{G}\|_2 \|\mu-\hat{\mu}\|_2 \to 0,
    \end{align*}
    according to the Hölder inequality and Asm.~\ref{asm.Lp-consistency-estimator}.
\end{proof}

\subsection{Proof of Prop.~\ref{prop.asym-D1-D2}}
\label{appx.proof-prop.asym-D1-D2}

In this section, we prove Prop.~\ref{prop.asym-D1-D2}. For clarity. we separate the discussion for $D_1$ and $D_2$ into two lemmas (Lem.~\ref{lem.convergence-D1} and Lem~\ref{lem.convergence-D2}), which together amount to Prop.~\ref{prop.asym-D1-D2}.

\begin{lemma}
    Assume Asm.~\ref{asm.moment-bound}, then the process $D_1 - \sqrt{N} \gamma$ converges in distribution in $C[0,T]$ to a Gaussian process with covariance function $(s,t)\mapsto \mathrm{Cov}(Q_s,Q_t)$ as $N\to\infty,\delta\to 0$. 
    \label{lem.convergence-D1}
\end{lemma}

\begin{proof}
    Recall
    \begin{equation*}
        D_{1,t} = \frac{1}{\sqrt{N}} \sum_{j\in J} \int_0^t G_{j,s} (\boldsymbol{\mu}_{j,s}-\mu_{j,s})ds.
    \end{equation*}

    Let $\Tilde{D}_1:= D_1 - \sqrt{N}\gamma, Q_j(t):=\int_0^t G_{j,s}(\boldsymbol{\mu}_{j,s}-\mu_{j,s}) ds - \gamma_t$, then $\Tilde{D}_1 = \frac{1}{\sqrt{N}} \sum_{j\in J} Q_j$, where $\{Q_j \mid j\in J\}$ are i.i.d. copies of the random process $Q_t := \int_0^t G_{s}(\boldsymbol{\mu}_{s}-\mu_{s}) ds - \gamma_t$. We note that the mean of $Q$ is zero since $(\boldsymbol{\mu}_t-\mu_t) dt = d(M_t -\mathbf{M}_t)$ and $\mathbb{E} \int_0^t G_s d\mathbf{M}_s = 0$.

    Denote $\Gamma$ as a Gaussian process with mean zero and covariance function
    \begin{equation}
        (s,t)\mapsto \mathrm{Cov}(Q_s,Q_t),
        \label{def.Gamma-process}
    \end{equation}
    which is well-defined by the computations shown below. In the following, we will show that $\Tilde{D}_1\overset{\mathcal{D}}{\to}\Gamma$ in $C[0,T]$ by applying Lem.~\ref{lem.tight+identification}. Specifically, we first show the convergence of the finite-dimensional distribution as required in Lem.~\ref{lem.tight+identification}. That is, $\forall\, 0\leq t_1 < \cdots <t_k \leq T$,
    \begin{equation}
        \Tilde{\mathbf{D}} := \left\{\Tilde{D}_1(t_1), ..., \Tilde{D}_1(t_k)\right\} \overset{\mathcal{D}}{\to}\left\{\Gamma(t_1),...,\Gamma(t_k)\right\}
        \label{eq.finite-dim-converge}
    \end{equation}
    as random vectors in $\mathbb{R}^k$. We show this by applying the Central Limit Theorem (CLT) to the sequence of random vectors $\Tilde{\mathbf{D}}^{(n)}$, which are normalized sums of the zero-mean i.i.d. random vectors $\mathbf{Q}:=(Q_{t_1},...,Q_{t_k})$. We observe that for $p,q>1$ with $\frac{1}{p}+\frac{1}{q}=\frac{1}{2}$,
    \begin{align*}
        \mathrm{Var}(Q_t) = \mathrm{Var}(Q_t+\gamma_t) &= \mathrm{Var} \int_0^t G_s(\boldsymbol{\mu}_s-\mu_s) ds \\
        &= \mathbb{E} \Big\{\int_0^t G_s(\boldsymbol{\mu}_s-\mu_s) ds\Big\}^2 \leq 2T \|G\|_p^2 \left(\|\boldsymbol{\mu}\|_q^2 + \|\mu\|_q^2\right) \leq \infty
    \end{align*}
    according to the Hölder inequality and Asm.~\ref{asm.moment-bound}. Hence, the CLT concludes that:
    \begin{equation*}
        \Tilde{\mathbf{D}} \overset{\mathcal{D}}{\to}\mathcal{N}(0,\mathrm{Var}(\mathbf{Q})),
    \end{equation*}
    which, by the definition of the process $\Gamma$, is equivalent to \eqref{eq.finite-dim-converge}.

    In the following, we show that both $\Tilde{D}_1$ and $\Gamma$ are tight, which means the second condition Lem.~\ref{lem.tight+identification} holds. We show this by showing that both $\Tilde{D}_1$ and $\Gamma$ are stochastic equicontinuous. Prop.~\ref{prop.stoch-equi<=>tight} then implies that they are tight. Specifically, for $\Gamma$, we have:
     \begin{align*}
        \mathbb{E}\left(\Gamma_t - \Gamma_s\right)^2 &= \mathbb{E}(Q_t-Q_s)^2 = \mathbb{E} \Big\{\int_s^t G_{\tau}(\boldsymbol{\mu}_{\tau}-\mu_{\tau}) d\tau - \mathbb{E}\int_s^t G_\tau dM_\tau \Big\}^2 \\
        &\overset{(1)}{\leq} 2(t-s)\mathbb{E} \int_s^t |G_{\tau}(\boldsymbol{\mu}_{\tau}-\mu_{\tau})|^2 d\tau + 2\Big(\mathbb{E} \int_s^t \mathcal{G}_\tau dM_\tau d\tau\Big)^2 \\
        &\overset{(2)}{\leq} 2(t-s)\mathbb{E} \int_s^t |G_{\tau}(\boldsymbol{\mu}_{\tau}-\mu_{\tau})|^2 d\tau + 2\mathbb{E} \Big(\int_s^t \mathcal{G}_\tau dM_\tau d\tau\Big)^2 \\
        &\overset{(3)}{=} 2(t-s)\mathbb{E} \int_s^t |G_{\tau}(\boldsymbol{\mu}_{\tau}-\mu_{\tau})|^2 + 2\mathbb{E} \int_s^t |\mathcal{G}_\tau|^2 d\tau\\
        &\overset{(4)}{\leq} (t-s)^2 \big(\sup_\tau \mathbb{E} |G_\tau|^p\big)^{\frac{2}{p}}\Big\{ \big(2^q \sup_\tau \mathbb{E}|\boldsymbol{\mu}_\tau|^q + 2^q \sup_\tau \mathbb{E}|{\mu}_\tau|^q\big)^{\frac{2}{q}} +1 \Big\} \overset{(3)}{=:} C(t-s)^2, \numberthis \label{eq.gammat-gammas}
    \end{align*}
    where ``(1)'' applies Lem.~\ref{lem.tech-lem-p-norm-of-integral}, ``(2)'' applies the Jensen inequality, ``(3)'' applies the Ito's Isometry, and ``(4)'' applies Asm.~\ref{asm.moment-bound}. Therefore, applying Coro.~\ref{coro.max-ineq-2.2.4} shows $\Gamma$ is stochastic equicontinuous. For $\Tilde{D}_1$, we note that:
    \begin{align*}
        \mathbb{E}|\Tilde{D}_1(t)-\Tilde{D}_1(s)|^2 &= \frac{1}{N} \mathbb{E} \Big|\sum_{j\in J} Q_j(t) - Q_j(s)\Big|^2 \\
        &= \frac{1}{N} \sum_{j\in J} \mathbb{E} |Q_j(t) - Q_j(s)|^2 =  \mathbb{E}|Q_t - Q_s|^2 \leq C(t-s)^2.
    \end{align*}
    Hence, $\Tilde{D}_1$ is also stochastic equicontinuous.

    To summarize, we have shown that the two conditions of Lem.~\ref{lem.tight+identification} both hold true. Applying the lemma then concludes the proof.
\end{proof}

Before moving on to $D_2$, we note that Lem.~\ref{lem.convergence-D1} implies $\Tilde{D}_1$ is stochastically bounded.
\begin{lemma}
    Assume Asm.~\ref{asm.moment-bound}, then the process $\Tilde{D}_1:=D_1 - \sqrt{N} \gamma$ is stochastically bounded, that is, for any $\epsilon>0$, there exists $K>0$ such that
    \begin{equation*}
        \limsup_{N\to\infty,\delta\to 0} P(\|\Tilde{D}_1^{(n)}\|_{\infty}>K) <\epsilon.
    \end{equation*}
    \label{lem.Tilde-D>k}
\end{lemma}

\begin{proof}
    We have established in Lem.~\ref{lem.convergence-D1},  under the same conditions, that $\Tilde{D}_1 \overset{\mathcal{D}}{\to} \Gamma$ in $C[0,T]$. Since $\|\cdot\|_\infty$ is a continuous function from $C[0,T]$ to $\mathbb{R}^+$, we have $\|\Tilde{D}_1\|_\infty \overset{\mathcal{D}}{\to} \|\Gamma\|_\infty$ due to the continuous mapping theorem. Hence,
    \begin{equation*}
        \limsup_{N\to\infty,\delta\to 0} P(\|\Tilde{D}_1^{(n)}\|_{\infty}>K) = P(\|\Gamma\|_\infty>K) \leq \frac{\mathbb{E}\|\Gamma\|_\infty}{K}.
    \end{equation*}

    Hence it suffices to argue that $\mathbb{E}\|\Gamma\|_\infty$ is bounded. Under \eqref{eq.gammat-gammas}, the boundedness of $\mathbb{E}\|\Gamma\|_\infty$ has been shown in the proof of \cite[Lem. A.9]{christgau2024nonparametric}. 
\end{proof}

\begin{lemma}
    Assume Asms.~\ref{asm.Lp-consistency-estimator}, \ref{asm.moment-bound}, then $D_2\overset{\mathcal{P}}{\to} 0$ in $C[0,T]$ as $N\to\infty, \delta\to 0$.
    \label{lem.convergence-D2}
\end{lemma}

\begin{proof}
    Recall
    \begin{equation*}
        D_2(t) = \frac{1}{\sqrt{N}} \sum_{j\in J} \int_0^t (G_{j,s}-\hat{G}_{j,s})(\boldsymbol{\mu}_s-\mu_s) ds.
    \end{equation*}

    We first show$D_2(t)\overset{\mathcal{P}}{\to} 0$ for every $t$. Specifically, conditionally on $\mathcal{G}_1^{c}$, the terms in $D_2$ are i.i.d. with the same distribution as the process $\zeta_t := \int_0^t (G_{s}-\hat{G}_{s})(\boldsymbol{\mu}_s-\mu_s) ds$. Furthermore, we observe that:
    \begin{align*}
        \mathbb{E} \left\{(\hat{G}_t - G_t)(\boldsymbol{\mu}_t - \mu_t) \mid \Tilde{\mathcal{G}}_1^c \right\} &= \mathbb{E} \left[\mathbb{E} \left\{(\hat{G}_t - G_t)(\boldsymbol{\mu}_t - \mu_t) \mid \Tilde{\mathcal{G}}_1^c \lor \mathcal{F}_t \right\} \mid \Tilde{\mathcal{G}}_1^c \right] \\
        &= \mathbb{E} \left\{(\hat{G}_t - G_t) \mathbb{E} \left(\boldsymbol{\mu}_t - \mu_t \mid \Tilde{\mathcal{G}}_1^c \lor \mathcal{F}_t \right) \mid \Tilde{\mathcal{G}}_1^c \right\} \\
        &= \mathbb{E} \left[(\hat{G}_t - G_t) \left\{\mathbb{E} \left(\boldsymbol{\mu}_t \mid \Tilde{\mathcal{G}}_1^c \lor \mathcal{F}_t \right)- \mu_t \right\} \mid \Tilde{\mathcal{G}}_1^c \right] \\
        &= \mathbb{E} \left[(\hat{G}_t - G_t) \left\{\mathbb{E} \left(\boldsymbol{\mu}_t \mid \mathcal{F}_t \right)- \mu_t \right\} \mid \Tilde{\mathcal{G}}_1^c \right] = 0.
    \end{align*}

    Hence, we have $E(\zeta_t\mid \Tilde{\mathcal{G}}_1^c) = 0$. It follows that:
    \begin{align*}
        \mathrm{Var}(D_{2,t} \mid \Tilde{\mathcal{G}}_1^c) = \mathbb{E}\Big[\big\{ \int_0^t (G_{s}-\hat{G}_{s})(\boldsymbol{\mu}_s-\mu_s) ds\big\}^2 \mid \Tilde{\mathcal{G}}_1^c \Big].
    \end{align*}

    Furthermore, since $\mathbb{E}(D_{2,t}\mid \Tilde{G}_1^c)=0$, we have:
    \small
    \begin{align*}
        \mathrm{Var}(D_{2,t}) &= \mathbb{E} \left\{\mathrm{Var}(D_{2,t} \mid \Tilde{\mathcal{G}}_1^c) \right\} = \mathbb{E} \left\{ \int_0^t (G_{s}-\hat{G}_{s})(\boldsymbol{\mu}_s-\mu_s) ds\right\}^2 \\
        &\leq T \int_0^t \mathbb{E} (G_s-\hat{G}_s)^2 (\boldsymbol{\mu}_s-\mu_s)^2 ds \leq T \left(\|\boldsymbol{\mu}\|_p^2 + \|\mu\|_p^2\right) \|G-\hat{G}\|_q^2 \leq C \|G-\hat{G}\|_q^2 \to 0
    \end{align*}
    \normalsize
    under Asms.~\ref{asm.Lp-consistency-estimator}, \ref{asm.moment-bound}. Therefore, we have $D_2(t)\overset{\mathcal{P}}{\to} 0$ for every $t$.

    Then, we show $D_2$ is stochastic equicontinuous, which implies $\sup_t D_2(t) \to_\mathcal{P} 0$ due to Thm.~\ref{thm.pointwise-convergence+stoch-equicon=uniform-convergence}. Specifically, we observe that:
    \small
    \begin{align*}
        &\mathbb{E} |D_{2,t}-D_{2,s}|^2 = \mathbb{E} \Big\{ \int_s^t (G_s-\hat{G}_s)(\boldsymbol{\mu}_s-\mu_s) ds\Big\}^2 \leq (t-s) \int_s^t \mathbb{E} (G_s-\hat{G}_s)^2 (\boldsymbol{\mu}_s-\mu_s)^2 ds \\
        &\leq (t-s)^2 \big(2^p\sup_\tau \mathbb{E} |G_\tau|^p+2^p\sup_\tau \mathbb{E} |\hat{G}_\tau|^p\big)^{\frac{2}{p}}\big(2^q \sup_\tau \mathbb{E}|\boldsymbol{\mu}_\tau|^q + 2^q \sup_\tau \mathbb{E}|{\mu}_\tau|^q\big)^{\frac{2}{q}} =: C(t-s)^2
    \end{align*}
    \normalsize
    due to Asm.~\ref{asm.moment-bound}. We then conclude from Coro.~\ref{coro.max-ineq-2.2.4} that $\Tilde{D}_2$ is stochastic equicontinuous. This concludes the proof.
\end{proof}

\subsection{Proof of Thm.~\ref{thm.main-4.6}}
\label{appx.proof-thm.main-4.6}

In this section, we prove Thm.~\ref{thm.main-4.6}. To this end, we first show the Gaussian martingale $U_0$ from Prop.~\ref{prop.convergence-u} is tight (Def.~\ref{def.tightness}) in $C[0,T]$.

\begin{lemma}
    Let $U_0$ be a mean-zero continuous Gaussian martingale on $[0,T]$ with variance function $\mathcal{V}$. Then, under \eqref{eq.regularity-lambda-bounded-inL2}, $U_0$ is tight in $C[0,T]$.
    \label{lem.U0-is-tight}
\end{lemma}

\begin{proof}
   
    According to Prop.~\ref{prop.C.2.LCT2024}, $U_0$ has a distributional representation as a time-transformed Wiener process 
    \begin{equation}
        \left\{U_0(t)\right\}_{t\in [0,T]}\overset{\mathcal{D}}{=}\left\{W_{\mathcal{V}(t)}\right\}_{t\in [0,T]},
        \label{eq.U_0-dist-representation}
    \end{equation}
    where $W=(W_t)$ is a standard Wiener process.

    In the following, we will use \eqref{eq.U_0-dist-representation} and Prop.~\ref{prop.stoch-equi<=>tight} to show $U_0$ is tight. Specifically, we have $P(U_0=0)=1$ trivially holds. Further, for any $\epsilon>0$, 
    \begin{equation*}
        \lim_{\delta\to 0^+} P\Big(\sup_{|t-s|<\delta} \left|U_t-U_s\right|>\epsilon\Big) = \lim_{\delta\to 0^+} P\Big\{\sup_{|t-s|<\delta} \left|W_{\mathcal{V}(t)}-W_{\mathcal{V}(s)}\right|>\epsilon\Big\}
    \end{equation*}
    due to \eqref{eq.U_0-dist-representation}. Since the Wiener process is $\alpha$-Hölder continuous for $\alpha\in \left(0,\frac{1}{2}\right)$:
    \begin{equation*}
        |W_t - W_s| \leq C|t-s|^\alpha, \quad a.s. 
    \end{equation*}
    we further have that:
    \begin{align*}
        \lim_{\delta\to 0^+} P\Big(\sup_{|t-s|<\delta} \left|U_t-U_s\right|>\epsilon\Big) &\leq \lim_{\delta\to 0^+} P\Big\{C \sup_{|t-s|<\delta} \left|\mathcal{V}(t)-\mathcal{V}(s)\right|^\alpha>\epsilon\Big\} \\
        &\leq \lim_{\delta\to 0^+} P\Big\{C L^\alpha \sup_{|t-s|<\delta} \left|t-s\right|^\alpha>\epsilon\Big\} \\
        &\leq \lim_{\delta\to 0^+} P\left\{C L^\alpha \delta^\alpha >\epsilon\right\} = 0,
    \end{align*}
    where we have used $\mathcal{V}(t)$ is $L$-Lipschitz continuous under Asm.~\ref{asm.moment-bound} (ref. to \eqref{eq.V(t)-is-Lipschitz-continuous}). Therefore, the second condition of Prop.~\ref{prop.stoch-equi<=>tight} is satisfied and we conclude that $U_0$ is tight.
\end{proof}

\noindent\textbf{Theorem~\ref{thm.main-4.6}.} \emph{Assume \eqref{eq.regularity-lambda-bounded-inL2} and Asms.~\ref{asm.Lp-consistency-estimator}, \ref{asm.moment-bound}, then 
    \begin{enumerate}[label=(\roman*)]
        \item  under $\mathbb{H}_0$, it holds that
        \begin{equation*}
            \sqrt{N} \hat{\gamma} \overset{\mathcal{D}}{\to} U_0
        \end{equation*}
        in $C[0,T]$ as $N\to\infty, \delta\to 0$, where $U_0$ is a mean zero continuous Gaussian martingale on $[0,T]$ with variance function $\mathcal{V}$.
        \item For every $\epsilon>0$, there exists $K>0$ such that
        \begin{equation*}
            \limsup_{N\to\infty,\delta\to 0} P(\sqrt{N} \sup_{t\in [0,T]}|\hat{\gamma}_t-\gamma_t|> K) < \epsilon.
        \end{equation*}
    \end{enumerate}}

\begin{proof}
    For (i), we first note that under $\mathbb{H}_0$, $\mu=\boldsymbol{\mu}$, which implies $D_1$ and $D_2$ are both zero processes. Combine Prop.~\ref{prop.R1-3-zero} and Prop.~\ref{prop.convergence-u}, use the Slutsky theorem, we see that:
    \begin{equation*}
        \sqrt{N} \gamma = \underbrace{U}_{\overset{\mathcal{D}}{\to}U_0} + \underbrace{R_1 + R_2 + R_3}_{\to_\mathcal{P} 0} + \underbrace{D_1 + D_2}_{=0 \,\text{under} \, \mathbb{H}_0} \overset{\mathcal{D}}{\to}U_0.
    \end{equation*}

    For (ii), we can, in addition to Prop.~\ref{prop.R1-3-zero} and Prop.~\ref{prop.convergence-u}, use Prop.~\ref{prop.asym-D1-D2} and Lem.~\ref{lem.Tilde-D>k}. Specifically, the triangle inequality yields that:
    \begin{align*}
        \sqrt{N} \|\gamma-\hat{\gamma}\|_\infty \leq \|U_0\|_\infty + \|D_1-\sqrt{N} \gamma\|_\infty + \underbrace{\|R_1\|_\infty + \|R_2\|_\infty + \|R_3\|_\infty + \|D_2\|_\infty}_{\text{$\overset{\mathcal{P}}{\to} 0$ due to Props.~\ref{prop.R1-3-zero}, \ref{prop.asym-D1-D2}}}.
    \end{align*}

    Therefore, we have due to Prop.~\ref{prop.convergence-u} and Lem.~\ref{lem.convergence-D1} that:
    \begin{align*}
        \limsup_{N\to\infty,\delta\to 0} P\left(\sqrt{N} \|\gamma-\hat{\gamma}\|_\infty > K\right) \leq P\left(\|U_0\|_\infty >K/2\right) +  \limsup_{N\to\infty,\delta\to 0} P(\|\Tilde{D}_1^{(n)}\|_{\infty}>K/2).
    \end{align*}

    The first and second terms on the right-hand side can be controlled into arbitrarily small due to Lem.~\ref{lem.U0-is-tight} and Lem.~\ref{lem.Tilde-D>k}, respectively. This concludes the proof.
\end{proof}

\subsection{Proof of Prop.~\ref{prop.consistency-vt}}
\label{appx.proof-prop.consistency-vt}

\noindent\textbf{Proposition~\ref{prop.consistency-vt}.} \emph{Assume Asms.~\ref{asm.Lp-consistency-estimator}, \ref{asm.moment-bound}, \ref{asm.Sigma-convergence-in-probability}, then we have $\sup_{t\in [0,T]} |\hat{\mathcal{V}}_t - \mathcal{V}_t| \overset{\mathcal{P}}{\to} 0$ as $N\to\infty, \delta\to 0$.}

\begin{proof}
    Consider the decomposition of $\hat{\mathcal{V}}$ given by $\hat{V}(t) = \|\hat{\Sigma}_\beta\|^2 \left(A_t + B_t + C_t\right)$, where
    \small
    \begin{align*}
        &A_t := \frac{1}{N} \sum_j \int_0^t G_{j,s}^2 ds, \quad B_t := \frac{1}{N} \sum_j \int_0^t (G_{j,s}-\hat{G}_{j,s})^2 ds, \quad C_t := \frac{1}{N} \sum_j \int_0^t G_{j,s} (G_{j,s}-\hat{G}_{j,s}) ds.
    \end{align*}
    \normalsize

    We have that $A_t$ is the empirical mean of $N$ i.i.d. samples of the process $\int_0^t  G_s^2 ds$. Further, $\int_0^t  G_s^2 ds$ is integrable under Asm.~\ref{asm.moment-bound}. Therefore, the Law of Large Numbers yields that $A_t \overset{\mathcal{P}}{\to} \int_0^t  G_s^2 ds$ for every $t\in [0,T]$. By the continuous mapping theorem, we further obtain that $\|\hat{\Sigma}_\beta\|^2 A_t \overset{\mathcal{P}}{\to} \mathcal{V}(t)$ for every $t\in [0,T]$. Note that $\|\hat{\Sigma}_\beta\|^2 A_t$ and $\mathcal{V}(t)$ are both nondecreasing, and the limiting process $\mathcal{V}(t)$ is stochastic equicontinuous:
    \begin{equation}
        \mathcal{V}(t) - \mathcal{V}(s) = \|\Sigma_\beta\|^2 \int_s^t \mathbb{E} G_s^2 ds \leq C(t-s).
        \label{eq.V(t)-is-Lipschitz-continuous}
    \end{equation}
    due to Asm.~\ref{asm.moment-bound} and  Coro.~\ref{coro.max-ineq-2.2.4}. Therefore, apply Lem.~\ref{lem.christlemb13}, we conclude that:
    \begin{equation*}
        \sup_{t\in [0,T]} \left| \|\hat{\Sigma}_\beta\|^2 A_t - \mathcal{V}(t)\right| \overset{\mathcal{P}}{\to} 0.
    \end{equation*}

    For $B_t$, we have under the Asm.~\ref{asm.Lp-consistency-estimator} that:
    \begin{align*}
        \mathbb{E}\big(\sup_{0\leq t \leq T} B_t\big) = \mathbb{E} B_T &= \mathbb{E} \Big\{\frac{1}{N} \sum_{j\in J} \int_0^T \left(G_{j,s}-\hat{G}_{j,s}\right)^2 ds\Big\} \\
        &= \mathbb{E} \left[\mathbb{E} \Big\{\frac{1}{N} \sum_{j\in J} \int_0^T \big(G_{j,s}-\hat{G}_{j,s}\big)^2 ds \mid \Tilde{\mathcal{G}}_T^c \Big\}\right]\\
        &= \mathbb{E} \int_0^t (G_{s}-\hat{G}_{s})^2 ds = \|G-\hat{G}\|_2\to 0.
    \end{align*}

    Therefore, we have $\sup_{0\leq t \leq T} B_t \to 0$ in probability. Applying the continuous mapping theorem, it follows that $\|\hat{\Sigma}_\beta\|^2 \sup_{0\leq t \leq T} B_t \overset{\mathcal{P}}{\to} \|\Sigma_\beta\|^2 \cdot 0 = 0$.

    Finally, we see that:
    \begin{align*}
        \mathbb{E} \Big|\sup_{t\in [0,T]} C_t\Big| \leq \mathbb{E} \sup_{t\in [0,T]} |C_t| &\leq \mathbb{E}\Big\{\frac{1}{N}\sum_{j\in J} \sup_{t\in [0,T]} \int_0^t |G_{j,s} (G_{j,s} - \hat{G}_{j,s})| ds\Big\} \\
        &= \mathbb{E}\Big\{\frac{1}{N}\sum_{j\in J} \int_0^T |G_{j,s}(G_{j,s} - \hat{G}_{j,s})| ds\Big\} \\
        &= \mathbb{E} \int_0^T |G_{s}(G_{s} - \hat{G}_{s})| ds \\
        &\leq \Big(\mathbb{E}\int_0^T |G_{s} - \hat{G}_{s}|^2 ds \Big)^\frac{1}{2} \Big(\mathbb{E}\int_0^T |G_{s}|^2 ds \Big)^\frac{1}{2} \to 0,
    \end{align*}
    under Asms.~\ref{asm.Lp-consistency-estimator}, \ref{asm.moment-bound}. Therefore, we have $\sup_t C_t \to 0$ in probability. Applying again the continuous mapping theorem, we have $\|\hat{\Sigma}_\beta\|^2 \sup_t C_t \overset{\mathcal{P}}{\to} 0$.

    Combining the above, we show by Slutsky's theorem that:
    \begin{equation*}
         \sup_{t\in [0,T]} |\hat{\mathcal{V}}(t) - \mathcal{V}(t)| \overset{\mathcal{P}}{\to} 0.
     \end{equation*}
\end{proof}

\section{Proof of Section~\ref{sec.lct}}
\label{appx.proof-thm.asym}

The asymptotics of any testing statistic under the null hypothesis $\mathbb{H}_0$ can be described by the following result, which is essentially an application of the continuous mapping theorem.

\begin{proposition}
    Let $\mathcal{J}:C[0,T]\times C[0,T]\mapsto \mathbb{R}$ be a functional that is continuous on the closed subset $C[0,T]\times \{\mathcal{V}(t): t\in [0,T]\}$ with respect to the uniform topology, that is, the topology generated by the norm $\|(f_1,f_2)\|:=\max\{\|f_1\|_\infty, \|f_2\|_\infty\}$ for $f_1, f_2 \in C[0,T]$. Define the testing statistic
    \begin{equation*}
        \hat{T} := \mathcal{J}(\sqrt{N}\hat{\gamma}, \hat{\mathcal{V}}).
    \end{equation*}

    Assume \eqref{eq.regularity-lambda-bounded-inL2} and Asms.~\ref{asm.Lp-consistency-estimator}, \ref{asm.moment-bound}, \ref{asm.Sigma-convergence-in-probability}, then we have:
    $$\hat{T} \overset{\mathcal{D}}{\to} \mathcal{J}(U_0,\mathcal{V})$$
    as $N\to \infty, \delta\to 0$, where $U_0$ is a mean zero continuous Gaussian martingale on $[0,T]$ with variance function $\mathcal{V}$.
    \label{prop.continuous-mapping}
\end{proposition}

\begin{proof}
    The result is a direct consequence of the continuous mapping theorem.
\end{proof}

\noindent\textbf{Theorem~\ref{thm.asym-level}.} \emph{Under \eqref{eq.regularity-lambda-bounded-inL2} and Asms.~\ref{asm.Lp-consistency-estimator}, \ref{asm.moment-bound}, \ref{asm.Sigma-convergence-in-probability}, and \ref{asm.bound-away-from-zero}, we have:
    \begin{equation*}
        \hat{T}_N \overset{\mathcal{D}}{\to} S
    \end{equation*}
    as $N\to\infty, \delta\to 0$. As a consequence, for any $\alpha\in (0,1)$, we have:
    \begin{equation*}
        \limsup_{N\to\infty, \delta\to 0} P(\hat{T}_N>z_{1-\alpha}) \leq \alpha.
    \end{equation*}
    In other words, our test has an asymptotic level $\alpha$.}

\begin{proof}
    We will apply Prop.~\ref{prop.continuous-mapping} with the function $\mathcal{J}$ given by
    \begin{equation*}
        \mathcal{J}(f_1,f_2) := 1(f_2\neq 0) \frac{\|f_1\|_\infty}{\sqrt{|f_2(T)|}}, \quad f_1,f_2\in C[0,T].
    \end{equation*}

    Under Asm.~\ref{asm.bound-away-from-zero}, it suffices to check the continuity of $\mathcal{J}$ on the set $\mathcal{A}$ given by
    \begin{equation*}
        \mathcal{A} := C[0,T] \times \left\{f\in C[0,T]\mid |f(T)|\geq \nu_0\right\},
    \end{equation*}
    where $\nu_0$ is the positive constant assumed in Asm.~\ref{asm.bound-away-from-zero}. To see that $\mathcal{J}$ is continuous on $\mathcal{A}$ with the uniform topology, note that it is the composition of two continuous maps, namely $(f_1,f_2)\mapsto (\|f_1\|_\infty, |f_2(T)|)$ that maps $\mathcal{A}$ to $\mathbb{R}^+\times [\nu_0,\infty)$; and $(x_1,x_2)\mapsto \frac{x_1}{\sqrt{x_2}}$ that maps $\mathbb{R}^+\times [\nu_0,\infty)$ to $\mathbb{R}$. Therefore, Prop.~\ref{prop.continuous-mapping} implies that:
    \begin{equation*}
        \hat{T} = \mathcal{J}(\sqrt{N}\hat{\gamma},\hat{\mathcal{V}}) \overset{\mathcal{D}}{\to} \mathcal{J}(U_0,\mathcal{V}) = \frac{\|U_0\|_\infty}{\mathcal{V}(T)} =: S.
    \end{equation*}

    Apply Prop.~\ref{prop.C.2.LCT2024}, we have:
    \begin{equation}
        S:=\frac{\|U_0\|_\infty}{\mathcal{V}(T)} \overset{\mathcal{D}}{=} \frac{\sup_{t\in [0,T]} \left|W_{\mathcal{V}(t)}\right|}{\mathcal{V}(T)}  \overset{\mathcal{D}}{=} \frac{\sup_{s\in [0,\mathcal{V}(T)]} \left|W_s\right|}{\mathcal{V}(T)} \overset{\mathcal{D}}{=} \sup_{t\in [0,T]} \left|W_t\right|,
        \label{eq.definition-S}
    \end{equation}
    where we have used that $\mathcal{V}$ is continuous under Asm.~\ref{asm.moment-bound} (ref. to \eqref{eq.V(t)-is-Lipschitz-continuous}), and that the Brownian motion is scale invariant. This has concluded the proof.
\end{proof}

\noindent\textbf{Theorem~\ref{thm.asym-power}.} \emph{Assume \eqref{eq.regularity-lambda-bounded-inL2} and Asms.~\ref{asm.Lp-consistency-estimator}, \ref{asm.moment-bound}, \ref{asm.Sigma-convergence-in-probability}. Then, for any $0<\alpha<\beta<1$, there exists $c>0$ such that:
    \begin{equation*}
        \liminf_{N\to\infty,\delta\to 0} \inf_{\gamma \in\mathcal{A}_{c,N}} P(\hat{T}_N>z_{1-\alpha})\geq \beta,
    \end{equation*}
    where $\mathcal{A}_{c,N} := \left\{ \sup_{t\in [0,T]} |\gamma_t | \geq c N^{-\frac{1}{2}} \right\}$. In other words, the type II error rate of the test can be controlled to arbitrarily small.}

\begin{proof}
    Let $0<\alpha<\beta<1$ be given. The second part of Thm.~\ref{thm.main-4.6} allows us to choose $K>0$ sufficiently large such that:
    \begin{equation*}
        \limsup_{N\to\infty, \delta\to 0} \sup_{\theta\in\mathcal{A}_{c,N}} P(\sqrt{N} \|\hat{\gamma}-\gamma\|_\infty > K) < 1-\beta.
    \end{equation*}

    Recall that $\mathcal{V}(t)$ is continuous under Asm.~\ref{asm.moment-bound} (ref. to \eqref{eq.V(t)-is-Lipschitz-continuous}), thus we have $\mathcal{V}(T)\leq C$ is bounded. We then choose $c>K+z_{1-\alpha} \sqrt{1+C^2}$ such that:
    \begin{equation*}
        \sqrt{N} \|\gamma\|_\infty - z_{1-\alpha} \sqrt{1+\mathcal{V}(T)} \geq c - z_{1-\alpha} \sqrt{1+C^2} > K.
    \end{equation*}

    The reverse triangle inequality now yields that:
    \begin{align*}
        \left\{\hat{T}\leq z_{1-\alpha}\right\} &= \left\{\|\hat{\gamma}\|_\infty \leq \sqrt{\hat{\mathcal{V}}(T)} \frac{z_{1-\alpha}}{\sqrt{N}}\right\} \\
        &\subseteq \left\{\|\gamma\|_\infty - \|\hat{\gamma}-\gamma\|_\infty \leq \sqrt{\hat{\mathcal{V}}(T)}  \frac{z_{1-\alpha}}{\sqrt{N}}\right\}.
    \end{align*}

    Therefore, we obtain that:
    \begin{align*}
         &\sqrt{N} \|\gamma\|_\infty \geq K + z_{1-\alpha} \sqrt{1+\mathcal{V}(T)} \\
         &\sqrt{N} \left(\|\gamma\|_\infty - \|\hat{\gamma}-\gamma\|_\infty\right) \leq z_{1-\alpha} \sqrt{\hat{\mathcal{V}}(T)}.
    \end{align*}

    Combining these two inequalities, we obtain that:
    \begin{equation*}
        \sqrt{N} \left(\|\gamma\|_\infty - \|\hat{\gamma}-\gamma\|_\infty\right) \geq K + z_{1-\alpha}\Big(\sqrt{1+\mathcal{V}(T)} - \sqrt{\hat{\mathcal{V}}(T)}\Big).
    \end{equation*}

    Then, denote the two events:
    \begin{align*}
        &\mathcal{E}_1 := \left\{\sqrt{N} \left(\|\gamma\|_\infty - \|\hat{\gamma}-\gamma\|_\infty\right) > K\right\} \\
        &\mathcal{E}_2 := \left\{\hat{\mathcal{V}}(T) > 1+\mathcal{V}(T)\right\} \subseteq \left\{\left|\hat{\mathcal{V}}(T) - \mathcal{V}(T) \right|>1\right\},
    \end{align*}
    we then have:
    \begin{equation*}
        \left\{\hat{T}\leq z_{1-\alpha}\right\} \subseteq \left\{\|\gamma\|_\infty - \|\hat{\gamma}-\gamma\|_\infty \leq \sqrt{\hat{\mathcal{V}}(T)} \frac{z_{1-\alpha}}{\sqrt{N}}\right\} \subseteq \mathcal{E}_1 \cup \mathcal{E}_2.
    \end{equation*}

    From Prop.~\ref{prop.consistency-vt}, we know that $ \lim_{N\to\infty, \delta\to 0} P(\mathcal{E}_2) = 0$. Hence, from the choice of $K$, we conclude that:
    \begin{equation*}
        \limsup_{N\to\infty, \delta\to 0} \sup_{\theta\in\mathcal{A}_{c,N}} P(\hat{T}\leq z_{1-\alpha}) \leq \limsup_{N\to\infty, \delta\to 0} P\left(\mathcal{E}_2 \right) < 1-\beta.
    \end{equation*}

    The desired statement now follows.
\end{proof}

\noindent\textbf{Theorem~\ref{thm.asym-level-Kfold}.} \emph{Suppose that Asms.~\ref{asm.Lp-consistency-estimator}, \ref{asm.Sigma-convergence-in-probability} are satisfied for every sample split $J_k \cup J_k^c$. Under \eqref{eq.regularity-lambda-bounded-inL2} and Asms.~\ref{asm.moment-bound}, \ref{asm.bound-away-from-zero}, we have:
    \begin{equation*}
        \hat{T}_K \overset{\mathcal{D}}{\to} S
    \end{equation*}
    as $N\to\infty, \delta\to 0$. As a consequence, for any $\alpha\in (0,1)$, we have:
    \begin{equation*}
        \limsup_{N\to\infty, \delta\to 0} P(\hat{T}_K>z_{1-\alpha}) \leq \alpha.
    \end{equation*}}

\begin{proof}
    We consider the decomposition \eqref{eq.decomposition} for each sample split $J_k \cup J_k^c$ and denote the corresponding processes by $U^k, R_1^k, R_2^k, R_3^k, D_1^k$ and $D_2^k$. For each $k=1,...,K$, we can then apply the results in Sec.~\ref{sec.asym-ana-ii} for a single data split to conclude that:
    \begin{itemize}
        \item By Prop.~\ref{prop.convergence-u}, we have $U^k\overset{\mathcal{P}}{\to} U_0$ in $C[0,T]$, where $U_0$ is a mean-zero Gaussian martingale with variance function $\mathcal{V}$.
        \item By Prop.~\ref{prop.R1-3-zero}, we have $\sup_t |R_i^k(t)| \to 0$ for $i=1,2,3$.
        \item Under $\mathbb{H}_0$, the processes $D_1^k, D_2^k$ are zero processes almost surely.
    \end{itemize}

    Recall we assume the folds have uniform asymptotic density, meaning $\frac{\sqrt{N_0}}{\sqrt{K N_k}}\to 1$ as $N_0\to \infty$. Thus, we further conclude that for each fixed $k$ and $i=1,2,3$,
    \begin{equation*}
        \frac{\sqrt{N_0}}{\sqrt{K N_k}} U^k \overset{\mathcal{D}}{\to} U_0, \quad \frac{\sqrt{N_0}}{\sqrt{K N_k}} R^k_i \overset{\mathcal{P}}{\to} 0.
    \end{equation*}

    Besides, we have $U^1 \ind ... \ind U^K$ are jointly independent. To see this, note that $U^k$ is constructed from $(G_j,M_j)_{j\in J_k}$ only, and by the i.i.d. assumption of the data, the collections $\{(G_j,M_j)_{j\in J_1}, ..., (G_j,M_j)_{j\in J_K}\}$ are jointly independent. 

    As a result, we have the summation of $U^1,..., U^K$ converges to the summation of $K$ independent copies of $U_0$. Using the convolution property of the Gaussian distribution, we then have:
    \begin{equation*}
        \bar{U}_K := \frac{1}{\sqrt{K}}\sum_{k=1}^K \frac{\sqrt{N_0}}{\sqrt{K N_k}} U^k \overset{\mathcal{D}}{\to} U_0.
    \end{equation*}

    Then, according to the Slutsky theorem, we can conclude that:
    \begin{equation*}
        \sqrt{N_0}\hat{\gamma}_K = \bar{U}_K + \sum_{k=1}^K \frac{\sqrt{N_0}}{K\sqrt{N_k}} (R_1^k+R_2^k+R_3^k+D_1^k+D_2^k)\overset{\mathcal{D}}{\to} U_0.
    \end{equation*}

    According to Lem.~\ref{lem.U0-is-tight}, the limit $U_0$ is tight in $C[0,T]$. Then, according to \cite[Prop. B.9]{christgau2024nonparametric}, we have $\sqrt{N_0}\|\hat{\gamma}_K\|_\infty {\to} \|U_0\|_\infty$.

    Furthermore, consider the cross-fitted variance estimator $\hat{\mathcal{V}}_K(T)$ at the time $T$. According to Prop.~\ref{prop.consistency-vt}, we have $\hat{\mathcal{V}}_K(T)$ is an average of $K$ random variables converging in probability to $\mathcal{V}(T)$. Hence, $\hat{\mathcal{V}}_K(T)$ also converge in probability to $\mathcal{V}(T)$. We can then apply the Slutsky theorem to conclude that:
    \begin{equation*}
        \hat{T}_K = \frac{\sqrt{N_0}\|\hat{\gamma}_K\|_\infty}{\hat{\mathcal{V}}_K(T)} \overset{\mathcal{D}}{\to} \frac{\|U_0\|_\infty}{\mathcal{V}(T)} \overset{\mathcal{D}}{=} S,
    \end{equation*}
    as $N\to\infty, \delta\to 0$, where the last equality has been established in \eqref{eq.definition-S}. This has concluded the proof.
\end{proof}
\section{Proof of Section~\ref{sec.estimation}}
\label{appx.asm-part1}

We first introduce some notational convenience. For a matrix $A\in\mathbb{R}^{m\times n}$, denote $\sigma_{\min}(A)$ and $\sigma_{\max}(A)$ as the smallest and largest singular values of $A$, respectively. Denote $\|A\|_{\mathrm{sp}}=\sigma_{\max}(A), \|A\|_p:=\|\mathrm{vec}(A)\|_p$ as the spectral norm and entrywise $\ell_p$-norm, respectively. We note that these norms are equivalent, for $1\leq p\leq q < \infty$: 
\begin{align}
    \|A\|_{\mathrm{sp}} \leq \|A\|_1 \leq mn \|A\|_\mathrm{sp}, \quad \|A\|_q \leq \|A\|_p \leq (mn)^{2(\frac{1}{p}-\frac{1}{q})} \|A\|_q.
    \label{eq.equivalence-matrix-norms}
\end{align}

We then show the complete form of \eqref{eq.filter-eq-general}. Recall $h_t=h(t,X_t)=h\{t,X_{V\backslash C}(t),X_C(t)\}$. We denote $h_1(t)=\frac{\partial h_t}{\partial t}, h_2(t)=\frac{\partial h_t}{\partial X_{V\backslash C}(t)}, h_3(t)=\frac{\partial h_t}{\partial X_{C}(t)}$ as the Jacobians, and denote $h_{22}(t)=\frac{\partial^2 h_t}{\partial X_{V\backslash C}(t)^2}, h_{33}(t)=\frac{\partial^2 h_t}{\partial X_{C}(t)^2}$ as the Hessians. Then in \eqref{eq.filter-eq-general}, we denote:
\begin{align*}
    &(\mathcal{L}h)_t := h_1(t) + h_2(t)\lambda_{V\backslash C}(t) + h_3(t) \lambda_C(t) \\
    &\manyquad[3] + \frac{1}{2}\left\{ h_{22}(t)(\Sigma_{V\backslash C}\Sigma_{V\backslash C}^\top \Sigma_{V\backslash C,C} +\Sigma_{V\backslash C,C}^\top) + h_{33}(t) (\Sigma_{C,V\backslash C} \Sigma_{C,V\backslash C}^\top + \Sigma_C\Sigma_C^\top)\right\} \\
    &(\mathcal{N}h)_t := h_1(t) \Sigma_{V\backslash C,C} + h_2(t) \Sigma_C,\numberthis\label{eq.filtering-eq-complete-form}
\end{align*}
with $\pi_t(\mathcal{L}h):=\mathbb{E}\{(\mathcal{L}h)_t|\mathcal{F}_t\}$ and $\pi_t(\mathcal{N}h):=\mathbb{E}\{(\mathcal{N}h)_t|\mathcal{F}_t\}$.

\subsection{Proof of Props.~\ref{prop.consistency-ou}-\ref{prop.filtering-eq-uniqueness-solution}}
\label{appx.proof-consistency-ou}

In this section, we introduce the proof of Props.~\ref{prop.consistency-ou}-\ref{prop.filtering-eq-uniqueness-solution}. We first show the consistency of $\bar{F}$ that is defined in line 2 of Alg.~\ref{alg:estimation-ou}.

\begin{proposition}
    Let $p\geq 1$, then there exists a positive constant $K=K(p,\Phi,\sigma)$ such that:  
    \begin{equation*}
        \mathbb{E}\|\bar{F}-F\|^p_\mathrm{sp} \leq K \left(N_c^{-\frac{1}{2}p}+\delta_c^{4p}\right)
    \end{equation*}
    for sufficiently large $N_c$ and sufficiently small $\delta_c$.
    \label{prop.consistency-hatF}
\end{proposition}

\begin{proof}
    We first show the estimator $\wh{F}$ defined in line 1 of Alg.~\ref{alg:estimation-ou} satisfies
    \begin{equation}
        \mathbb{E}\|\wh{F}-F\|^p_\mathrm{sp} \leq K_1 N_c^{-\frac{1}{2}p}
        \label{eq.hatF-F<=K-Nc}
    \end{equation}
    for some positive constant $K_1=K_1(p,\Phi,\sigma)$. Specifically, decompose
    \begin{align*}
        \wh{F} - F = C_N(-1)C_N(0)^{-1} - F = \left[C_N(-1)-\Gamma(-1) - F\left\{C_N(0)-\Gamma(0)\right\}\right] C_N(0)^{-1},
    \end{align*}
    it then follows that:
    \begin{equation*}
        \|\wh{F} - F\|^p_\mathrm{sp} \leq 2^{p}\left\{\|C_N(-1)-\Gamma(-1)\|^p_\mathrm{sp} + \|F\|^p_\mathrm{sp} \|C_N(0)-\Gamma(0)\|^p_\mathrm{sp}\right\} \|C_N(0)^{-1}\|^p_\mathrm{sp}.
    \end{equation*}
    Taking the expectation of both sides and using the Cauchy-Schwarz inequality, we have:
    \small
    \begin{equation}
        \mathbb{E}\|\wh{F} - F\|^p_\mathrm{sp} \leq 2^p\left\{\mathbb{E}\|C_N(-1)-\Gamma(-1)\|^{2p}_\mathrm{sp} + \|F\|^{2p}_\mathrm{sp} \mathbb{E}\|C_N(0)-\Gamma(0)\|^{2p}_\mathrm{sp}\right\}^{\frac{1}{2}} \left\{\mathbb{E}\|C_N(0)^{-1}\|^{2p}_\mathrm{sp}\right\}^{\frac{1}{2}}.
        \label{eq.E|hat{F}-F|}
    \end{equation}
    \normalsize

    According to Lem.~\ref{lem.convergence-CN-1-CN-0}, there exists $K_2=K_2(p,\Phi,\sigma)>0$ such that:
    \begin{equation}
        \max\left\{\mathbb{E}\|\mathrm{vec} \{C_N(-1)\} - \mathrm{vec} \{\Gamma(-1)\}\|^{2p}_{2p}, \, \mathbb{E}\|\mathrm{vec}\{ C_N(0)\}-\mathrm{vec} \{\Gamma(0)\}\|^{2p}_{2p}\right\} \leq K_2 N_c^{-p}.
        \label{eq.convergence-CN1-CN0}
    \end{equation}

    According to Lem.~\ref{lem.boundedness-CN0-inverse}, there exists $K_3=K_3(p,\Phi,\sigma)>0$ such that:
    \begin{equation}
        \mathbb{E}\|C_N(0)^{-1}\|^{2p}_\mathrm{sp}\leq K_3,
        \label{eq.CN-inverse-bounded}
    \end{equation}
    
    Furthermore, for $\delta_c<1$, we have:
    \begin{equation}
        \|F\|^{2p}_\mathrm{sp} = \|\exp(\Phi\delta_c)\|^{2p}_\mathrm{sp} < \exp({\|\Phi\|^{2p}_\mathrm{sp} \delta_c^{2p}})< \exp({\|\Phi\|^{2p}_\mathrm{sp}}).
        \label{eq.bound-norm-F}
    \end{equation}

    Combine \eqref{eq.E|hat{F}-F|}-\eqref{eq.bound-norm-F}, we have \eqref{eq.hatF-F<=K-Nc} holds with $K_1=2^p\{1+\exp({\|\Phi\|^{2p}_\mathrm{sp}})\}^{\frac{1}{2}} (K_2 K_3)^{\frac{1}{2}}$.
   
    We then show the consistency of $\bar{F}$. Specifically, denote the event $\mathcal{E} = \{1\leq\sigma_{\min}(I+\wh{F})\leq \sigma_{\max}(I+\wh{F})\leq3\}$. According to Lem.~\ref{lem.order-P(E)}, we have $P(\mathcal{E}) = 1 - O\left(N_c^{-2p}+\delta_c^{4p}\right)$, which means $\mathcal{E}$ holds almost surely. Then, we decompose
    \begin{equation*}
        \mathbb{E} \|F-\bar{F} \|^p_\mathrm{sp} = \mathbb{E} (\|F-\bar{F} \|^p_\mathrm{sp} 1_\mathcal{E}) + \mathbb{E} (\|F-\bar{F} \|^p_\mathrm{sp} 1_{^\neg \mathcal{E}}).
    \end{equation*}

    For the first term on the RHS, we have:
    \begin{equation*}
        \mathbb{E} (\|F-\bar{F} \|^p_\mathrm{sp} 1_\mathcal{E}) = \mathbb{E} (\|F-\wh{F} \|^p_\mathrm{sp} 1_\mathcal{E}) \leq \mathbb{E} (\|F-\wh{F} \|^p_\mathrm{sp})  \leq K_1 N_c^{-\frac{1}{2}p}.
    \end{equation*}

    For the second term on the RHS, we have, for $\delta_c< 1$:
    \begin{align*}
        \|F-\bar{F}\|^p_\mathrm{sp} &\leq 2^{p} (\|F\|^p_\mathrm{sp}  + \|\bar{F}+I-I\|^p_\mathrm{sp})\\
        &\overset{(1)}{\leq} 2^{p} \{\exp({\|\Phi\|^{2p}_\mathrm{sp}}) + 2^{p}\|\bar{F}+I\|^p_\mathrm{sp} + 2^{p} \|I\|^p_\mathrm{sp}\}\\
        &\overset{(2)}{\leq} 2^{p}\{(\exp({\|\Phi\|^{2p}_\mathrm{sp}}) + 2^{p} 3^p + 2^{p}\} := K_4
    \end{align*}
    where ``(1)'' uses \eqref{eq.CN-inverse-bounded} and ``(2)'' uses $\|\bar{F}+I\|_{\mathrm{sp}}=\sigma_{\max}(\bar{F}+I)\leq 3$ (line 2 of Alg.~\ref{alg:estimation-ou}). Hence, we have $\mathbb{E} (\|F-\bar{F} \|^p_\mathrm{sp} 1_{^\neg \mathcal{E}}) \leq K_4 P(^\neg \mathcal{E}) = O\left(N_c^{-2p}+\delta_c^{4p}\right)$.

    Therefore, we conclude:
    \begin{equation*}
        \mathbb{E} \|F-\bar{F} \|^p_\mathrm{sp} = \mathbb{E} (\|F-\bar{F} \|^p_\mathrm{sp} 1_\mathcal{E}) + \mathbb{E} (\|F-\bar{F} \|^p_\mathrm{sp} 1_{^\neg \mathcal{E}}) = O\left(N_c^{-\frac{1}{2}p} + \delta_c^{4p}\right).
    \end{equation*}
\end{proof}

\begin{lemma}
    Let $p\geq 1$, then there exists a positive constant $K=K(p,\Phi,\sigma)$ such that:  
    \begin{equation*}
        \max\left\{\mathbb{E}\|\mathrm{vec} \{C_N(-1)\} - \mathrm{vec} \{\Gamma(-1)\}\|^{2p}_{2p}, \, \mathbb{E}\|\mathrm{vec}\{ C_N(0)\}-\mathrm{vec} \{\Gamma(0)\}\|^{2p}_{2p}\right\} \leq K N_c^{-p}.
    \end{equation*}
    \label{lem.convergence-CN-1-CN-0}
\end{lemma}

\begin{proof}
    Denote $Y_{\delta_c}(j)=\mathrm{vec} \{X_{\delta_c}(j)X_0(j)^\top\} - \mathbb{E} \mathrm{vec}\{ X_{\delta_c}(j)X_0(j)^\top\}$. Then $\left\{Y_{\delta_c}(j) \mid j\in J_N^c\right\}$ are i.i.d copies of the zero-mean, $L_{2p}$-integrable r.v. $Y_{\delta_c}:=\mathrm{vec} (X_{\delta_c} X_0^\top) - \mathbb{E} \mathrm{vec} (X_{\delta_c} X_0^\top)$. 

    Apply Thm.~\ref{thm.tech-lem-Marc-Zyg-inequality}, we have there exists $K_1(p)>0$,
    \begin{equation*}
        \mathbb{E}\|\mathrm{vec} \{C_N(-1)\} - \mathrm{vec} \{\Gamma(-1)\}\|^{2p}_{2p} = \mathbb{E}\left\|\frac{1}{N_c} \sum_{j\in J_N^c} Y_{\delta_c}(j)\right\|^{2p}_{2p} \leq K_1(p) \mathbb{E}\|Y_{\delta_c}\|_{2p}^{2p} N^{-p}_c.
    \end{equation*}

    Furthermore, we note that:
    \begin{align*}
        \left|\mathbb{E}\|Y_{\delta_c}\|_{2p}^{2p}-\mathbb{E}\|Y_{0}\|_{2p}^{2p}\right| &\leq \mathbb{E}\|Y_{\delta_c}-Y_0\|_{2p}^{2p} \\
        &= \mathbb{E}\|(X_{\delta_c}X_0^\top - \mathbb{E}X_{\delta_c}X_0^\top) - (X_0 X_0^\top - \mathbb{E}X_0 X_0^\top)\|_{2p}^{2p} \\
        &\leq 2^{2p}\left( \mathbb{E}\|X_{\delta_c}X_0^\top-X_0 X_0^\top\|_{2p}^{2p} + \|\mathbb{E}X_{\delta_c}X_0^\top-\mathbb{E}X_0 X_0^\top\|_{2p}^{2p}\right) \\
        &\leq 2^{2p+1} \mathbb{E}\|X_{\delta_c}X_0^\top-X_0 X_0^\top\|_{2p}^{2p}\\
        &\leq 2^{2p+1} \left(\mathbb{E}\|X_{\delta_c}-X_0\|_{2p}^{4p}\right)^{\frac{1}{2}} \left(\mathbb{E}\|X_0\|_{2p}^{4p}\right)^{\frac{1}{2}} \\
        &\overset{(1)}{\leq} 2^{2p+1} \delta_c^{2p} \left(\mathbb{E}\|X_0\|_{2p}^{4p}\right)^{\frac{1}{2}} \to 0 \quad \text{as $\delta_c\to 0$},
    \end{align*}
    where ``(1)'' applies Lem.~\ref{lem.smooth}. Hence, for sufficiently small $\delta_c$, $\mathbb{E}\|Y_{\delta_c}\|_{2p}^{2p}\leq\mathbb{E}\|Y_{0}\|_{2p}^{2p}+1$. As a result, denote $K_2:=K_1(p) \left(\mathbb{E}\|Y_{0}\|_{2p}^{2p}+1\right)$ that relies on $(p,\Phi,\sigma)$, we have:
    \begin{equation*}
         \mathbb{E}\|\mathrm{vec} \{C_N(-1)\} - \mathrm{vec} \{\Gamma(-1) \}\|^{2p}_{2p} \leq K_2 N^{-p}_c.
    \end{equation*}

    In a similar way, we can show:
    \begin{equation*}
        \mE\|\mathrm{vec} \{C_N(0)\} - \mathrm{vec} \{\Gamma(0)\} \|^{2p}_{2p} \leq K_3 N^{-p}_c,
    \end{equation*}
    for some $K_3:=K_3(p,\Phi,\sigma)>0$. Then letting $K=\max(K_2,K_3)$ concludes the proof.
\end{proof}

\begin{lemma}
    Let $p\geq 1$, then there exists a positive constant $K=K(p,\Phi,\sigma)$ such that:  
    \begin{equation*}
        \mathbb{E}\|C_N(0)^{-1}\|^{2p}_\mathrm{sp}\leq K_3.
    \end{equation*}
    \label{lem.boundedness-CN0-inverse}
\end{lemma}

\begin{proof}
    Apply Thm.~\ref{thm.bhansali1991convergence} with $p=1, q=2p, A_j=X_0(j) X_0(j)^\top$. Since the sequence $\{X_0(j)\}_{j\in J_N^c}$ is i.i.d., following a Gaussian distribution, we can verify that conditions (i)-(iii) in Thm.~\ref{thm.bhansali1991convergence} hold. The theorem then states that there exists a number $N_0$ and a nonnegative $L_{2p}$-integrable random variable $\Lambda_0$, such that for all $N_c>N_0$,
    \begin{equation*}
        \|C_N(0)^{-1}\|_{\mathrm{sp}} = \left\|\left[\frac{1}{N}\sum_j X_0(j) X_0(j)^\top\right]^{-1}\right\|_{\mathrm{sp}} \leq \Lambda_0 \,\, \text{a.s.}
    \end{equation*}

    The statement required then follows.
\end{proof}

\begin{lemma}
    Let $p$ be the same as in Prop.~\ref{prop.consistency-hatF}, then we have:
    \begin{equation*}
        P\{\sigma_{\min}(I+\wh{F})\leq 1 \vee \sigma_{\max}(I+\wh{F})\geq 3\} =  O\left(N_c^{-2p}+\delta_c^{4p}\right)
    \end{equation*}
    \label{lem.order-P(E)}
\end{lemma}

\begin{proof}
    Use the Markov inequality,
    \begin{equation*}
        P\{\sigma_{\min}(I+\wh{F})\leq 1\} \leq P\{|\sigma_{\min}(I+\wh{F})-2|\geq 1\} \leq \mathbb{E}|\sigma_{\min}(I+\wh{F})-2|^{4p}.
    \end{equation*}

    Use Lem.~\ref{lem.lip-continuty-singular-value}, we further have:
    \begin{align*}
        \mE|\sigma_{\min}(I+\wh{F})-2| & = \mathbb{E}|\sigma_{\min}(I+\wh{F})-\sigma_{\min}(2I)|  \leq \|\wh{F}-F\|_\mathrm{sp} + \|H\|_\mathrm{sp},
    \end{align*}
    where $H:=e^{\Phi \delta_c}-I$ with $\|H\|_\mathrm{sp}=O(\delta_c)$. As a result, 
    \begin{equation*}
        \mathbb{E}|\sigma_{\min}(I+\wh{F})-2|^{4p} \leq 2^{4p-1}(\mathbb{E}\|\wh{F}-F\|_\mathrm{sp}^{4p} + \|H\|_\mathrm{sp}^{4p}) = O\left(N_c^{-2p}+\delta_c^{4p}\right).
    \end{equation*}

    Therefore, we have $P\{\sigma_{\min}(I+\wh{F})\leq 1\} =  O\left(N_c^{-2p}+\delta_c^{4p}\right)$. In a similar way, we can show that $P\{\sigma_{\max}(I+\wh{F})\geq 3\} =  O\left(N_c^{-2p}+\delta_c^{4p}\right)$.
\end{proof}

In the following, we prove Prop.~\ref{prop.consistency-ou}.

\noindent\textbf{Proposition ~\ref{prop.consistency-ou}.} \emph{Assume Asm.~\ref{asm.identifiable-ou}. Let $p\geq 1$, then there exists a positive constant $K:=K(p,\Phi,\sigma)$ such that:
    \begin{equation*}
        \mathbb{E} \|\Phi-\Tilde{\Phi}\|^p_\mathrm{sp} \leq K u^p \left(\delta_c^{-p} N_c^{-\frac{p}{2}} + \delta_c^{2p} \right)
    \end{equation*}
    for sufficiently large $N_c, u$ and sufficiently small $\delta_c$.}

\begin{proof}
    We first show the consistency of $\bar{\Phi}$ defined in line 3 of Alg.\ref{alg:estimation-ou}. Recall that:
    \begin{equation*}
        \Phi = \frac{2}{\delta_c}(F-I)(F+I)^{-1} + \nu, \quad \nu = -\frac{2}{\delta_c} \eta (F+I)^{-1},
    \end{equation*}
    where
    \begin{equation*}
        \eta = -\frac{(\Phi\delta_c)^3}{2} \sum_{i=3}^\infty \frac{(i-2)(\Phi\delta_c)^{i-3}}{i!}.
    \end{equation*}

    Hence,
    \begin{equation*}
        \Phi - \bar{\Phi} = \frac{2}{\delta_c} \left\{(F-I)(F+I)^{-1} - (\bar{F}-I)(\bar{F}+I)^{-1} \right\} + \nu.
    \end{equation*}

    It follows that:
    \begin{align*}
         \|\Phi - \bar{\Phi}\|_\mathrm{sp} &\leq \frac{2}{\delta_c} \left\|(F-I)(F+I)^{-1} - (\bar{F}-I)(\bar{F}+I)^{-1} \right\|_\mathrm{sp} + \|\nu\|_\mathrm{sp} \\
        &= \frac{2}{\delta_c} \left\| (I+F)^{-1} (F-\bar{F})(I+\bar{F})^{-1} \right\|_\mathrm{sp} + \|\nu\|_\mathrm{sp} \\
        &\leq \frac{2}{\delta_c} \| (I+F)^{-1} \|_\mathrm{sp} \|F-\bar{F} \|_\mathrm{sp} \| (I+\bar{F})^{-1} \|_\mathrm{sp} + \|\nu\|_\mathrm{sp}.
    \end{align*}

    For $\|(I+F)^{-1} \|_\mathrm{sp}$, we have $\|(I+F)^{-1} \|_\mathrm{sp} = 1/\sigma_{\min}(I+F)$. Since $F=e^{\Phi\delta_c} = I+H$, where $\|H\|_\mathrm{sp}=O(\delta_c)$, we have due to Lem.~\ref{lem.lip-continuty-singular-value} that:
    \begin{equation*}
        |\sigma_{\min}(I+F) - 2| = |\sigma_{\min}(I+F) - \sigma_{\min}(2I)| \leq \|H\|_\mathrm{sp} = O(\delta_c).
    \end{equation*}

    Hence, there exists $\delta_0>0$ such that $\sigma_{\min}(I+F)\geq 1$ for all $\delta_c<\delta_0$. Consequently, we have $\|(I+F)^{-1} \|_\mathrm{sp} = 1/\sigma_{\min}(I+F) \leq 1$ is bounded.

    For $\|F-\bar{F} \|_\mathrm{sp}$, we have shown in Prop.~\ref{prop.consistency-hatF} that:
    \begin{equation*}
        E\|F-\bar{F}\|^p_\mathrm{sp} = O\left(N_c^{-\frac{p}{2}} + \delta_c^{4p}\right).
    \end{equation*}

    For $\| (I+\bar{F})^{-1} \|_\mathrm{sp}$, we have $\| (I+\bar{F})^{-1} \|_\mathrm{sp} = 1/\sigma_{\min}(I+\bar{F}) \leq 1$ due to Alg.~\ref{alg:estimation-ou} (line 2). Moreover, we have $\|\nu\|_\mathrm{sp}=O(\delta_c^2)$. Therefore, we have:
    \begin{align*}
        \mathbb{E} \|\Phi - \bar{\Phi}\|^p_\mathrm{sp} &\leq 2^{p-1} \left[ \left(\frac{2}{\delta_c}\right)^p \| (I+F)^{-1} \|^p_\mathrm{sp} \mathbb{E} \left\{\|F-\bar{F} \|^p_\mathrm{sp} \| (I+\bar{F})^{-1} \|^p_\mathrm{sp}\right\} + \|\nu\|^p_\mathrm{sp}\right] \\
        &\leq 2^{p-1} \left\{ \left(\frac{2}{\delta_c}\right)^p \mathbb{E} \|F-\bar{F} \|^p_\mathrm{sp} + \|\nu\|^p_\mathrm{sp}\right\} = O\left(\delta_c^{-p} N_c^{-\frac{p}{2}} + \delta_c^{2p}\right).\numberthis \label{eq.order-barPhi-Phi}
    \end{align*}

    We then show the consistency of $\Tilde{\Phi}$. Denote $\mathcal{E}$ as the event ``$\sigma_{\min}(I-\bar{\Phi})\geq u^{-1}$ and $\sigma_{\max}(I-\bar{\Phi})\leq u$'', we decompose
    \begin{equation*}
        \mathbb{E}\|\Phi-\Tilde{\Phi}\|^p_\mathrm{sp} = \mathbb{E}\|\Phi-\Tilde{\Phi}\|^p_\mathrm{sp} 1_\mathcal{E}+ \mathbb{E}\|\Phi-\Tilde{\Phi}\|^p_\mathrm{sp} 1_{^\neg \mathcal{E}}.
    \end{equation*}

    For the first term on the RHS, we have:
    \begin{equation*}
        \mathbb{E}\|\Phi-\Tilde{\Phi}\|^p_\mathrm{sp} 1_\mathcal{E} = \mathbb{E}\|\Phi-\bar{\Phi}\|^p_\mathrm{sp} 1_\mathcal{E} \leq \mathbb{E}\|\Phi-\bar{\Phi}\|^p_\mathrm{sp}.
    \end{equation*}

    For the second term on the RHS, we have:
    \begin{align*}
        \mathbb{E}\|\Phi-\Tilde{\Phi}\|^p_\mathrm{sp} 1_{^\neg \mathcal{E}} &\leq \|\Phi-\Tilde{\Phi}\|^p_\mathrm{sp} P(^\neg \mathcal{E}) \leq 2^{p-1}(\|I-\Tilde{\Phi}\|^p_\mathrm{sp}+\|I-\Phi\|^p_\mathrm{sp}) P(^\neg \mathcal{E}) \\
        &\leq 2^{p-1}(u^p+\|I-\Phi\|^p_\mathrm{sp}) P(^\neg \mathcal{E}),
    \end{align*}
    where the last inequality is due to $\|I-\Tilde{\Phi}\|_\mathrm{sp}=\sigma_{\max}(I-\Tilde{\Phi})\leq u$ (line 4 of Alg.~\ref{alg:estimation-ou}).

    We note that: 
    \begin{equation*}
        P(^\neg \mathcal{E}) \leq P\{\sigma_{\min}(I-\bar{\Phi})< u^{-1}\} + P\{\sigma_{\max}(I-\bar{\Phi})>u\}.
    \end{equation*}

    Use Lem.~\ref{lem.lip-continuty-singular-value}, we have:
    \begin{equation*}
        P\{\sigma_{\min}(I-\bar{\Phi})<u^{-1}\}  \leq P\{\sigma_{\min}(I-{\Phi})<u^{-1}+\|\bar{\Phi}-\Phi\|_\mathrm{sp}\} \leq \frac{\mathbb{E}\|\bar{\Phi}-\Phi\|^p_\mathrm{sp}}{\{\sigma_{\min}(I-{\Phi})-u^{-1}\}^p}.
    \end{equation*}

    In a similar way,
    \begin{equation*}
        P\{\sigma_{\max}(I-\bar{\Phi})>u\} \leq P\{u-\sigma_{\max}(I-\Phi)<\|\bar{\Phi}-\Phi\|_\mathrm{sp}\} \leq \frac{\mathbb{E}\|\bar{\Phi}-\Phi\|^p_\mathrm{sp}}{\{u-\sigma_{\max}(I-\Phi)\}^p}.
    \end{equation*}

    Therefore,
    \begin{align*}
        \mathbb{E}\|\Phi-\Tilde{\Phi}\|^p_\mathrm{sp} 1_{^\neg \mathcal{E}} &\leq  2^{p-1}(u^p+\|I-\Phi\|^p_\mathrm{sp}) P(^\neg \mathcal{E}) \\
        &\leq 2^{p-1}(u^p+\|I-\Phi\|^p_\mathrm{sp}) \left[\frac{\mathbb{E}\|\bar{\Phi}-\Phi\|^p_\mathrm{sp}}{\{\sigma_{\min}(I-{\Phi})-u^{-1}\}^p}+\frac{\mathbb{E}\|\bar{\Phi}-\Phi\|^p_\mathrm{sp}}{\{u-\sigma_{\max}(I-\Phi)\}^p}\right]\\
        &=O(u^p \mathbb{E}\|\bar{\Phi}-\Phi\|^p_\mathrm{sp}) = O\left\{ u^p \left(\delta_c^{-p} N_c^{-\frac{p}{2}} + \delta_c^{2p} \right)\right\}
    \end{align*}

    Combining the above, we have shown the proposition.
\end{proof}

\textbf{Proposition \ref{prop.consistency-hat-Sigma}.} \emph{Assume Asm.~\ref{asm.identifiable-ou}, then we have $\hat{\Sigma} \overset{\mathcal{P}}{\to} \Sigma$ if $N_c\to \infty, \delta_c\to 0, u\to \infty$ and $\exists p\geq 1$ such that $u^p \{\delta_c^{-p} N_c^{-\frac{p}{2}} + \delta_c^{2p} \} \to 0$.}

\begin{proof}
    According to \cite[Coro.~3.2.1]{lutkepohl2005new}, we have $\hat{\Omega}\overset{\mathcal{P}}{\to} \Omega$ as $N_c\to \infty$. Moreover, we have $\hat{F}\overset{\mathcal{P}}{\to} F$ as $N_c\to\infty, \delta_c\to 0$ according to Prop.~\ref{prop.consistency-hatF}. Finally, under $u^p \{\delta_c^{-p} N_c^{-\frac{p}{2}} + \delta_c^{2p} \} \to 0$, we have $\Tilde{\Phi}\overset{\mathcal{P}}{\to} \Phi$ according to Prop.~\ref{prop.consistency-ou}. The desired statement then follows from the continuous mapping theorem.
\end{proof}

In the following, we prove Prop.~\ref{prop.form-of-filtering-eq-in-OU} and Prop.~\ref{prop.filtering-eq-uniqueness-solution}.

\textbf{Proposition~\ref{prop.form-of-filtering-eq-in-OU}.} \emph{The processes $m_t = \mathbb{E}\left\{X_{V\backslash C}(t)|\mathcal{F}_t\right\}, y_t = \frac{1}{\sigma^2}\mathrm{Var}\left\{X_{V\backslash C}(t)|\mathcal{F}_t\right\}$ satisfy the following system of filtering equations:}
    \begin{align*}
            &dm_t = \{A_t X_C(t) + B_t m_t\} dt + C_t d X_C(t),\\
            &\dot{y}_t = \Phi_{V\backslash C} y_t + y_t \Phi_{V\backslash C}^\top + I - y_t \Phi_{C, V\backslash C}^\top \Phi_{C, V\backslash C} y_t,
    \end{align*}
    \emph{where we denote $\Phi_{V\backslash C}$
 in short of $\Phi_{V\backslash C,V\backslash C}$ and}          \begin{equation*}
        A_t := \Phi_{V\backslash C, C} - y_t  \Phi_{C, V\backslash C}^\top  \Phi_{C}, \quad B_t := \Phi_{V\backslash C} - y_t \Phi_{C, V\backslash C}^\top\Phi_{C, V\backslash C}, \quad C_t := y_t \Phi_{C, V\backslash C}^\top.
    \end{equation*}

    \emph{Moreover, we have $X_0\sim\mathcal{N}(0,\sigma^2 \Upsilon)$, with $\Upsilon=(I-\Phi)^{-1} (I-\Phi^\top)^{-1}$. Therefore, the initial values of (\ref{eq.pi-t}a)-(\ref{eq.pi-t}b) are (respectively):}
    \begin{equation*}
        m_0 = \Upsilon_{V\backslash C,C} \Upsilon_C^{-1} X_C(0), \quad y_0 = \Upsilon_{V\backslash C} - \Upsilon_{V\backslash C,C} \Upsilon_C^{-1} \Upsilon_{C,V\backslash C}.
    \end{equation*}

\begin{proof}
    In \eqref{eq.filtering-setup}, substitute with
\begin{align*}
    & \quad \quad\theta = X_{V\backslash C}, \quad \xi=X_{C}, \quad W_1 = W_{V\backslash C}, \quad W_2 = W_{C}, \\
    &a_0 = \Phi_{V\backslash C, C}\cdot X_C(t), \quad a_1 = \Phi_{V\backslash C}, \quad b_1=\sigma I_{|V\backslash C|}, \quad b_2 = \mathrm{0}_{|V\backslash C|\times |C|}, \\
    &A_0 = \Phi_{C} \cdot X_C(t), \quad A_1=\Phi_{C, V\backslash C}, \quad B_1 = 0_{|C|\times |V\backslash C|}, \quad B_2 = \sigma I_{|C|},
\end{align*}
we have $k=|V\backslash C|$, $l=|C|$, and $m_t = \mathbb{E}\left\{X_{V\backslash C}(t)|\mathcal{F}_t^{C}\right\}$. Then, apply Thm.~\ref{eq.filtering-thm} with the substitution $y_t = \frac{1}{\sigma^2} \gamma_t$, we obtain Prop.~\ref{prop.form-of-filtering-eq-in-OU}.
\end{proof}

\textbf{Proposition~\ref{prop.filtering-eq-uniqueness-solution}.} \emph{The system of equations (\ref{eq.pi-t}a)-(\ref{eq.pi-t}b), subject to the initial condition on $m_0, \gamma_0$, has a unique, continuous, $\mathcal{F}_t$-adapted solution for any $t\in [0,T]$.}

\begin{proof}
    Prop.~\ref{prop.filtering-eq-uniqueness-solution} is the direct application of Thm.~\ref{thm.filtering-unique}.
\end{proof}

\subsection{Proof of Prop.~\ref{prop.consistency-projection-estimator}}
\label{appx.proof-consistency-projection-estimator}

We first introduce some notational convenience. For a nonsingular matrix $A$, we denote $\lambda_{\min}(A)$ and $\lambda_{\max}(A)$ as the smallest and largest eigenvalues of $A$, respectively. A \emph{principal submatrix} $B$ of $A$ is a square submatrix formed by keeping the same set of row and column indices. For any vector $v\in \mathbb{R}^d$, we note that $\|v\|_\mathrm{sp}=\|v\|_2$ equals  the Euclidean norm. 

For $u>1$, denote $U_y(u,T), U_A(u,T), U_B(u,T), U_C(u,T)$ as the upper bounds of $y=y(t,\Phi,y_0), A=A(t,\Phi,y_0), B=B(t,\Phi,y_0), C=C(t,\Phi,y_0)$, in terms of the spectral norm, over the set $\left\{(t,\Phi,y_0)\mid 0\leq t\leq T, \|\Phi\|_{\mathrm{sp}} \leq u-1, \|y_0\|_\mathrm{sp} \leq u^2+u^6\right\}$. In this regard, we have $\|\tilde{y}_t\|_\mathrm{sp}\leq U_y(u,T), \|\tilde{A}_t\|_\mathrm{sp}\leq U_A(u,T)$, etc., for every $t$. We show the detailed forms of these bounds in Coro.~\ref{coro.tech-lem-Uy-UA-UB-UC}. 

In the following, for $p\geq 1$, we introduce three lemmas that discuss the order of $\mathbb{E}\|m_0-\hat{m}_0\|_\mathrm{sp}^p$ (Lem.~\ref{lem.consistency-initial-value-estimate}), $\sup_{t\in [0,T]} \mathbb{E}\|m_t-\Tilde{m}_t\|_\mathrm{sp}^p$ (Lem.~\ref{lem.m-tildem-consistency}), and $\sup_{t\in [0,T]} \mathbb{E}\|\Tilde{m}_t-\hat{m}_t\|_\mathrm{sp}^p$ (Lem.~\ref{lem.tildem-hatm-consistency}), respectively.

\begin{lemma}
    Consider the estimators of the initial values of \eqref{eq.pi-t}, namely
    \begin{equation*}
        \wh{m}_0 = \Tilde{\Upsilon}_{V\backslash C,C} \Tilde{\Upsilon}_C^{-1} X_C, \quad y_0 = \Tilde{\Upsilon}_{V\backslash C} - \Tilde{\Upsilon}_{V\backslash C,C} \Tilde{\Upsilon}_C^{-1} \Tilde{\Upsilon}_{C,V\backslash C},
    \end{equation*}
    where $\Tilde{\Upsilon}=(I-\Tilde{\Phi})^{-1} (I-\Tilde{\Phi}^\top)^{-1}$. We have for $p\geq 1$,
    \begin{equation*}
        \mathbb{E}\|\wh{m}_0-m_0\|^p_\mathrm{sp} = O(u^{6p} \mathbb{E}\|\Tilde{\Phi}-\Phi\|^p_\mathrm{sp}), \quad \mathbb{E}\|\wh{y}_0-y_0\|^p_\mathrm{sp} = O(u^{6p} \mathbb{E}\|\Tilde{\Phi}-\Phi\|^p_\mathrm{sp}).
    \end{equation*}

    Furthermore, the estimator $\wh{y}_0$ has a spectral norm bounded by $u^2 + u^6$.
    \label{lem.consistency-initial-value-estimate}
\end{lemma}

\begin{proof}
    Denote $B:=(I-\Phi)^{-1}$ and $\Tilde{B}:=(I-\Tilde{\Phi})^{-1}$, with $\Upsilon=BB^\top$. We first discuss the consistency of $\wh{m}_0$. We have:
    \begin{equation*}
        \wh{m}_0 - m_0 = (\Tilde{\Upsilon}_{V\backslash C,C}\Tilde{\Upsilon}_C^{-1} - \Upsilon_{V\backslash C,C}\Upsilon_{C}^{-1}) X_C(0).
    \end{equation*}

    We note that:
    \small
    \begin{align*}
        \|\Tilde{\Upsilon}_{V\backslash C,C}\Tilde{\Upsilon}_C^{-1} &- \Upsilon_{V\backslash C,C}\Upsilon_{C}^{-1}\|_\mathrm{sp} = \|\Tilde{\Upsilon}_{V\backslash C,C}\Tilde{\Upsilon}_C^{-1} - \Tilde{\Upsilon}_{V\backslash C,C}{\Upsilon}_C^{-1} + \Tilde{\Upsilon}_{V\backslash C,C}{\Upsilon}_C^{-1} - \Upsilon_{V\backslash C,C}\Upsilon_{C}^{-1}\|_\mathrm{sp} \\
        &\leq \|\Tilde{\Upsilon}_{V\backslash C,C}\Tilde{\Upsilon}_C^{-1} - \Tilde{\Upsilon}_{V\backslash C,C}{\Upsilon}_C^{-1}\|_\mathrm{sp} + \|\Tilde{\Upsilon}_{V\backslash C,C}{\Upsilon}_C^{-1} - \Upsilon_{V\backslash C,C}\Upsilon_{C}^{-1}\|_\mathrm{sp} \\
        &\leq \|\Tilde{\Upsilon}_{V\backslash C,C}\|_\mathrm{sp} \|\Tilde{\Upsilon}_C^{-1}-{\Upsilon}_C^{-1}\|_\mathrm{sp} + \|\Tilde{\Upsilon}_{V\backslash C,C}-{\Upsilon}_{V\backslash C,C}\|_\mathrm{sp} \|{\Upsilon}_C^{-1}\|_\mathrm{sp}\\
        &= \|\Tilde{\Upsilon}_{V\backslash C,C}\|_\mathrm{sp} \|\Upsilon_{C}^{-1} (\Upsilon_{C}-\Tilde{\Upsilon}_C) \Tilde{\Upsilon}_C^{-1}\|_\mathrm{sp} + \|\Tilde{\Upsilon}_{V\backslash C,C}-{\Upsilon}_{V\backslash C,C}\|_\mathrm{sp} \|{\Upsilon}_C^{-1}\|_\mathrm{sp}\\
        &\leq \|\Tilde{\Upsilon}_{V\backslash C,C}\|_\mathrm{sp} \|\Upsilon_{C}^{-1}\|_\mathrm{sp}\|\Upsilon_{C}-\Tilde{\Upsilon}_C\|_\mathrm{sp} \|\Tilde{\Upsilon}_C^{-1}\|_\mathrm{sp} + \|\Tilde{\Upsilon}_{V\backslash C,C}-{\Upsilon}_{V\backslash C,C}\|_\mathrm{sp} \|{\Upsilon}_C^{-1}\|_\mathrm{sp} \\
        &\overset{(1)}{\leq} \|\Tilde{\Upsilon}\|_\mathrm{sp} \|\Upsilon^{-1}\|_\mathrm{sp}\|\Upsilon-\Tilde{\Upsilon}\|_\mathrm{sp} \|\Tilde{\Upsilon}^{-1}\|_\mathrm{sp} + \|\Tilde{\Upsilon}-{\Upsilon}\|_\mathrm{sp} \|{\Upsilon}^{-1}\|_\mathrm{sp} \\
        &\overset{(2)}{\leq} (u^4+1) \|\Upsilon^{-1}\|_\mathrm{sp}\|\Upsilon-\Tilde{\Upsilon}\|_\mathrm{sp},
    \end{align*}
    \normalsize
    where (1) applies $\|\Upsilon_C^{-1}\|_\mathrm{sp}=\lambda_{\min}(\Upsilon_C)^{-1}\leq \lambda_{\min}(\Upsilon)^{-1} = \|\Upsilon^{-1}\|_\mathrm{sp}$ and $\|\Tilde{\Upsilon}_C^{-1}\|_\mathrm{sp}=\lambda_{\min}(\Tilde{\Upsilon}_C)^{-1}\leq \lambda_{\min}(\Tilde{\Upsilon})^{-1} = \|\Tilde{\Upsilon}^{-1}\|_\mathrm{sp}$ due to the Cauchy interlacing Thm.~\ref{thm.tech-lem-cauchy-interlacing}; and (2) applies $\|\Tilde{\Upsilon}\|_\mathrm{sp}=\lambda_{\max}(\Tilde{\Upsilon})=\sigma_{\max}(\Tilde{B})^2\leq u^2$ and $\|\Tilde{\Upsilon}^{-1}\|_\mathrm{sp}=\lambda_{\min}(\Tilde{\Upsilon})^{-1}=\sigma_{\min}(\Tilde{B})^{-2}\leq u^{2}$ due to Alg.~\ref{alg:estimation-ou} (l. 4).
    
    Since $X_C(0)=X_{C}^{(j)}(0)$ is independent with $\Tilde{\Upsilon}$ that relies on $J_N^c$, we have:
    \begin{equation}
        \mathbb{E}\|\hat{m}_0-m_0\|_\mathrm{sp}^p \leq (u^4+1)^p \|\Upsilon^{-1}\|_\mathrm{sp}^p \mathbb{E}\|\Upsilon-\Tilde{\Upsilon}\|_\mathrm{sp}^p \mathbb{E}\|X_C(0)\|_\mathrm{sp}^p.
        \label{eq.hatm0-m0-1}
    \end{equation}

    Furthermore, since $\Tilde{\Upsilon}-\Upsilon=\Tilde{B}\Tilde{B}^\top-BB^\top=\Tilde{B}(\Tilde{B}^\top-B^\top)+(\Tilde{B}-B)B^\top$, we have:
    \begin{align}
         \mathbb{E}\|\Tilde{\Upsilon}-\Upsilon\|^p_\mathrm{sp} &\leq 2^p(\|\Tilde{B}\|^p_\mathrm{sp} + \|B\|^p_\mathrm{sp}) \mathbb{E}\|\Tilde{B}-B\|^p_\mathrm{sp} \notag\\
         &\overset{(1)}{\leq} 2^p(u^p+\|I-\Phi\|^p_\mathrm{sp}) \mathbb{E}\|\Tilde{B}-B\|^p_\mathrm{sp} \notag\\
         &\leq 2^p(1+u^p+\|\Phi\|^p_\mathrm{sp}) \mathbb{E}\|(I-\Phi)^{-1}-(I-\Tilde{\Phi})^{-1}\|^p_\mathrm{sp} \notag\\
         &=2^p(1+u^p+\|\Phi\|^p_\mathrm{sp}) \mathbb{E}\|(I-\Phi)^{-1}(\Phi-\Tilde{\Phi})(I-\Tilde{\Phi})^{-1}\|^p_\mathrm{sp} \notag\\
         &\leq 2^p(1+u^p+\|\Phi\|^p_\mathrm{sp}) \mathbb{E} \left\{\|(I-\Phi)^{-1}\|_\mathrm{sp}^p \|\Phi-\Tilde{\Phi}\|_\mathrm{sp}^p \|(I-\Tilde{\Phi})^{-1}\|^p_\mathrm{sp} \right\} \notag\\
         &\overset{(2)}{\leq} 2^p(1+u^p+\|\Phi\|^p_\mathrm{sp})u^p \|(I-\Phi)^{-1}\|_\mathrm{sp}^p \mathbb{E} \left\{\|\Phi-\Tilde{\Phi}\|_\mathrm{sp}^p\right\} \notag\\
         &= O(u^{2p} \mathbb{E}\|\Phi-\Tilde{\Phi}\|_\mathrm{sp}^p)
         \label{eq.hatm0-m0-2}
    \end{align}
    where (1), (2) apply $\|I-\Tilde{\Phi}\|_\mathrm{sp}\leq u, \|(I-\Tilde{\Phi})^{-1}\|_\mathrm{sp}\leq u$, respectively, due to Alg.~\ref{alg:estimation-ou} (l. 4).

    Combine \eqref{eq.hatm0-m0-1} and \eqref{eq.hatm0-m0-2}, we have:
    \begin{equation*}
        \mathbb{E}\|\hat{m}_0-m_0\|_\mathrm{sp}^p = O(u^{6p} \mathbb{E}\|\Phi-\Tilde{\Phi}\|_\mathrm{sp}^p).
    \end{equation*}

    We then discuss the consistency of $\wh{y}_0$. We have:
    \small
    \begin{align*}
        \|y_0 - \wh{y}_0\|_\mathrm{sp} &= \|\Upsilon_{V\backslash C} - \Tilde{\Upsilon}_{V\backslash C} + \Tilde{\Upsilon}_{V\backslash C,C} \Tilde{\Upsilon}_C^{-1} \Tilde{\Upsilon}_{C,V\backslash C} - \Upsilon_{V\backslash C,C}\Upsilon_C^{-1} \Upsilon_{C,V\backslash C} \|_\mathrm{sp} \\
        &\leq \|\Upsilon_{V\backslash C} - \Tilde{\Upsilon}_{V\backslash C}\|_\mathrm{sp} + \|\Tilde{\Upsilon}_{V\backslash C,C} \Tilde{\Upsilon}_C^{-1} \Tilde{\Upsilon}_{C,V\backslash C} - \Upsilon_{V\backslash C,C}\Upsilon_C^{-1} \Upsilon_{C,V\backslash C}\|_\mathrm{sp}\\
        &\leq \|\Upsilon - \Tilde{\Upsilon}\|_\mathrm{sp} + \|\Tilde{\Upsilon}_{V\backslash C,C} \Tilde{\Upsilon}_C^{-1} \Tilde{\Upsilon}_{C,V\backslash C} - \Tilde{\Upsilon}_{V\backslash C,C} {\Upsilon}_C^{-1} \Tilde{\Upsilon}_{C,V\backslash C} \\
        &\manyquad[6] + \,\, \Tilde{\Upsilon}_{V\backslash C,C} {\Upsilon}_C^{-1} \Tilde{\Upsilon}_{C,V\backslash C} - \Tilde{\Upsilon}_{V\backslash C,C} {\Upsilon}_C^{-1} {\Upsilon}_{C,V\backslash C} \\
        &\manyquad[6] + \,\,  \Tilde{\Upsilon}_{V\backslash C,C} {\Upsilon}_C^{-1} {\Upsilon}_{C,V\backslash C} -  \Upsilon_{V\backslash C,C}\Upsilon_C^{-1} \Upsilon_{C,V\backslash C}\|_\mathrm{sp} \\
        &\leq \|\Upsilon - \Tilde{\Upsilon}\|_\mathrm{sp} + \|\Tilde{\Upsilon}_{V\backslash C,C} \Upsilon_C^{-1} (\Upsilon_C - \Tilde{\Upsilon}_C) \Tilde{\Upsilon}_C^{-1} \Tilde{\Upsilon}_{C,V\backslash C}\|_\mathrm{sp}\\
        &\manyquad[6] + \,\, \|\Tilde{\Upsilon}_{V\backslash C,C} \Upsilon_C^{-1}(\Upsilon_{C,V\backslash C}-\Tilde{\Upsilon}_{C,V\backslash C})\|_\mathrm{sp} \\
        &\manyquad[6] + \,\,  \|(\Tilde{\Upsilon}_{V\backslash C,C}-\Upsilon_{V\backslash C,C}) \Upsilon_C^{-1} \Upsilon_{C,V\backslash C}\|_\mathrm{sp}
    \end{align*}
    \normalsize

    It follows that:
    \begin{align*}
        \|y_0 - \wh{y}_0\|_\mathrm{sp} &\leq \|\Upsilon - \Tilde{\Upsilon}\|_\mathrm{sp} + \|\Tilde{\Upsilon}\|_\mathrm{sp}^2 \|\Upsilon^{-1}\|_\mathrm{sp}\|\Tilde{\Upsilon}^{-1}\|_\mathrm{sp}\|\Upsilon-\Tilde{\Upsilon}\|_\mathrm{sp} + 2\|\Tilde{\Upsilon}\|_\mathrm{sp}  \|{\Upsilon}^{-1}\|_\mathrm{sp}\|\Upsilon-\Tilde{\Upsilon}\|_\mathrm{sp}\\
        &\leq (1+u^4\|\Upsilon^{-1}\|_\mathrm{sp} + 2u^2 \|\Upsilon^{-1}\|_\mathrm{sp}) \|\Upsilon-\Tilde{\Upsilon}\|_\mathrm{sp}.
    \end{align*}

    Apply \eqref{eq.hatm0-m0-2}, we then have:
    \begin{equation*}
        \mathbb{E}\|y_0 - \wh{y}_0\|_\mathrm{sp}^p = O(u^{6p} \mathbb{E}\|\Phi-\Tilde{\Phi}\|_\mathrm{sp}^p).
    \end{equation*}

    Besides, we have:
    \begin{equation*}
        \|\wh{y}_0\| \leq \|\Tilde{\Upsilon}_{V\backslash C}\| + \|\Tilde{\Upsilon}_{V\backslash C,C}\| \|\Tilde{\Upsilon}_C^{-1}\| \|\Tilde{\Upsilon}_{C,V\backslash C}\| \leq \|\Tilde{\Upsilon}\| + \|\Tilde{\Upsilon}\|^2 \|\Tilde{\Upsilon}_C^{-1}\| \leq u^6+u^2.
    \end{equation*}

    This has concluded the proof.
\end{proof}

Recall that $\Tilde{m}$ is the continuous process satisfying:
\begin{equation*}
    d\Tilde{m}_t = (\Tilde{A}_t \xi_t + \Tilde{B}_t \Tilde{m}_t)dt + \Tilde{C}_t d\xi_t, \quad \tilde{m}_0 = \hat{m}_0.
\end{equation*}

\begin{lemma}
    For $p\geq 2$, we have:
    \begin{equation*}
        \sup_{t\in[0,T]} \mathbb{E}\|m_t - \Tilde{m}_t\|^p_\mathrm{sp} = o\{\exp(e^{2puT}) \mathbb{E}\|\Phi-\bar{\Phi}\|_\mathrm{sp}^p\}.
    \end{equation*}
    \label{lem.m-tildem-consistency}
\end{lemma}

\begin{proof}
    Denote $z_t := \sup_{s\in[0,t]} \mathbb{E}\|m_s - \Tilde{m}_s\|^p_\mathrm{sp}$. We note $z$ is a monotone function. Therefore, it is integrable on $[0,T]$. In the following, we will use the Gronwall inequality (Lem.~\ref{lem.Gronwall}) to derive a bound for $z_T$. We first have:
    \small
    \begin{align*}
        &z_t = \sup_{s\in [0,t]} \mathbb{E}\left\|m_0-\Tilde{m}_0  + \int_0^s (A_\tau-\Tilde{A}_\tau)\xi_u + (B_\tau m_\tau-\Tilde{B}_\tau \Tilde{m}_\tau) d\tau + \int_0^s (C_\tau-\Tilde{C}_\tau) d\xi_\tau\right\|_\mathrm{sp}^p \\
        &\leq 2^{2p} \sup_{s\in [0,t]} z_0 + \mathbb{E}\left\|\int_0^s (A_\tau-\Tilde{A}_\tau)\xi_\tau + (B_\tau m_\tau-\Tilde{B}_\tau \Tilde{m}_\tau) d\tau\right\|_\mathrm{sp}^p  + \mathbb{E}\left\|\int_0^s (C_\tau-\Tilde{C}_\tau) d\xi_\tau\right\|_\mathrm{sp}^p \\
        &\leq 2^{3p} \sup_{s\in [0,t]} z_0 + \mathbb{E}\left\|\int_0^s (A_\tau-\Tilde{A}_\tau)\xi_\tau d\tau\right\|_\mathrm{sp}^p + \mathbb{E}\left\|\int_0^s (B_\tau m_\tau-\Tilde{B}_\tau \Tilde{m}_\tau) d\tau\right\|_\mathrm{sp}^p + \mathbb{E}\left\|\int_0^s (C_\tau-\Tilde{C}_\tau) d\xi_\tau\right\|_\mathrm{sp}^p.
    \end{align*}
    \normalsize

    We note that, according to Lem.~\ref{lem.tech-lem-p-norm-of-integral},
    \begin{align*}
        &\mathbb{E}\left\|\int_0^s (A_u-\Tilde{A}_\tau)\xi_\tau d\tau\right\|_\mathrm{sp}^p \leq T^{p-1} \mathbb{E} \int_0^s \|A_\tau-\Tilde{A}_\tau\|_\mathrm{sp}^p \|\xi_\tau \|_\mathrm{sp}^p d\tau, \\
        &\mathbb{E}\left\|\int_0^s (B_\tau m_\tau-\Tilde{B}_\tau \Tilde{m}_\tau) d\tau\right\|_\mathrm{sp}^p \leq 2^p T^{p-1}\mathbb{E}\int_0^t \|B_\tau -\Tilde{B}_\tau\|_\mathrm{sp}^p \|m_u\|_\mathrm{sp}^p + \|\Tilde{B}_\tau\|_\mathrm{sp}^p \|m_\tau-\Tilde{m}_\tau\|_\mathrm{sp}^p d\tau.
    \end{align*}

    Furthermore, according to Lem.~\ref{lem.tech-lem-p-norm-of-integral} and the Burkholder-Davis-Gundy (BDG) inequality (Lem.~\ref{lem.burkholder-david}),
    \begin{align*}
        &\mathbb{E}\left\|\int_0^s (C_\tau-\Tilde{C}_\tau) d\xi_\tau\right\|_\mathrm{sp}^p = \mathbb{E}\left\|\int_0^s (C_\tau-\Tilde{C}_\tau)\lambda_C(\tau)d\tau+ \sigma \int_0^s (C_\tau-\Tilde{C}_\tau)  dW_C(\tau)\right\|_\mathrm{sp}^p \\
        &\leq 2^p\mathbb{E}\left\|\int_0^s (C_\tau-\Tilde{C}_\tau)\lambda_C(\tau)d\tau \right\|_\mathrm{sp}^p + 2^p \sigma^p \mathbb{E}\left\|\int_0^s (C_\tau-\Tilde{C}_\tau)  dW_C(\tau)\right\|_\mathrm{sp}^p \\
        &\leq 2^p T^{p-1} \mathbb{E} \int_0^s \|C_\tau-\Tilde{C}_\tau\|_\mathrm{sp}^p \|\lambda_C(\tau)\|_\mathrm{sp}^p d\tau + 2^p \sigma^p K_p T^{\frac{p}{2}-1}  \mathbb{E} \int_0^s \|C_\tau-\Tilde{C}_\tau\|_\mathrm{sp}^p d\tau \\
        &\leq (2^p T^{p-1}+2^p \sigma^p K_p T^{\frac{p}{2}-1}) \mathbb{E}  \int_0^s \|C_\tau-\Tilde{C}_\tau\|_\mathrm{sp}^p (\|\lambda_C(\tau)\|_\mathrm{sp}^p+1) d\tau,
    \end{align*}
    where $K_p$ is a positive constant from the BDG inequality.

    Hence, we have:
    \begin{align*}
        z_t &\leq 2^{3p} z_0 +  2^{4p} (T^{p-1}+\sigma^p K_p T^{\frac{p}{2}-1}) \sup_{s\in[0,t]} \mathbb{E} \int_0^s \|A_\tau-\Tilde{A}_\tau\|_\mathrm{sp}^p \|\xi_\tau \|_\mathrm{sp}^p \\
        &\manyquad[2]+ \|B_\tau -\Tilde{B}_\tau\|_\mathrm{sp}^p \|m_\tau\|_\mathrm{sp}^p + \|\Tilde{B}_\tau m_\tau-\Tilde{B}_\tau \Tilde{m}_\tau\|_\mathrm{sp}^p + \|C_\tau-\Tilde{C}_\tau\|_\mathrm{sp}^p (\|\lambda_C(\tau)\|_\mathrm{sp}^p+1)d\tau\\
        &= 2^{3p} z_0 +  2^{4p} (T^{p-1}+\sigma^p K_p T^{\frac{p}{2}-1}) \mathbb{E} \int_0^t \|A_\tau-\Tilde{A}_\tau\|_\mathrm{sp}^p \|\xi_\tau \|_\mathrm{sp}^p \\
        &\manyquad[2]+ \|B_\tau -\Tilde{B}_\tau\|_\mathrm{sp}^p \|m_\tau\|_\mathrm{sp}^p + \|\Tilde{B}_\tau\|_\mathrm{sp}^p \|m_\tau-\Tilde{m}_\tau\|_\mathrm{sp}^p + \|C_\tau-\Tilde{C}_\tau\|_\mathrm{sp}^p (\|\lambda_C(\tau)\|_\mathrm{sp}^p+1)d\tau\\
        &\leq 2^{3p} z_0 +  2^{4p} (T^{p-1}+\sigma^p K_p T^{\frac{p}{2}-1}) \mathbb{E} \int_0^t \|A_\tau-\Tilde{A}_\tau\|_\mathrm{sp}^p \|\xi_\tau \|_\mathrm{sp}^p \\
        &\manyquad[2]+ \|B_\tau -\Tilde{B}_\tau\|_\mathrm{sp}^p \|m_\tau\|_\mathrm{sp}^p + \|C_\tau-\Tilde{C}_\tau\|_\mathrm{sp}^p (\|\lambda_C(\tau)\|_\mathrm{sp}^p+1) + U_B(u,T)^p \|m_\tau-\Tilde{m}_\tau\|_\mathrm{sp}^p d\tau.\\
    \end{align*}
    
    First, we note that the r.v.'s $\xi_\tau,m_\tau,\lambda_C(\tau)$ that rely on $j\in J_N$ are independent with $\|A_\tau-\Tilde{A}_\tau\|_\mathrm{sp}^p, \|B_\tau-\Tilde{B}_\tau\|_\mathrm{sp}^p, \|C_\tau-\Tilde{C}_\tau\|_\mathrm{sp}^p$ that rely on $J_N^c$. Besides, according to Coro.~\ref{coro.lp-bounded}, 
    \begin{equation*}
        K^\prime:=\max\left\{\sup_{u\in [0,T]} \mathbb{E} \|\xi_u \|_\mathrm{sp}^p, \sup_{u\in [0,T]} \mathbb{E} \|m_u \|_\mathrm{sp}^p, \sup_{u\in [0,T]} \mathbb{E} \|\lambda_C(u) \|_\mathrm{sp}^p + 1\right\}<\infty.
    \end{equation*}
    
    Second, denote $\zeta(\tau) = \mathbb{E} \left( \|A_\tau-\Tilde{A}_\tau\|^p_\mathrm{sp} + \|B_\tau-\Tilde{B}_\tau\|^p_\mathrm{sp} + \|C_\tau-\Tilde{C}_\tau\|^p_\mathrm{sp}\right)$, we have according to Lem.~\ref{lem.E|A-TildeA|-|B-TildeB|} that:
    \begin{align*}
        &\sup_{\tau\in [0,T]} \zeta(\tau) \leq 2^{4p}(u^p +\|\Phi\|_\mathrm{sp}^{2p}) \exp(2^p T^{p-1}\beta) \mathbb{E}\|y_0-\wh{y}_0\|^p_\mathrm{sp} \\
        &\manyquad[2]+ 2^{8p}\Big[\sup_t \|y_t\|_\mathrm{sp}^p + (1+u^p+\|\Phi\|_\mathrm{sp}^p)U_y(u,T)^p \\
        &\manyquad[2]+(u^p+\|\Phi\|_\mathrm{sp}^{2p})\{1+(u^p+\|\Phi\|_\mathrm{sp}^{p})\sup_t \|y_t\|^p_\mathrm{sp}\} T^p \exp(2^p T^{p-1}\beta) U_y(u,T)^p \Big] \mathbb{E} \|\Phi-\Tilde{\Phi} \|_\mathrm{sp}^p,
    \end{align*}
    which is finite due to Lem.~\ref{eq.hatm0-m0-2} and Prop.~\ref{prop.consistency-ou}, where
    \begin{equation*}
        \beta := 2^{3p+1} \|\Phi\|^p_\mathrm{sp} + 2^{4p+1}\left\{\|\Phi\|^{2p}_\mathrm{sp} \sup_t\|y_t\|^p_\mathrm{sp} + u^{2p} U_y(u,T)^p\right\}.
    \end{equation*}

    Finally, according to the Fubini's theorem,
    \begin{equation*}
        \mathbb{E}\int_0^t \|m_\tau-\Tilde{m}_\tau\|_\mathrm{sp}^p d\tau = \int_0^t  \mathbb{E} \|m_\tau-\Tilde{m}_\tau\|_\mathrm{sp}^p d\tau \leq \int_0^t z_\tau d\tau < \infty
    \end{equation*}
    are exchangeable since $z_t$ is integrable on $[0,T]$.

    Combining the above, we have:
    \small
    \begin{align*}
        z_t &\leq 2^{3p} z_0 +  2^{4p} (T^{p-1}+\sigma^p K_p T^{\frac{p}{2}-1}) \mathbb{E} \int_0^t \|A_\tau-\Tilde{A}_\tau\|_\mathrm{sp}^p \|\xi_\tau \|_\mathrm{sp}^p \\
        &\manyquad[2]+ \|B_\tau -\Tilde{B}_\tau\|_\mathrm{sp}^p \|m_\tau\|_\mathrm{sp}^p + \|C_\tau-\Tilde{C}_\tau\|_\mathrm{sp}^p (\|\lambda_C(\tau)\|_\mathrm{sp}^p+1) + U_B(u,T)^p \|m_\tau-\Tilde{m}_\tau\|_\mathrm{sp}^p d\tau\\
        &= 2^{3p} z_0 +  2^{4p} (T^{p-1}+\sigma^p K_p T^{\frac{p}{2}-1}) \int_0^t \mathbb{E}\|A_\tau-\Tilde{A}_\tau\|_\mathrm{sp}^p  \mathbb{E}\|\xi_\tau \|_\mathrm{sp}^p \\
        &\manyquad[2]+  \mathbb{E}\|B_\tau -\Tilde{B}_\tau\|_\mathrm{sp}^p  \mathbb{E}\|m_\tau\|_\mathrm{sp}^p +  \mathbb{E}\|C_\tau-\Tilde{C}_\tau\|_\mathrm{sp}^p (\|\lambda_C(\tau)\|_\mathrm{sp}^p+1) + U_B(u,T)^p  \mathbb{E}\|m_\tau-\Tilde{m}_\tau\|_\mathrm{sp}^p d\tau \\
        &\leq 2^{3p} z_0 +  2^{4p} (T^{p-1}+\sigma^p K_p T^{\frac{p}{2}-1}) \int_0^t K^\prime\zeta(\tau) + U_B(u,T)^p z_\tau d\tau \\
        &\leq 2^{3p} z_0 + 2^{4p} (T^{p}+\sigma^p K_p T^{\frac{p}{2}}) K^\prime\sup_{\tau\in[0,T]}\zeta(\tau)  + 2^{4p} (T^{p-1}+\sigma^p K_p T^{\frac{p}{2}-1}) \int_0^t U_B(u,T)^p z_\tau d\tau .
    \end{align*}
    \normalsize

    Denote $\bar{K}:=2^{4p} (T^{p}+\sigma^p K_p T^{\frac{p}{2}})$, then
    \begin{equation*}
        z_t \leq 2^{3p} z_0 + \bar{K}K^\prime \sup_{t\in[0,T]} \zeta(t) + \bar{K} \int_0^t U_B(u,T)^p z_\tau d\tau.
    \end{equation*}

    Apply the Gronwall inequality (Lem.~\ref{lem.Gronwall}), we obtain that:
    \begin{equation*}
        z_t \leq (2^{3p} z_0 + \bar{K}K^\prime \sup_{t\in[0,T]} \zeta(t))\exp\{\bar{K}U_B(u,T)^p t\}.
    \end{equation*}

    Take $t=T$, we then have:
    \begin{equation*}
        \sup_{t\in[0,T]} \mathbb{E}\|m_t - \Tilde{m}_t\|^p_\mathrm{sp} \leq \{2^{3p} \mathbb{E}\|m_0 - \Tilde{m}_0\|^p_\mathrm{sp}  + \bar{K}K^\prime \sup_{t\in[0,T]} \zeta(t)\}\exp\{\bar{K}U_B(u,T)^p T\}.
    \end{equation*}

    Recall that we have shown in Lem.~\ref{lem.consistency-initial-value-estimate},
    \begin{equation*}
        \mathbb{E}\|\wh{m}_0-m_0\|^p_\mathrm{sp} = O(u^{6p} \mathbb{E}\|\Tilde{\Phi}-\Phi\|^p_\mathrm{sp}).
    \end{equation*}

    Furthermore, according to Lem.~\ref{lem.consistency-initial-value-estimate} and Lem.~\ref{coro.tech-lem-Uy-UA-UB-UC}, it is not hard to show that:
    \begin{equation*}
        \sup_{t\in[0,T]} \zeta(t) = O\left[\left\{u^{18p} e^{2puT} + u^{7p} \exp(2^p T^{p-1} \beta)\right\}\mathbb{E}\|\Phi-\Tilde{\Phi}\|_\mathrm{sp}^p\right],
    \end{equation*}
    where $\beta\leq \underline{K}u^{8p} e^{puT}$ for some $\underline{K}=\underline{K}(p,\Phi,\sigma)>0$. Besides, according to Coro.~\ref{coro.tech-lem-Uy-UA-UB-UC},
    \begin{equation*}
        U_B(u,T)^p = O(u^{8p} e^{puT}).
    \end{equation*}

    Combine the above, we have:
    \begin{align*}
        \sup_{s\in[0,t]} \mathbb{E}\|m_s - \Tilde{m}_s\|^p_\mathrm{sp} &= O\left[\left\{u^{18p} e^{2puT} + u^{7p} \exp(2^p T^{p-1} \beta + \bar{K}U_B(u,T)^p T)\right\} \mathbb{E}\|\Phi-\Tilde{\Phi}\|_\mathrm{sp}^p\right]\\
        &= O\left\{\left(u^{18p} e^{2puT} + u^{7p} e^{\gamma}\right) \mathbb{E}\|\Phi-\Tilde{\Phi}\|_\mathrm{sp}^p\right\},
    \end{align*}
    where $\gamma:=2^p T^{p-1} \beta + \bar{K}U_B(u,T)^p T \leq K u^{8p} e^{puT}$ for some $K=K(p,\Phi,\sigma,T)>0$.

    In this regard, we have $\sup_{s\in[0,t]} \mathbb{E}\|m_s - \Tilde{m}_s\|^p_\mathrm{sp}$ is dominated by
    \begin{equation*}
        \sup_{s\in[0,t]} \mathbb{E}\|m_s - \Tilde{m}_s\|^p_\mathrm{sp} = o\{\exp(e^{2puT}) \mathbb{E}\|\Phi-\bar{\Phi}\|_\mathrm{sp}^p\}.
    \end{equation*}
\end{proof}

\begin{lemma}
    For $u> 1, p\geq 1$, we have:
    \small
    \begin{align*}
        &\sup_{t\in [0,T]} \max\{\mathbb{E} \|A_t-\Tilde{A}_t\|^p_\mathrm{sp}, \mathbb{E} \|B_t-\Tilde{B}_t\|^p_\mathrm{sp}\}\\
        &\quad \leq 2^{3p} \|\Phi\|_\mathrm{sp}^{2p} \exp(2^p T^{p-1}\beta) \mathbb{E}(\|y_0-\wh{y}_0\|^p_\mathrm{sp} + \alpha T^p)  + 2^{3p} (1+u^p+\|\Phi\|_\mathrm{sp}^p)U_y(u,T)^p \mathbb{E} \|\Phi-\Tilde{\Phi} \|_\mathrm{sp}^p,\\
        &\sup_{t\in [0,T]} \mathbb{E} \|C_t-\Tilde{C}_t\|^p_\mathrm{sp}  \\
        &\quad \leq 2^p \sup_t \|y_t\|_\mathrm{sp}^p \mathbb{E}\|\Phi-\Tilde{\Phi}\|_\mathrm{sp}^p + 2^{2p} u^p \exp(2^p T^{p-1}\beta)  \mathbb{E}(\|y_0-\wh{y}_0\|^p_\mathrm{sp} +  \alpha T^p),
    \end{align*}
    \normalsize
    where
    \begin{align*}
        &\alpha := \{2^{3p+1}+2^{4p}\sup_t \|y_t\|^p_\mathrm{sp}(u^p + \|\Phi\|_\mathrm{sp}^p)\} U_y(u,T)^p \|\Phi-\Tilde{\Phi}\|^p_\mathrm{sp}\\
        &\beta := 2^{3p+1} \|\Phi\|^p_\mathrm{sp} + 2^{4p+1}\left\{\|\Phi\|^{2p}_\mathrm{sp} \sup_t\|y_t\|^p_\mathrm{sp} + u^{2p} U_y(u,T)^p\right\}.
    \end{align*}
    \label{lem.E|A-TildeA|-|B-TildeB|}
\end{lemma}

\begin{proof}
    Recall that $A_t = \Phi_{V\backslash C,C}-y_t \Phi_{C,V\backslash C}^\top \Phi_C$, then we have:
    \small
    \begin{align*}
        \sup_{t\in [0,T]} \mathbb{E} \|A_t-\Tilde{A}_t\|^p_\mathrm{sp} &=  \sup_{t\in [0,T]}  \mathbb{E} \left\|\left(\Phi_{V\backslash C,C}- \Tilde{\Phi}_{V\backslash C,C}\right) - \left(y_t \Phi_{C,V\backslash C}^\top {\Phi}_C -\tilde{y}_t \Tilde{\Phi}_{C,V\backslash C}^\top \Tilde{\Phi}_C\right)\right\|_\mathrm{sp}^p \\
        &\leq 2^p \sup_{t\in [0,T]} \left( \mathbb{E} \|\Phi- \Tilde{\Phi}\|_\mathrm{sp}^p + \mathbb{E}\|y_t \Phi_{C,V\backslash C}^\top {\Phi}_C -\tilde{y}_t \Tilde{\Phi}_{C,V\backslash C}^\top \Tilde{\Phi}_C \|_\mathrm{sp}^p\right).
    \end{align*}
    \normalsize

    We note that:
    \small
    \begin{align*}
        &\mathbb{E}\|y_t \Phi_{C,V\backslash C}^\top {\Phi}_C -\tilde{y}_t \Tilde{\Phi}_{C,V\backslash C}^\top \Tilde{\Phi}_C \|_\mathrm{sp}^p \\
        &\manyquad[5] = \mathbb{E} \|y_t \Phi_{C,V\backslash C}^\top {\Phi}_C - \Tilde{y}_t \Phi_{C,V\backslash C}^\top {\Phi}_C + \Tilde{y}_t \Phi_{C,V\backslash C}^\top {\Phi}_C -\tilde{y}_t \Tilde{\Phi}_{C,V\backslash C}^\top \Tilde{\Phi}_C \|_\mathrm{sp}^p \\
        &\manyquad[5] \leq 2^p \mathbb{E} \|y_t - \Tilde{y}_t \|^p_\mathrm{sp}  \|\Phi_{C,V\backslash C}^\top {\Phi}_C \|_\mathrm{sp}^p+ 2^p \mathbb{E} \|\Tilde{y}_t \|_\mathrm{sp}^p \| \Phi_{C,V\backslash C}^\top{\Phi}_C - \Tilde{\Phi}_{C,V\backslash C}^\top \Tilde{\Phi}_C \|_\mathrm{sp}^p \\
        &\manyquad[5] \leq 2^p \|\Phi\|_\mathrm{sp}^{2p} \mathbb{E} \|y_t - \Tilde{y}_t \|^p_\mathrm{sp} + 2^{2p} \mathbb{E} \|\Tilde{y}_t \|_\mathrm{sp}^p \|\Phi-\Tilde{\Phi} \|_\mathrm{sp}^p (\|\Phi\|_\mathrm{sp}^p+\|\Tilde{\Phi}\|_\mathrm{sp}^p)\\
        &\manyquad[5] \overset{(1)}{\leq} 2^p \|\Phi\|_\mathrm{sp}^{2p} \mathbb{E} \|y_t - \Tilde{y}_t \|^p_\mathrm{sp} + 2^{2p} \{(u-1)^p+\|\Phi\|_\mathrm{sp}^p\} \mathbb{E} \|\Tilde{y}_t \|_\mathrm{sp}^p \|\Phi-\Tilde{\Phi} \|_\mathrm{sp}^p \\
        &\manyquad[5] \overset{(2)}{\leq} 2^p \|\Phi\|_\mathrm{sp}^{2p} \mathbb{E} \|y_t - \Tilde{y}_t \|^p_\mathrm{sp} + 2^{2p} \{u^p+\|\Phi\|_\mathrm{sp}^p\} U_y(u,T)^p \mathbb{E} \|\Phi-\Tilde{\Phi} \|_\mathrm{sp}^p,
    \end{align*}
    \normalsize
    where ``(1)'' is due to line 4 of Alg.~\ref{alg:estimation-ou} and ``(2)'' is due to Lem.~\ref{coro.tech-lem-Uy-UA-UB-UC}.

    Furthermore, in Prop.~\ref{prop.up-bound-yt}, substitute $y_1=y, y_2=\Tilde{y}, a_1=\Phi_{V\backslash C}, a_2=\Tilde{\Phi}_{V\backslash C},  b_1=\Phi_{C,V\backslash C}^\top \Phi_{C,V\backslash C}, b_2=\Tilde{\Phi}_{C,V\backslash C}^\top \Tilde{\Phi}_{C,V\backslash C}$, we obtain:
    \begin{equation*}
        \sup_{t\in [0,T]} \mathbb{E}\|y_t-\Tilde{y}_t\|^p_\mathrm{sp} \leq 2^p\mathbb{E}(\|y_0-\wh{y}_0\|^p_\mathrm{sp} +  \alpha T^p) \exp(2^p T^{p-1}\beta),
    \end{equation*}
    where
    \begin{align*}
        &\alpha := \{2^{3p+1}+2^{4p}\sup_t \|y_t\|^p_\mathrm{sp}(u^p + \|\Phi\|_\mathrm{sp}^p)\} U_y(u,T)^p \|\Phi-\Tilde{\Phi}\|^p_\mathrm{sp}\\
        &\beta := 2^{3p+1} \|\Phi\|^p_\mathrm{sp} + 2^{4p+1}\left\{\|\Phi\|^{2p}_\mathrm{sp} \sup_t\|y_t\|^p_\mathrm{sp} + u^{2p} U_y(u,T)^p\right\}.
    \end{align*}

    It follows that:
    \small
    \begin{align*}
         &\mathbb{E}\|y_t \Phi_{C,V\backslash C}^\top {\Phi}_C -\tilde{y}_t \Tilde{\Phi}_{C,V\backslash C}^\top \Tilde{\Phi}_C \|_\mathrm{sp}^p \\
         &\manyquad[2] \leq 2^p \|\Phi\|_\mathrm{sp}^{2p} \mathbb{E} \|y_t - \Tilde{y}_t \|^p_\mathrm{sp} + 2^{2p} \{u^p+\|\Phi\|_\mathrm{sp}^p\} U_y(u,T)^p \mathbb{E} \|\Phi-\Tilde{\Phi} \|_\mathrm{sp}^p\\
         &\manyquad[2] \leq 2^{2p} \|\Phi\|_\mathrm{sp}^{2p} \exp(2^p T^{p-1}\beta) \mathbb{E}(\|y_0-\wh{y}_0\|^p_\mathrm{sp} + \alpha T^p)  + 2^{2p} (u^p+\|\Phi\|_\mathrm{sp}^p) U_y(u,T)^p \mathbb{E} \|\Phi-\Tilde{\Phi} \|_\mathrm{sp}^p
    \end{align*}
    \normalsize

    Hence, we have:
    \small
    \begin{align*}
        &\sup_{t\in [0,T]} \mathbb{E} \|A_t-\Tilde{A}_t\|^p_\mathrm{sp} \leq 2^p \sup_{t\in [0,T]} \left( \mathbb{E} \|\Phi- \Tilde{\Phi}\|_\mathrm{sp}^p + \mathbb{E}\|y_t \Phi_{C,V\backslash C}^\top {\Phi}_C -\tilde{y}_t \Tilde{\Phi}_{C,V\backslash C}^\top \Tilde{\Phi}_C \|_\mathrm{sp}^p\right)\\
        &\quad \leq 2^{3p} \|\Phi\|_\mathrm{sp}^{2p} \exp(2^p T^{p-1}\beta) \mathbb{E}(\|y_0-\wh{y}_0\|^p_\mathrm{sp} + \alpha T^p)  + 2^{3p} (1+u^p+\|\Phi\|_\mathrm{sp}^p)U_y(u,T)^p \mathbb{E} \|\Phi-\Tilde{\Phi} \|_\mathrm{sp}^p.
    \end{align*}
    \normalsize

    In a similar way, we can show that $\sup_{t\in [0,T]} \mathbb{E} \|A_t-\Tilde{A}_t\|^p_\mathrm{sp}$ enjoys the same bounds as $\sup_{t\in [0,T]} \mathbb{E} \|A_t-\Tilde{A}_t\|^p_\mathrm{sp}$, and that:
    \small
    \begin{equation*}
        \sup_{t\in [0,T]} \mathbb{E} \|C_t-\Tilde{C}_t\|^p_\mathrm{sp} \leq 2^p \sup_t \|y_t\|_\mathrm{sp}^p \mathbb{E}\|\Phi-\Tilde{\Phi}\|_\mathrm{sp}^p + 2^{2p} u^p \exp(2^p T^{p-1}\beta)  \mathbb{E}(\|y_0-\wh{y}_0\|^p_\mathrm{sp} +  \alpha T^p).
    \end{equation*}
    \small
\end{proof}

\begin{lemma}
    For $p\geq 2$, assume $\sup_{t} \mathbb{E}\| m_t - \Tilde{m}_t \|^p_\mathrm{sp}=o(1)$, then we have:
    \begin{equation*}
        \sup_{t\in[0,T]} \mathbb{E}\|\Tilde{m}_t - \hat{m}_t\|^p_\mathrm{sp} = o\{\delta^{\frac{p}{2}}\exp(e^{2puT})\}.
    \end{equation*}
    \label{lem.tildem-hatm-consistency}
\end{lemma}

\begin{proof}
    Recall that the Euler-Maruyama estimator $\hat{m}$ satisfies:
    \begin{equation*}
        \wh{m}_t = \int_0^{\underline{t}} \Tilde{A}_{\underline{s}} \xi_{\underline{s}}+\Tilde{B}_{\underline{s}}\wh{m}_{\underline{s}} ds + \int_0^{\underline{t}} \Tilde{C}_{\underline{s}} d\xi_s,
    \end{equation*}
    where $\underline{t}:=k\delta$ for $t\in [k\delta,(k+1)\delta)$.

    Denote $z_t := \sup_{s\in[0,t]} \mathbb{E}\|\Tilde{m}_s - \hat{m}_s\|^p_\mathrm{sp}$. We note $z$ is a monotone function. Therefore, it is integrable on $[0,T]$. In the following, we will use the Gronwall inequality (Lem.~\ref{lem.Gronwall}) to derive a bound for $z_T$. We first have: 
    \begin{equation}
        z_t \leq 4^p\sup_{s\in [0,t]} \{I_1(s)+I_2(s)+I_3(s)+I_4(s)\},
        \label{eq.zt<=I1+...+I4}
    \end{equation}
    where
    \small
    \begin{align*}
        &I_1(s) = \mathbb{E} \left\|\int_0^{\underline{s}} \Tilde{A}_\tau \xi_\tau - \Tilde{A}_{\underline{\tau}} \xi_{\underline{\tau}} + \Tilde{B}_\tau \Tilde{m}_\tau - \Tilde{B}_{\underline{\tau}} \wh{m}_{\underline{\tau}} d\tau  \right\|^p_\mathrm{sp}, \,\, I_2(s) = \mathbb{E} \left\| \int_0^{\underline{s}} \Tilde{C}_\tau - \Tilde{C}_{\underline{\tau}} d\xi_\tau\right\|^p_\mathrm{sp}, \\
        &I_3(s) = \mathbb{E} \left\| \int_{\underline{s}}^s \Tilde{A}_\tau \xi_\tau + \Tilde{B}_\tau \Tilde{m}_\tau d\tau \right\|^p_\mathrm{sp}, \,\,\, I_4(s) = \mathbb{E} \left\| \int_{\underline{s}}^s \Tilde{C}_\tau d \xi_\tau\right\|^p_\mathrm{sp}.
    \end{align*}
    \normalsize

    In the following, we analyze $I_1,I_2,I_3,I_4$, respectively. We first discuss the bounds of $I_3$ and $I_4$ since they are simpler. For $I_3$, we have:
    \small
    \begin{align*}
        I_3(s) &= \mathbb{E} \left\| \int_{\underline{s}}^s \Tilde{A}_\tau \xi_\tau + \Tilde{B}_\tau \Tilde{m}_\tau d\tau \right\|^p_\mathrm{sp} \leq \delta^{p-1} \mathbb{E} \int_{\underline{s}}^s \left\| \Tilde{A}_\tau \xi_\tau + \Tilde{B}_\tau \Tilde{m}_\tau \right\|^p_\mathrm{sp} d\tau \\
        &\leq \delta^{p-1} \mathbb{E} \int_{\underline{s}}^s \| \Tilde{A}_\tau\|^p_\mathrm{sp} \| \xi_\tau \|^p_\mathrm{sp} + \|\Tilde{B}_\tau\|^p_\mathrm{sp} \| \Tilde{m}_\tau \|^p_\mathrm{sp} d\tau \\
        &\leq  \delta^{p-1} \mathbb{E} \int_{\underline{s}}^s U_A(u,T)^p \| \xi_\tau \|^p_\mathrm{sp} + U_B(u,T)^p 2^p(\| {m}_\tau \|^p_\mathrm{sp}+\| m_\tau - \Tilde{m}_\tau \|^p_\mathrm{sp})d\tau\\
        &\leq \delta^{p-1} \int_{\underline{s}}^s U_A(u,T)^p \mathbb{E}\| \xi_\tau \|^p_\mathrm{sp} + 2^p U_B(u,T)^p (\mathbb{E}\| {m}_\tau \|^p_\mathrm{sp}+\mathbb{E}\| m_\tau - \Tilde{m}_\tau \|^p_\mathrm{sp})d\tau\\
        &\leq \delta^p \Bigg\{ U_A(u,T)^p \sup_{\tau\in[0,T]} \mathbb{E}\| \xi_\tau \|^p_\mathrm{sp} + 2^p U_B(u,T)^p \Big(\sup_\tau \mathbb{E}\| {m}_\tau \|^p_\mathrm{sp}+\sup_{\tau\in[0,T]} \mathbb{E}\| m_\tau - \Tilde{m}_\tau \|^p_\mathrm{sp}\Big)\Bigg\} \\
        &=: \delta^p K_3(p,\Phi,\sigma,u,T), \numberthis \label{eq.bound-I3}
    \end{align*}
    \normalsize
    where according to Thm.~\ref{thm.tech-lem-Lp-boundedness-of-solution}, Coro.~\ref{coro.lp-bounded}, and Lem.~\ref{lem.m-tildem-consistency}, $K_3$ is a finite positive number.

    For $I_4$, we have according to the Burkholder-Davis-Gundy (BDG) inequality (Lem.~\ref{lem.burkholder-david}),
    \small
    \begin{align*}
        I_4(s) &= \mathbb{E} \left\| \int_{\underline{s}}^s \Tilde{C}_\tau d \xi_\tau\right\|^p_\mathrm{sp} = \mathbb{E} \left\| \int_{\underline{s}}^s \Tilde{C}_\tau \lambda_C(\tau) d\tau + \int_{\underline{s}}^s \Tilde{C}_\tau \sigma dW_C(\tau) \right\|^p_\mathrm{sp} \\
        &\leq 2^p \mathbb{E} \left\| \int_{\underline{s}}^s \Tilde{C}_\tau \lambda_C(\tau) d\tau\right\|^p_\mathrm{sp} + 2^p \mathbb{E} \left\| \int_{\underline{s}}^s \Tilde{C}_\tau \sigma dW_C(\tau) \right\|^p_\mathrm{sp} \\
        &\leq 2^p \delta^{p-1} \mathbb{E} \int_{\underline{s}}^s \|\Tilde{C}_\tau \|^p_\mathrm{sp} \| \lambda_C(\tau) \|^p_\mathrm{sp} d\tau + 2^p \sigma^p  \delta^{\frac{p}{2}-1} \mathbb{E} \int_{\underline{s}}^s \|\Tilde{C}_\tau\|_\mathrm{sp}^p d\tau \\
        &\leq 2^p \delta^p U_C(u,T)^p \sup_{\tau\in [0,T]}\mathbb{E} \| \lambda_C(\tau) \|^p_\mathrm{sp} + 2^p \sigma^p  \delta^{\frac{p}{2}} U_C(u,T)^p \\
        &= \delta^{\frac{p}{2}} \Bigg\{2^p U_C(u,T)^p \Big(\sup_{\tau\in [0,T]}\mathbb{E} \| \lambda_C(\tau) \|^p_\mathrm{sp}+\sigma^p\Big) \Bigg\} =:  \delta^{\frac{p}{2}} K_4(p,\Phi,\sigma,u,T) \quad \text{for $\delta\leq 1$}, \numberthis \label{eq.bound-I4}
    \end{align*}
    \normalsize
    where according to Coro.~\ref{coro.lp-bounded}, $K_4$ is a finite positive number.

    We now discuss $I_1$ and $I_2$. For $I_1$, we have:
    \small
    \begin{align*}
        &I_1(s) = \mathbb{E} \left\|\int_0^{\underline{s}} \Tilde{A}_\tau \xi_\tau - \Tilde{A}_{\underline{\tau}} \xi_{\underline{\tau}} + \Tilde{B}_\tau \Tilde{m}_\tau - \Tilde{B}_{\underline{\tau}} \wh{m}_{\underline{\tau}} d\tau  \right\|^p_\mathrm{sp} \\
        &= \mathbb{E} \left\|\int_0^{\underline{s}} \Tilde{A}_\tau \xi_\tau - \Tilde{A}_\tau \xi_{\underline{\tau}} + \Tilde{A}_\tau \xi_{\underline{\tau}} - \Tilde{A}_{\underline{\tau}} \xi_{\underline{\tau}} + \Tilde{B}_\tau \Tilde{m}_\tau - \Tilde{B}_\tau \Tilde{m}_{\underline{\tau}} + \Tilde{B}_\tau \Tilde{m}_{\underline{\tau}} - \Tilde{B}_{\underline{\tau}} \Tilde{m}_{\underline{\tau}} + \Tilde{B}_{\underline{\tau}} \Tilde{m}_{\underline{\tau}} -\Tilde{B}_{\underline{\tau}} \wh{m}_{\underline{\tau}} d\tau  \right\|^p_\mathrm{sp} \\
        &\leq 2^{4p} T^{p-1} \mathbb{E} \int_0^{\underline{s}} \|\Tilde{A}_\tau\|^p_\mathrm{sp} \|\xi_\tau - \xi_{\underline{\tau}}\|^p_\mathrm{sp} + \|\Tilde{A}_\tau - \Tilde{A}_{\underline{\tau}}\|^p_\mathrm{sp} \|\xi_{\underline{\tau}}\|^p_\mathrm{sp} + \|\Tilde{B}_\tau\|^p_\mathrm{sp} \| \Tilde{m}_\tau - \Tilde{m}_{\underline{\tau}} \|^p_\mathrm{sp} \\
        &\manyquad[3] + \|\Tilde{B}_\tau  - \Tilde{B}_{\underline{\tau}} \|^p_\mathrm{sp} \|\Tilde{m}_{\underline{\tau}}\|^p_\mathrm{sp} + \|\Tilde{B}_{\underline{\tau}} \|^p_\mathrm{sp} \| \Tilde{m}_{\underline{\tau}} -\wh{m}_{\underline{\tau}} \|^p_\mathrm{sp} d\tau\\
        &\leq 2^{4p} T^{p-1} \mathbb{E} \int_0^{\underline{s}} U_A(u,T)^p \|\xi_\tau - \xi_{\underline{\tau}}\|^p_\mathrm{sp} + \|\Tilde{A}_\tau - \Tilde{A}_{\underline{\tau}}\|^p_\mathrm{sp} \|\xi_{\underline{\tau}}\|^p_\mathrm{sp} + U_B(u,T)^p  \| \Tilde{m}_\tau - \Tilde{m}_{\underline{\tau}} \|^p_\mathrm{sp} \\
        &\manyquad[3] + \|\Tilde{B}_\tau  - \Tilde{B}_{\underline{\tau}} \|^p_\mathrm{sp} \|\Tilde{m}_{\underline{\tau}}\|_\mathrm{sp}^p + U_B(u,T)^p \| \Tilde{m}_{\underline{\tau}} -\wh{m}_{\underline{\tau}} \|^p_\mathrm{sp} d\tau.
    \end{align*}
    \normalsize
    
    According to Coro.~\ref{lem.smooth}, there exists $K^\prime=K^\prime(p,\Phi,\sigma)$ such that:
    \begin{equation*}
        \mathbb{E} \|\xi_\tau - \xi_{\underline{\tau}}\|^p_\mathrm{sp} \leq K^\prime \delta^{\frac{p}{2}}.
    \end{equation*}

    Due to Thm.~\ref{thm.tech-lem-Lp-boundedness-of-solution}, we have $ \sup_\tau \mathbb{E}\|\xi_\tau \|^p_\mathrm{sp} < \infty$. According to Lem.~\ref{lem.smooth-of-tilde-processes}, we have:
    \begin{equation*}
        \max\{\|\Tilde{A}_\tau - \Tilde{A}_{\underline{\tau}}\|_\mathrm{sp}^p, \|\Tilde{B}_\tau - \Tilde{B}_{\underline{\tau}}\|_\mathrm{sp}^p\} \leq \delta^p  4^p u^{2p} \left\{1 + 2u^p U_y(u,T)^p + u^{2p} U_y(u,T)^{2p}\right\}
    \end{equation*}
    
    Moreover, repeat to the analyses of $I_3$ and $I_4$, with $s,\underline{s}$ replaced with $\tau,\underline{\tau}$, we have:
    \begin{equation*}
        \mathbb{E} \|\Tilde{m}_{{\tau}} -\Tilde{m}_{\underline{\tau}} \|^p_\mathrm{sp} \leq \delta^p K_3 + \delta^{\frac{p}{2}} K_4.
    \end{equation*}
    
    Further, we note that:
    \begin{equation*}
        \mathbb{E}\|\Tilde{m}_{\underline{\tau}}\|_\mathrm{sp}^p \leq 2^p\left(\sup_\tau \mathbb{E}\|{m}_{\underline{\tau}}\|_\mathrm{sp}^p + \sup_\tau \mathbb{E}\|{m}_{\underline{\tau}} - \Tilde{m}_{\underline{\tau}}\|_\mathrm{sp}^p\right).
    \end{equation*}

    Now, denote $\underline{K}:=\sup_\tau \mathbb{E}\|\xi_\tau \|^p_\mathrm{sp} + \sup_{\tau}\mathbb{E}\| {m}_\tau \|^p_\mathrm{sp}+ \sup_{\tau} \mathbb{E}\| m_\tau - \Tilde{m}_\tau \|^p_\mathrm{sp}$. Combining the above analyses, we have, for $\delta\geq 1$,
    \small
    \begin{align*}
        I_1(s) &\leq 2^{4p} T^{p-1} \int_0^{\underline{s}} U_A(u,T)^p K^\prime \delta^{\frac{p}{2}} + U_B(u,T)^p (\delta^p K_3 + \delta^{\frac{p}{2}} K_4) \\
        &\manyquad[3] +\delta^p  2^{3p} u^{2p} \left\{1 + 2u^p U_y(u,T)^p + u^{2p} U_y(u,T)^{2p}\right\} \underline{K} + U_B(u,T)^p z_\tau d\tau \\
        &\leq \delta^{\frac{p}{2}}  2^{4p} T^{p} \Big[K^\prime  U_A(u,T)^p + U_B(u,T)^p (K_3 + K_4) \\
        &\manyquad[3] + 2^{3p} u^{2p} \left\{1 + 2u^p U_y(u,T)^p + u^{2p} U_y(u,T)^{2p}\right\} \underline{K}\Big] + 2^{4p} T^{p-1} U_B(u,T)^p \int_0^s z_\tau d\tau \\
        &=:\delta^{\frac{p}{2}} K_1 + 2^{4p} T^{p-1} U_B(u,T)^p \int_0^s z_\tau d\tau, \numberthis \label{eq.bound-I1}
    \end{align*}
    \normalsize
    where according to Lem.~\ref{lem.m-tildem-consistency}, $K_1=K_1(p,\Phi,\sigma,u,T)$ is a finite positive number.

    Finally, for $I_2$, we have, according to the BDG inequality (Lem.~\ref{lem.burkholder-david}),    
    \small
    \begin{align*}
        I_2(s) &= \mathbb{E} \left\| \int_0^{\underline{s}} (\Tilde{C}_\tau - \Tilde{C}_{\underline{\tau}}) d\xi_\tau\right\|^p_\mathrm{sp} = \mathbb{E} \left\| \int_0^{\underline{s}} (\Tilde{C}_\tau - \Tilde{C}_{\underline{\tau}})\lambda_C d\tau +  \int_0^{\underline{s}} (\Tilde{C}_\tau - \Tilde{C}_{\underline{\tau}}) \sigma dW_C(\tau)\right\|^p_\mathrm{sp} \\
        &\leq 2^p \mathbb{E} \left\| \int_0^{\underline{s}} (\Tilde{C}_\tau - \Tilde{C}_{\underline{\tau}})\lambda_C d\tau\right\|^p_\mathrm{sp}  + 2^p \mathbb{E} \left\|  \int_0^{\underline{s}} (\Tilde{C}_\tau - \Tilde{C}_{\underline{\tau}}) \sigma dW_C(\tau)\right\|^p_\mathrm{sp} \\
        &\leq 2^p T^{p-1} \mathbb{E} \int_0^{\underline{s}} \|\Tilde{C}_\tau - \Tilde{C}_{\underline{\tau}}\|^p_\mathrm{sp} \|\lambda_C(\tau)\|^p_\mathrm{sp} d\tau + 2^p T^{\frac{p}{2}-1} \mathbb{E} \int_0^{\underline{s}} \|\Tilde{C}_\tau - \Tilde{C}_{\underline{\tau}}\|^p_\mathrm{sp} \sigma^p du
    \end{align*}
    \normalsize

    According to Lem.~\ref{lem.smooth-of-tilde-processes}, we have:
    \begin{equation*}
        \|\Tilde{C}_\tau - \Tilde{C}_{\underline{\tau}}\|_\mathrm{sp}^p \leq 4^p \delta^p u^{p} \left\{1 + 2u^p U_y(u,T)^p + u^{2p} U_y(u,T)^{2p}\right\}.
    \end{equation*}

    According to Coro.~\ref{coro.lp-bounded}, we have $\sup_\tau \mathbb{E} \|\lambda_C(\tau)\|^p_\mathrm{sp} < \infty$. Therefore, we have:
    \begin{align*}
        I_2(s) &\leq \left\{2^p T^{\frac{p}{2}} \sigma^p + 2^p T^p \sup_t \mathbb{E}\|\lambda_C(t)\|_\mathrm{sp}^p\right\} \sup_\tau \|\Tilde{C}_\tau - \Tilde{C}_{\underline{\tau}}\|_\mathrm{sp}^p\\
        &\leq \delta^p 2^{3p} u^{p} \left\{T^{\frac{p}{2}} \sigma^p + T^p \sup_t \mathbb{E}\|\lambda_C(t)\|_\mathrm{sp}^p\right\}  \left\{1 + 2u^p U_y(u,T)^p + u^{2p} U_y(u,T)^{2p}\right\} \\
        &:= \delta^p K_2(p,\Phi,\sigma,u,T). \numberthis \label{eq.bound-I2}
    \end{align*}
    where according to Lem.~\ref{lem.m-tildem-consistency}, $K_1=K_1(p,\Phi,\sigma,u,T)$ is a finite positive number.

    Combine \eqref{eq.zt<=I1+...+I4}-\eqref{eq.bound-I2}, we have for $\delta\leq 1$,
    \begin{align*}
        z_t &\leq 4^p\sup_{s\in [0,t]} \{I_1(s)+I_2(s)+I_3(s)+I_4(s)\}\\
        &\leq 4^p\sup_{s\in [0,t]} \{\delta^{\frac{p}{2}} K_1 + 2^{4p} T^{p-1} U_B(u,T)^p \int_0^s z_\tau d\tau + \delta^p K_2 + \delta^p K_3 + \delta^{\frac{p}{2}} K_4\}\\
        &\leq 4^p (K_1+K_2+K_3+K_4) \delta^{\frac{p}{2}} + 2^{6p} T^{p-1} U_B(u,T)^p \int_0^s z_\tau d\tau.
    \end{align*}

    Then, apply the Gronwall inequality (Lem.~\ref{lem.Gronwall}), we have:
    \begin{equation*}
        z_t \leq 4^p (K_1+K_2+K_3+K_4) \delta^{\frac{p}{2}} \exp\{2^{6p} T^{p-1} U_B(u,T)^p t\}.
    \end{equation*}

    Take $t=T$, we then have:
    \begin{equation*}
        \sup_{t\in[0,T]} \mathbb{E}\|\Tilde{m}_t - \hat{m}_t\|^p_\mathrm{sp} \leq 4^p (K_1+K_2+K_3+K_4) \delta^{\frac{p}{2}} \exp\{2^{6p} T^p U_B(u,T)^p\}.
    \end{equation*}

    According to Prop.~\ref{thm.tech-lem-Lp-boundedness-of-solution}, Coro.~\ref{coro.tech-lem-Uy-UA-UB-UC}, and that we assume $\sup_{t} \mathbb{E}\| m_t - \Tilde{m}_t \|^p_\mathrm{sp}=O(1)$, we have:
    \begin{align*}
        &K_3 = O\{U_A(u,T)^p + U_B(u,T)^p\} = O(u^{8p} e^{puT}), \\
        &K_4 = O\{U_C(u,T)^p\} = O(u^{7p} e^{puT}),\\ 
        &K_1 = O\{U_A(u,T)^p + U_B(u,T)^p (K_3+K_4) + u^{4p} U_y(u,T)^{2p}\} = O(u^{16p} e^{2puT}), \\
        &K_2 = O\{u^{2p} U_y(u,T)^p\} = O(u^{8p} e^{puT}).
    \end{align*}

    Therefore, we have:
    \begin{equation*}
        \sup_{t\in[0,T]} \mathbb{E}\|\Tilde{m}_t - \hat{m}_t\|^p_\mathrm{sp} = O(u^{16p} e^{2puT}\delta^{\frac{p}{2}} e^\gamma),
    \end{equation*}
    where $\gamma=2^{6p} T^p U_B(u,T)^p\leq K u^{8p} e^{puT}$ for some positive constant $K$. In this regard, we have $\sup_{t\in[0,T]} \mathbb{E}\|\Tilde{m}_t - \hat{m}_t\|^p_\mathrm{sp}$ is dominated by
    \begin{equation*}
        \sup_{t\in[0,T]} \mathbb{E}\|\Tilde{m}_t - \hat{m}_t\|^p_\mathrm{sp} = o\{\delta^{\frac{p}{2}}\exp(e^{2puT})\}.
    \end{equation*}
\end{proof}

\begin{lemma}
    For $p\geq 2$ and any $s,t\in [0,T]$, we have:
    \begin{align*}
        \max\{&\|\Tilde{A}_s - \Tilde{A}_t\|_\mathrm{sp}^p,\|\Tilde{B}_s - \Tilde{B}_t\|_\mathrm{sp}^p\} \leq 4^p (t-s)^p u^{2p} \left\{1 + 2u^p U_y(u,T)^p + u^{2p} U_y(u,T)^{2p}\right\}, \\
        &\|\Tilde{C}_s - \Tilde{C}_t\|_\mathrm{sp}^p \leq 4^p (t-s)^p u^{p} \left\{1 + 2u^p U_y(u,T)^p + u^{2p} U_y(u,T)^{2p}\right\}.
    \end{align*}
    \label{lem.smooth-of-tilde-processes}
\end{lemma}

\begin{proof}
    Recall $\Tilde{A}_t = \Tilde{\Phi}_{V\backslash C,C}-\Tilde{y}_t \Tilde{\Phi}_{C,V\backslash C}^\top \Tilde{\Phi}_C$, we then have:
    \begin{align*}
        \|\Tilde{A}_s - \Tilde{A}_t\|_\mathrm{sp}^p &= \|\Tilde{y}_s \Tilde{\Phi}_{C,V\backslash C}^\top \Tilde{\Phi}_C - \Tilde{y}_t \Tilde{\Phi}_{C,V\backslash C}^\top \Tilde{\Phi}_C\|_\mathrm{sp}^p \leq \|\Tilde{y}_s -\Tilde{y}_t\|_\mathrm{sp}^p \|\Tilde{\Phi}\|_\mathrm{sp}^{2p} \leq u^{2p}\|\Tilde{y}_s -\Tilde{y}_t\|_\mathrm{sp}^p.
    \end{align*}

    Furthermore, recall $\Tilde{y}$ is the solution of
    \begin{equation*}
        \dot{\Tilde{y}}_t = \Tilde{\Phi}_{V\backslash C} \Tilde{y}_t + \Tilde{y}_t \Tilde{\Phi}_{V\backslash C}^\top + I - \Tilde{y}_t \Tilde{\Phi}_{C, V\backslash C}^\top \Tilde{\Phi}_{C, V\backslash C} \Tilde{y}_t,
    \end{equation*}
    hence,
    \begin{align*}
        \|\Tilde{y}_s -\Tilde{y}_t\|_\mathrm{sp}^p &= \left\|\int_s^t \Tilde{\Phi}_{V\backslash C} \Tilde{y}_\tau + \Tilde{y}_\tau \Tilde{\Phi}_{V\backslash C}^\top + I - \Tilde{y}_\tau \Tilde{\Phi}_{C, V\backslash C}^\top \Tilde{\Phi}_{C, V\backslash C} \Tilde{y}_\tau d\tau \right\|_\mathrm{sp}^p \\
        &\leq 4^p (t-s)^{p-1} \int_s^t 2\|\Tilde{\Phi}_{V\backslash C}\|_\mathrm{sp}^p \| \Tilde{y}_\tau\|_\mathrm{sp}^p + 1 + \|\Tilde{y}_\tau\|_\mathrm{sp}^{2p} \|\Tilde{\Phi}_{C, V\backslash C}\|_\mathrm{sp}^{2p} d\tau \\
        &\leq 4^p (t-s)^p \left\{1 + 2u^p U_y(u,T)^p + u^{2p} U_y(u,T)^{2p}\right\}.
    \end{align*}

    Therefore,
    \begin{equation*}
        \|\Tilde{A}_s - \Tilde{A}_t\|_\mathrm{sp}^p \leq 4^p (t-s)^p u^{2p} \left\{1 + 2u^p U_y(u,T)^p + u^{2p} U_y(u,T)^{2p}\right\}.
    \end{equation*}

    In a similar way, we can show $\|\Tilde{B}_s - \Tilde{B}_t\|_\mathrm{sp}^p$ has the same bound, and that:
    \begin{equation*}
         \|\Tilde{C}_s - \Tilde{C}_t\|_\mathrm{sp}^p \leq 4^p (t-s)^p u^{p} \left\{1 + 2u^p U_y(u,T)^p + u^{2p} U_y(u,T)^{2p}\right\}.
    \end{equation*}
\end{proof}

\noindent\textbf{Proposition~\ref{prop.consistency-projection-estimator}.}\emph{Assume Asm.~\ref{asm.identifiable-ou}, then for $q>2$, we have:
    \begin{equation*}
        \max \left\{\|\mu-\hat{\mu}\|_q,\, \|G-\hat{G}\|_q, \, N^{\frac{1}{2}}\|\mu-\hat{\mu}\|_2 \|G-\hat{G}\|_2\right\} \to 0
    \end{equation*}
    as $N_c\to\infty, \delta_c=\Theta(N_c^{-\frac{1}{6}}), u = O(\frac{1}{4T} \log\log N_c^{\frac{1}{6}}), \delta=O(N_c^{-\frac{2}{3}})$.}

\begin{proof}
    Recall Prop.~\ref{prop.consistency-ou} (Eq.~\eqref{eq.order-barPhi-Phi}) shows that for any $p\geq 1$,
    \begin{equation*}
        \mathbb{E} \|\Phi - \bar{\Phi}\|^p_\mathrm{sp} = O\left(\delta_c^{-p} N_c^{-\frac{p}{2}} + \delta_c^{2p}\right).
    \end{equation*}

    Hence, under $N_c\to\infty, \delta_c=\Theta(N_c^{-\frac{1}{6}})$, we have:
    \begin{align*}
        N_c^{\frac{1}{2}} \mathbb{E} \|\Phi - \bar{\Phi}\|^2_\mathrm{sp} = O(N_c^{-\frac{1}{6}}) \leq K N_c^{-\frac{1}{6}}, \quad \mathbb{E} \|\Phi - \bar{\Phi}\|^q_\mathrm{sp} = O(N_c^{-\frac{1}{3}q}) \leq K N_c^{-\frac{1}{3}q}
    \end{align*}
    for some positive constant $K$. Further, recall Lem.~\ref{lem.m-tildem-consistency} shows that for any $p\geq 2$,
    \begin{equation*}
        \sup_{t\in[0,T]} \mathbb{E}\|m_t - \Tilde{m}_t\|^p_\mathrm{sp} = o\{\exp(e^{2puT}) \mathbb{E}\|\Phi-\bar{\Phi}\|_\mathrm{sp}^p\}.
    \end{equation*}

    Thus, under $u = O(\frac{1}{4T} \log\log N_c^{\frac{1}{6}})$, which slower than $O(\frac{1}{2qT} \log\log N_c^{\frac{1}{3}q})$, we have:
    \begin{align*}
        &N_c^{\frac{1}{2}}\sup_{t\in[0,T]} \mathbb{E}\|m_t - \Tilde{m}_t\|^2_\mathrm{sp} = o\{\exp(e^{4uT}) K N_c^{-\frac{1}{6}}\} = o(K) \to 0, \numberthis\label{eq.o(K)1}\\
        &\sup_{t\in[0,T]} \mathbb{E}\|m_t - \Tilde{m}_t\|^q_\mathrm{sp} = o\{\exp(e^{2quT}) K N_c^{-\frac{1}{3}q}\} = o(K)\to 0.\numberthis\label{eq.o(K)2}
    \end{align*}

    Moreover, recall Lem.~\ref{lem.tildem-hatm-consistency} shows that for any $p\geq 2$,
    \begin{equation*}
        \sup_{t\in[0,T]} \mathbb{E}\|\Tilde{m}_t - \hat{m}_t\|^p_\mathrm{sp} = o\{\delta^{\frac{p}{2}}\exp(e^{2puT})\}.
    \end{equation*}

    Therefore, under $\delta=O(N_c^{-\frac{2}{3}})\leq K^\prime N_c^{-\frac{2}{3}}$ for some $K^\prime>0$, we have:
    \begin{align*}
        N_c^{\frac{1}{2}}\sup_{t\in[0,T]} \mathbb{E}\|\Tilde{m}_t - \hat{m}_t\|^2_\mathrm{sp} &= o\{\delta\exp(e^{4uT})\} \\
        &= o\{\exp(e^{4uT}) K^\prime N_c^{-\frac{2}{3}}\} = o\{\exp(e^{4uT}) K^\prime N_c^{-\frac{1}{6}}\} \overset{(1)}{=} o(K^\prime) \to 0\\
        \sup_{t\in[0,T]} \mathbb{E}\|\Tilde{m}_t - \hat{m}_t\|^q_\mathrm{sp} &= o\{\delta^{\frac{q}{2}}\exp(e^{2quT})\} = o\{\exp(e^{2quT}) K^\prime N_c^{-\frac{1}{3}q}\} \overset{(2)}{=} o(K^\prime)\to 0,
    \end{align*}
    where ``(1)'' and ``(2)'' have been shown in \eqref{eq.o(K)1} and \eqref{eq.o(K)2}, respectively.

    Combining the above, we have:
    \begin{equation*}
        \max\Big\{N_c^{\frac{1}{2}}\sup_{t\in[0,T]} \mathbb{E}\|m_t - \hat{m}_t\|^2_\mathrm{sp}, \sup_{t\in[0,T]} \mathbb{E}\|m_t - \hat{m}_t\|^p_\mathrm{sp}\Big\} \to 0
    \end{equation*}
    under the specified asymptotics. Since for any $p\geq 1$,
    \begin{align*}
        \|G-\hat{G}\|_p = \|\Pi-\hat{\Pi}\|_p &= \Big(\mathbb{E}\int_0^T |\Pi_t-\hat{\Pi}_t|^p dt\Big)^{\frac{1}{p}} \\
        &\leq  \Big(T\sup_{t\in [0,T]} \mathbb{E}|\Pi_t-\hat{\Pi}_t|^p \Big)^{\frac{1}{p}} \leq \Big(T\sup_{t\in [0,T]} \mathbb{E}\|m_t-\hat{m}_t\|^p_\mathrm{sp} \Big)^{\frac{1}{p}},
    \end{align*}
    we conclude that:
    \begin{equation*}
        \max \left\{\|G-\hat{G}\|_q, \, N^{\frac{1}{4}}\|G-\hat{G}\|_2\right\} \to 0.
    \end{equation*}

    Besides, recall $\hat{\mu}_t = \Tilde{\Phi}_{\beta,V\backslash C} \hat{m}_t + \Tilde{\Phi}_{\beta,C} X_C(t)$. Hence, we have:
    \small
    \begin{align*}
        \mathbb{E}|\mu_t -\hat{\mu}_t|^p &= \mathbb{E}\Big|(\Phi_{\beta,V\backslash C} m_t - \Tilde{\Phi}_{\beta,V\backslash C} m_t)  + (\Tilde{\Phi}_{\beta,V\backslash C} m_t-\tilde{\Phi}_{\beta,V\backslash C} \hat{m}_t) + (\Phi_{\beta,C}-\Tilde{\Phi}_{\beta,C}) \Big|^p \\
        &\leq \mathbb{E}\|\Phi-\Tilde{\Phi}\|_\mathrm{sp}^p \big(\sup_{t\in [0,T]} \mathbb{E}\|m_t\|_\mathrm{sp}^p+\sup_{t\in [0,T]} \mathbb{E}\|X_C(t)\|_\mathrm{sp}^p\big) + (u+1)^p \sup_{t\in [0,T]} \mathbb{E}\|m_t-\hat{m}_t\|_\mathrm{sp}^p.
    \end{align*}
    \normalsize

    Then, apply Coro.~\ref{coro.lp-bounded}, Prop.~\ref{prop.consistency-ou}, Lem.~\ref{lem.m-tildem-consistency}, and Lem.~\ref{lem.tildem-hatm-consistency}, it is not hard to verify that:
    \begin{equation*}
        \sup_{t\in [0,T]} \mathbb{E}|\mu_t -\hat{\mu}_t|^p = o\{\exp(e^{2puT}) \mathbb{E}\|\Phi-\bar{\Phi}\|_\mathrm{sp}^p+\delta^{\frac{p}{2}}\exp(e^{2puT})\}.\
    \end{equation*}
    has the same bound as $\sup_t \mathbb{E}\|m_t-\hat{m}_t\|_\mathrm{sp}^p$. So, repeat the analysis above, we have:
    \begin{equation*}
        \max \left\{\|\mu-\hat{\mu}\|_q, \, N^{\frac{1}{4}}\|\mu-\hat{\mu}\|_2\right\} \to 0.
    \end{equation*}

    This concludes the proof.
\end{proof}
\section{Technical lemmas}
\label{appx.tech-lemmas}

\subsection{Lemmas used in Appx.~\ref{appx.proof-sec-setup}}

\begin{proposition}
    Consider the Itô process $(X_t)$. Assume \eqref{eq.regularity-lambda-bounded-inL2}, then the processes $X_t, \mu_t, \Pi_t$, and $M_t$ are all bounded in $L_2$.
    \label{prop.tech-lem-L2-boundedness-under-(4)}
\end{proposition}

\begin{proof}
    For any $t\in [0,T]$, we have:
    \begin{align*}
        \mathbb{E}\|X_t\|^2 &\leq 3\mathbb{E}\|X_0\|^2 + 3\mathbb{E} \left\| \int_0^t \lambda_s ds \right\|^2 + 3d^2 \sum_{i,j=1}^d \mathbb{E}\left\{ \Sigma_{i j} W_{j}(t) \right\}^2 \\
        &\overset{(1)}{\leq} 3\mathbb{E}\|X_0\|^2 + 3T\mathbb{E} \int_0^t \|\lambda_s\|^2 ds + 3d^2 T \|\Sigma\|^2 \\
        &\overset{(2)}{\leq} 3\mathbb{E}\|X_0\|^2 + 3T^2 \sup_{s\in [0,T]}\mathbb{E}\|\lambda_s\|^2 + 3d^2 T \|\Sigma\|^2 <\infty,
    \end{align*}
    where ``(1)" applies the Cauchy-Schwarz inequality and ``(2)" applies the Fubini's theorem. 

    Furthermore, apply the Jensen inequality,
    \begin{align*}
        \mathbb{E}|\mu_t|^2 = \mathbb{E}|\mathbb{E}\{\lambda_\beta(t)|\mathcal{F}_t\}|^2 \leq \mathbb{E}\mathbb{E}\{|\lambda_\beta(t)|^2 |\mathcal{F}_t\} =  \mathbb{E}|\lambda_\beta(t)|^2 \leq \sup_{t\in [0,T]} \mathbb{E}\|\lambda_s\|^2 < \infty.
    \end{align*}

    In a similar way, the process $\Pi_t = \mathbb{E}\{X_\alpha(t)|\mathcal{F}_t\}$ is also bounded in $L_2$.
    
    Finally, we note that:
    \small
    \begin{align*}
        \mathbb{E}|M_t|^2 = \mathbb{E}\left|X_\beta(t) - X_\beta(0) - \int_0^t \mu_s ds\right|^2 &\leq 3\mathbb{E}|X_\beta(t)|^2 + 3\mathbb{E}|X_\beta(0)|^2 + 3T\int_0^t \mathbb{E}|\mu_s|^2 ds \\
        &\leq 3\mathbb{E}\|X_t\|^2 + 3\mathbb{E}\|X_0\|^2 + 3T^2 \sup_{t\in [0,T]} \mathbb{E}|\mu_t|^2 < \infty.
    \end{align*}
    \normalsize
\end{proof}

\begin{lemma}
    Consider a random process $f(\omega,t)$, defined on the probability space $(\Omega,\mathcal{F},P)$, if $f$ is progressively measurable, then
    \begin{equation*}
        \mathbb{E} \int_0^t f(s)ds = \int_0^t \mathbb{E}f(s) ds
    \end{equation*}
    for every $t$, provided that one (then the other) of the integrals of the absolute value of $f$ exists.
    \label{lem.fubini-progressive-measurable}
\end{lemma}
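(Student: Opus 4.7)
The plan is to deduce this from the classical Fubini--Tonelli theorem applied on the product measure space $(\Omega\times[0,t],\mathcal{F}\otimes\mathcal{B}([0,t]),P\otimes\lambda)$, where $\lambda$ denotes Lebesgue measure. The role of progressive measurability here is precisely to supply the joint $(\omega,s)$-measurability of $f$ that Fubini requires; once that is in hand, the interchange of $\mathrm{E}$ and $\int_0^t\cdot\,ds$ is automatic from the standard theorem.

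First, I would recall the definition: $f$ is progressively measurable if, for every $t\geq 0$, the restricted map $(\omega,s)\mapsto f(\omega,s)$ on $\Omega\times[0,t]$ is measurable with respect to $\mathcal{F}_t\otimes\mathcal{B}([0,t])$. Since $\mathcal{F}_t\subseteq\mathcal{F}$, this immediately implies measurability with respect to $\mathcal{F}\otimes\mathcal{B}([0,t])$. Thus $f$ is jointly measurable on the product space, and sections in either coordinate are measurable in the usual sense.

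Next, I would invoke Fubini--Tonelli in its standard form. Under the hypothesis that one of the iterated integrals of $|f|$ is finite, say $\mathrm{E}\int_0^t |f(s)|\,ds<\infty$, Tonelli's theorem applied to the nonnegative jointly measurable function $|f|$ gives
\begin{equation*}
\int_{\Omega\times[0,t]} |f(\omega,s)|\,d(P\otimes\lambda)
= \mathrm{E}\int_0^t |f(s)|\,ds = \int_0^t \mathrm{E}|f(s)|\,ds,
\end{equation*}
so the other iterated integral is also finite, i.e.\ $f$ is integrable on the product space. Fubini's theorem then yields equality of the two iterated integrals of $f$ itself:
\begin{equation*}
\mathrm{E}\int_0^t f(s)\,ds = \int_{\Omega\times[0,t]} f\,d(P\otimes\lambda) = \int_0^t \mathrm{E} f(s)\,ds.
\end{equation*}
The symmetric case in which $\int_0^t \mathrm{E}|f(s)|\,ds<\infty$ is handled identically by swapping the roles.

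There is no real obstacle: the only substantive point is the observation that progressive measurability gives joint $\mathcal{F}\otimes\mathcal{B}([0,t])$-measurability, which is exactly the input Fubini needs. Alternative weaker measurability notions (e.g.\ mere adaptedness plus separate measurability in $s$) would not suffice, which is why the hypothesis is stated in terms of progressive measurability; I would highlight this briefly in the write-up to motivate the assumption and to justify its application in the proof of Proposition~\ref{prop.is-martingale}, where $\lambda_\beta(t)$ and its optional projection $\mathrm{E}(\lambda_\beta(t)\mid\mathcal{F}_t)$ are both progressively measurable.
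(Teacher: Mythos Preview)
Your proposal is correct and follows essentially the same approach as the paper: observe that progressive measurability means, by definition, that $(\omega,s)\mapsto f(\omega,s)$ is $\mathcal{F}_t\otimes\mathcal{B}([0,t])$-measurable (hence $\mathcal{F}\otimes\mathcal{B}([0,t])$-measurable), which is exactly the joint measurability hypothesis needed to invoke Fubini--Tonelli. The paper's proof is a one-sentence version of your argument.
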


\begin{proof}
    Recall that Fubini's theorem states that the integrals are exchangeable if $f$, treated as a mapping from $\Omega\times [0,t]\mapsto \mathbb{R}$, is measurable to $\mathcal{F}\times\mathcal{B}([0,t])$ for every $t$. This is exactly the definition of $f$ being progressively measurable.
\end{proof}

\begin{definition}[Quadratic variation]
    Consider a real-valued stochastic process $X_t$. Its quadratic variation, denoted as $[X]_t$, is the process defined as:
    \begin{equation*}
        [X]_t = \lim_{|\mathcal{P}|\to 0} \sum_{i=1}^n (X_{t_i}-X_{t_{i-1}})^2,
    \end{equation*}
    where $\mathcal{P}$ ranges over partitions of the interval $[0,t]$ and the norm of $\mathcal{P}$ is the mesh. This limit, if exists, is defined using convergence in probability.

    More generally, the covariation of two processes $X$ and $Y$ is:
    \begin{equation*}
        [X,Y]_t = \lim_{|\mathcal{P}|\to 0} \sum_{i=1}^n (X_{t_i}-X_{t_{i-1}})(Y_{t_i}-Y_{t_{i-1}}).
    \end{equation*}
    \label{def.tech-lem-quadratic-variation}
\end{definition}

\begin{theorem}[\cite{kuo2006introduction}, Thm. 6.5.8 and Thm. 6.5.9, p. 88]
    Let $(M_t)$ be a cadlag martingale bounded in $L_2$, and $X_t$ be a predictable process satisfying:
    \begin{equation*}
        \mathbb{E}\int_0^T |X_t|^2 d[M]_t < \infty.
    \end{equation*}
    Then the stochastic process $I_t = \int_0^t X_s dM_s, t\in [0,T]$ is a martingale with
    \begin{equation*}
        \mathbb{E}|I_t|^2 = \mathbb{E}\int_0^t |X_s|^2 d[M]_s, \quad [I]_t = \int_0^t |X_s|^2 d[M]_s
    \end{equation*}
    \label{thm.tech-lem-integral-process-is-martingale}
\end{theorem}

\begin{lemma}
    Consider the process $M_t$, under \eqref{eq.regularity-lambda-bounded-inL2}, we have the quadratic variation process $[M]_t=\|\Sigma_\beta\|^2_2 t$, where $\Sigma_\beta:=[\Sigma_{\beta 1}, ..., \Sigma_{\beta d}]^\top$.
    \label{lem.tech-lem-[M]_t}
\end{lemma}

\begin{proof}
    Recall that $M_t=A_t+B_t$, where
   \begin{equation*}
       A_t = \int_0^t \lambda_\beta(s) - \mathbb{E}\{\lambda_\beta(s)|\mathcal{F}_s\} ds, \quad B_t = \sum_{j=1}^d \Sigma_{\beta j} W_j(t).
   \end{equation*}

   By Def.~\ref{def.tech-lem-quadratic-variation}, we have $[M]_t=[A]_t+2[A,B]_t+[B]_t$. It is clear that $[B]_t= \| \Sigma_\beta \|^2_2 t$. In the following, we will show $[A]_t=0$ and $[A,B]_t=0$. 

   Denote $A_t=\int_0^t f(s,\omega)ds$. According to Prop.~\ref{prop.tech-lem-L2-boundedness-under-(4)}, we have $f(\cdot,\omega)\in L_2[0,T]$ almost surely. Therefore, $t\mapsto A_t$ is uniformly continuous on $[0,T]$ almost surely. If $\mathcal{P}=\{0=t_0<\cdots<t_n=T\}$ is a partition of $[0,T]$ with mesh $|\mathcal{P}|:=\max_{i} |t_{i+1}-t_i|$, then
   \begin{align*}
        \sum_i |A_{t_{i+1}}-A_{t_i}|^2 &\leq \sup_{|s-t|<|\mathcal{P}|} |A_s-A_t| \sum_i |A_{t_{i+1}}-A_{t_i}| \\
        &\leq \sup_{|s-t|<|\mathcal{P}|} |A_s-A_t| \int_0^T |g(s)|ds.
    \end{align*}
    Due to the uniform continuity of $A_t$, the RHS converges to $0$ almost surely as $|\mathcal{P}|\to 0$. This gives us $[A]_t=0$. In a similar way, we can show $[A,B]_t=0$. 
\end{proof}

\subsection{Lemmas used in Appx.~\ref{appx.proof-asym-ana-ii} and Appx.~\ref{appx.proof-thm.asym}}

\begin{definition}[Stochastic equicontinuity]
    A sequence of random processes $X^{(n)}$ indexed over a metric space $(\mathcal{T},d)$ is called stochastic equicontinuous if for all $\eta,\epsilon>0$, there exists $\delta>0$ such that:
    \begin{equation*}
        \limsup_{n\to \infty} P\left\{\sup_{s,t\in\mathcal{T}: d(s,t)\leq \delta} |X_s-X_t|>\eta\right\} <\epsilon.
    \end{equation*}
    \label{def.stoch-equicontinuous}
\end{definition}

\begin{theorem}[\cite{newey1991uniform}, Thm. 2.1]
    Let $X^{(n)}$ be a sequence of random processes indexed by a compact metric space $\mathcal{T}$. Assume $X^{(n)}$ is stochastic equicontinuous and that $X_t^{(n)}\overset{\mathcal{P}}{\to} 0$ for every $t\in\mathcal{T}$. Then $\sup_{t\in\mathcal{T}} |X_t^{(n)}| \overset{\mathcal{P}}{\to} 0$ as $n\to\infty$.
    \label{thm.pointwise-convergence+stoch-equicon=uniform-convergence}
\end{theorem}

\begin{definition}[Orlicz norms; \cite{wellner2013weak}, Sec. 2.2.1, p. 144]
    A Young function is a convex function $\psi:\mathbb{R}^+\mapsto \mathbb{R}^+$ with $\psi(0)=0$ that is not identically zero. For a given Young function and a random variable, the Orlicz norm $\|X\|_\psi$ is defined as:
    \begin{equation*}
        \|X\|_\psi:= \inf\left\{C>0: \mathbb{E}\psi(|X|/C)\leq 1\right\}.
    \end{equation*}
    When $\psi(x)=x^p$ for $p\geq 1$, $\|X\|_\psi=(\mathbb{E} |X|^p)^{1/p}$ equals the $L_p$-norm. 
\end{definition}

Let $D(\epsilon,d)$ denote the packing number, i.e., the maximum number of $\epsilon$-separated points in the space $\mathcal{T}$ under the semimetric $d$. For an invertible function $\psi$, let $\psi^{-1}$ denote its inverse function.

\begin{theorem}[\cite{wellner2013weak}; Thm. 2.2.4, p. 147]
    Let $\psi$ be a convex, nondecreasing, nonzero function with $\psi(0)=0$ and $\limsup_{x,y\to\infty} \frac{\psi(x)\psi(y)}{\psi(cxy)}<\infty$ for some constant $c$. Let $\{X_t: t\in\mathcal{T}\}$ be a separable random process with $\|X_t - X_s\|_\psi \leq Cd(t,s)$, for some semimetric $d$ on $\mathcal{T}$ and a constant $C$. Then for any $\nu,\delta>0$, and a constant $K=K(\psi,C)$,
    \begin{equation*}
        \Big\|\sup_{d(t,s)<\delta} |X_t-X_s|\Big\|_{\psi} \leq K\left[\int_0^\nu \psi^{-1}\{D(\epsilon,d)\} d\epsilon + \delta\psi^{-1}\{D^2(\nu,d)\}\right].
    \end{equation*}
    \label{thm.max-ineq-2.2.4}
\end{theorem}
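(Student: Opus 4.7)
The plan is to prove this bound via the classical generic‑chaining argument adapted to Orlicz norms, as in \cite{wellner2013weak}, Section~2.2. The key preliminary is a finite‑maximum Orlicz inequality: for any random variables $Y_1,\ldots,Y_m$,
\begin{equation*}
\Bigl\|\max_{1\leq i\leq m}|Y_i|\Bigr\|_\psi \;\leq\; K_\psi\,\psi^{-1}(m)\,\max_i \|Y_i\|_\psi,
\end{equation*}
with $K_\psi$ depending only on $\psi$. I would obtain this by summing $\mathrm{E}\,\psi(|Y_i|/K)\leq 1$ and using the growth hypothesis $\limsup_{x,y\to\infty}\psi(x)\psi(y)/\psi(cxy)<\infty$, which, in equivalent inverse form, yields $\psi^{-1}(xy)\lesssim \psi^{-1}(x)\psi^{-1}(y)$ for large arguments; this is precisely what allows the factor $\psi^{-1}(m)$ to enter the maximal bound.

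Next I would construct dyadic nets. Set $\epsilon_n = \nu\,2^{-n}$ and, for each $n\geq 0$, take a maximal $\epsilon_n$‑separated set $T_n\subseteq\mathcal T$, which is automatically an $\epsilon_n$‑net and satisfies $|T_n|\leq D_n := D(\epsilon_n,d)$. Define projections $\pi_n:\mathcal T\to T_n$ with $d(t,\pi_n(t))\leq \epsilon_n$. For the nested links, $d(\pi_n t,\pi_{n-1}t)\leq 3\epsilon_{n-1}$ by the triangle inequality, so $\|X_{\pi_n t}-X_{\pi_{n-1}t}\|_\psi\leq 3C\epsilon_{n-1}$ by hypothesis. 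Since the pair $(\pi_n t,\pi_{n-1}t)$ takes at most $D_n^2$ distinct values as $t$ varies, the preliminary lemma gives
\begin{equation*}
\Bigl\|\sup_t |X_{\pi_n t}-X_{\pi_{n-1}t}|\Bigr\|_\psi \;\leq\; 3CK_\psi\,\psi^{-1}(D_n^2)\,\epsilon_{n-1}.
\end{equation*}
Summing over $n\geq 1$ and comparing the geometric series to a Riemann integral, this is bounded by a universal multiple of $\int_0^\nu \psi^{-1}(D(\epsilon,d)^2)\,d\epsilon$; the growth condition then gives $\psi^{-1}(x^2)\lesssim \psi^{-1}(x)$ for large $x$, absorbing the square and producing the target chaining integral $\int_0^\nu \psi^{-1}(D(\epsilon,d))\,d\epsilon$.

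To get the $\delta$‑dependent term, I would handle pairs $(s,t)$ with $d(s,t)<\delta$ by the triangle split $X_t-X_s = (X_t-X_{\pi_0 t}) + (X_{\pi_0 t}-X_{\pi_0 s}) + (X_{\pi_0 s}-X_s)$. The outer two pieces are controlled pathwise by the chaining bound above (each $X_t-X_{\pi_0 t}$ telescopes through the finer nets). The middle piece is an oscillation over $T_0$ restricted to pairs $(u,v)\in T_0^2$ with $d(u,v)\leq \delta+2\nu$; there are at most $D(\nu,d)^2$ such pairs and each difference has $\|\cdot\|_\psi\leq C(\delta+2\nu)$, so the finite‑maximum lemma produces $K_\psi\,\psi^{-1}(D(\nu,d)^2)\cdot C(\delta+2\nu)$. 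The $2\nu\,\psi^{-1}(D^2(\nu,d))$ portion is absorbed into the chaining integral (which dominates it by monotonicity of $D(\cdot,d)$), leaving exactly the claimed $\delta\,\psi^{-1}(D^2(\nu,d))$ term. Assembling via the triangle inequality for $\|\cdot\|_\psi$ and invoking separability to pass to the full supremum completes the proof.

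The main obstacle is the preliminary finite‑maximum Orlicz inequality together with the derived comparison $\psi^{-1}(x^2)\lesssim \psi^{-1}(x)$: the growth hypothesis given here is strictly weaker than the usual $\Delta_2$‑condition on $\psi$, and extracting a maximal inequality whose constant $K_\psi$ depends \emph{only} on $\psi$ (independent of $m$) requires careful bookkeeping of where and how the $\limsup$ condition is applied. Once those two auxiliary facts are pinned down with universal constants, the chaining in the remaining steps is a standard telescoping and geometric‑summation argument.
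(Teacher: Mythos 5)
This statement is imported verbatim from van der Vaart and Wellner (Thm.~2.2.4); the paper cites it and offers no proof, so there is no internal argument to compare against. Your overall strategy --- a finite-maximum Orlicz inequality $\|\max_{i\le m}|Y_i|\|_\psi\le K_\psi\,\psi^{-1}(m)\max_i\|Y_i\|_\psi$ combined with dyadic chaining and separability --- is the correct one and is the textbook route.

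There is, however, a concrete false step. Because you define the projections $\pi_n$ independently at each level, the link pairs $(\pi_n t,\pi_{n-1}t)$ can take up to $D_n^2$ values, and you then invoke ``$\psi^{-1}(x^2)\lesssim\psi^{-1}(x)$ for large $x$'' to convert $\int_0^\nu\psi^{-1}(D(\epsilon,d)^2)\,d\epsilon$ into the claimed $\int_0^\nu\psi^{-1}(D(\epsilon,d))\,d\epsilon$, and again to absorb the $2\nu\,\psi^{-1}(D^2(\nu,d))$ leftover. The growth hypothesis $\limsup_{x,y\to\infty}\psi(x)\psi(y)/\psi(cxy)<\infty$ only yields $\psi^{-1}(uv)\lesssim\psi^{-1}(u)\psi^{-1}(v)$, hence $\psi^{-1}(x^2)\lesssim[\psi^{-1}(x)]^2$, which is strictly weaker. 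For $\psi(x)=x^p$ one has $\psi^{-1}(x^2)=[\psi^{-1}(x)]^2\gg\psi^{-1}(x)$, and $\psi(x)=x^2$ is exactly the choice the paper makes when applying this theorem in Corollary~\ref{coro.max-ineq-2.2.4}; so as written your argument only proves the weaker bound with $D^2$ inside the entropy integral. The standard repair is structural rather than analytic: take the nets nested, $T_0\subseteq T_1\subseteq\cdots$, link each point of $T_n$ to a \emph{unique} point of $T_{n-1}$, and chain along these links, so that level $n$ involves at most $|T_n|\le D_n$ differences and contributes $\psi^{-1}(D_n)$ to the first power; the $\delta$-term is then obtained by choosing, for each pair of chain endpoints in $T_0$, a single representative pair in the finest net at mutual distance less than $\delta$, whose maximum over at most $D^2(\nu,d)$ terms of $\psi$-norm at most $C\delta$ gives exactly $\delta\,\psi^{-1}(D^2(\nu,d))$ with nothing left to absorb.
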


\begin{lemma}[\cite{billingsley2017probability}, Exam. 38.3, p. 527]
    If a random process indexed over $[0,T]$ is sample-path continuous, then it is separable.
\end{lemma}

\begin{corollary}
    Let $\{X_t: t\in [0,T]\}$ be a continuous random process with $(\mathbb{E} |X_t-X_s|^2)^{1/2} \leq C|t-s|$ for some positive constant $C$. Then $X$ is stochastic equicontinuous.
    \label{coro.max-ineq-2.2.4}
\end{corollary}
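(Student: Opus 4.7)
The plan is to apply Theorem~\ref{thm.max-ineq-2.2.4} with the Orlicz function $\psi(x)=x^2$, for which the Orlicz norm $\|\cdot\|_\psi$ coincides (up to a harmless constant) with the ordinary $L_2$-norm. This choice is admissible since $\psi$ is convex, nondecreasing, vanishes at $0$, and satisfies $\psi(x)\psi(y)/\psi(xy)\equiv 1$, so the required $\limsup$ condition holds with $c=1$. The hypothesis $\|X_t-X_s\|_2 \le C|t-s|$ then supplies exactly the Lipschitz bound demanded by Theorem~\ref{thm.max-ineq-2.2.4}, with semimetric $d(s,t):=|s-t|$ on $\mathcal{T}=[0,1]$.

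Next I would compute the packing number and its $\psi$-inverse under this semimetric. A standard covering argument on $[0,1]$ gives $D(\epsilon,d)\le 1/\epsilon + 1$, so $\psi^{-1}(D(\epsilon,d)) = \sqrt{D(\epsilon,d)} = O(\epsilon^{-1/2})$. Hence the entropy integral obeys
\begin{equation*}
    \int_0^\nu \psi^{-1}(D(\epsilon,d))\,d\epsilon \;=\; O(\sqrt{\nu}),
\end{equation*}
while the boundary term satisfies $\delta\,\psi^{-1}(D^2(\nu,d)) = \delta\, D(\nu,d) = O(\delta/\nu)$.

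Plugging these estimates into Theorem~\ref{thm.max-ineq-2.2.4} produces
\begin{equation*}
    \Bigl\|\sup_{|s-t|<\delta}|X_t-X_s|\Bigr\|_2 \;\le\; K'\!\left(\sqrt{\nu}+\frac{\delta}{\nu}\right)
\end{equation*}
for a constant $K'$ depending only on $C$. Optimizing the right-hand side over $\nu>0$ (e.g.\ setting $\nu=(2\delta)^{2/3}$ yields the bound $O(\delta^{1/3})$), one sees that the $L_2$-norm of $\sup_{|s-t|<\delta}|X_t-X_s|$ tends to $0$ as $\delta\to 0$. By Markov's inequality this implies
\begin{equation*}
    \lim_{\delta\to 0} P\Bigl(\sup_{|s-t|<\delta}|X_t-X_s|>\eta\Bigr)=0, \quad \forall \eta>0,
\end{equation*}
which is precisely the stochastic equicontinuity of $X$.

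The proof is essentially a bookkeeping exercise once Theorem~\ref{thm.max-ineq-2.2.4} is invoked correctly; there is no genuine obstacle. The only point requiring a bit of care is selecting $\psi(x)=x^2$ so that the Orlicz condition matches the supplied $L_2$-Lipschitz hypothesis, and checking that the one-dimensional packing-number estimate together with a simple choice of $\nu=\nu(\delta)$ drives both terms in the maximal inequality to zero.
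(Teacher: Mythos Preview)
Your proposal is correct and follows essentially the same route as the paper: apply Theorem~\ref{thm.max-ineq-2.2.4} with $\psi(x)=x^2$ and $d(s,t)=|s-t|$, bound the packing number of $[0,1]$ by $O(1/\epsilon)$, obtain the estimate $K'(\sqrt{\nu}+\delta/\nu)$, and conclude stochastic equicontinuity via Markov's inequality. The only cosmetic difference is that you optimize $\nu$ as a function of $\delta$ to get a single $O(\delta^{1/3})$ bound, whereas the paper fixes $\nu$ first and then chooses $\delta$ small; both variants are equivalent.
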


\begin{proof}
    Apply Thm.~\ref{thm.max-ineq-2.2.4}, with $d(s,t)=|s-t|$ and $\psi(x)=x^2$. It is clear that the packing number $D(\epsilon,d)$ of the interval $[0,1]$ is strictly less than $1/\epsilon$. Hence, for any $\delta$ and $\nu$,
    \begin{equation*}
        \Big(\mathrm{E}\big|\sup_{d(t,s)<\delta} |X_t-X_s|\big|^2\Big)^{\frac{1}{2}} \leq K\Big[\int_0^\nu \frac{1}{\sqrt{\epsilon}} d\epsilon + \frac{\delta}{\nu}\Big] = K\Big(2\sqrt{\nu} + \frac{\delta}{\nu}\Big),
    \end{equation*}
    which implies:
    \begin{equation*}
        \mathrm{E}\Big|\sup_{d(t,s)<\delta} |X_t-X_s|\Big|^2 \leq K^2 \left(2\sqrt{\nu} + \frac{\delta}{\nu}\right)^2.
    \end{equation*}

    Using the Markov inequality, we obtain that, for any $\eta$,
    \begin{equation*}
        P\Big(\sup_{d(t,s)<\delta} |X_t-X_s|>\eta\Big) \leq \frac{\mathrm{E}\left|\sup_{d(t,s)<\delta} |X_t-X_s|\right|^2}{\eta^2} \leq K^2 \Big(\frac{2\sqrt{\nu}}{\eta} + \frac{\delta}{\nu\eta}\Big)^2.
    \end{equation*}

    Then, for any $\epsilon>0$, we can first choose $\nu$ such that $\frac{2\sqrt{\nu}}{\eta}\geq \frac{1}{2K^\prime \epsilon}$; then choose $\delta$ such that $\frac{\delta}{\nu\eta}\leq \frac{1}{2K^\prime\epsilon}$, so that the right-hand side of the above inequality is less than $\epsilon$. This means the stochastic equicontinuity holds.
\end{proof}

\begin{lemma}[Doob's submartingale inequality]
    Let $X_t$ be a cadlag submartingale, then for any $C>0$,
    \begin{equation*}
        P\Big(\sup_{t\in [0,T]} X_t \geq C\Big) \leq \frac{\mathbb{E}\max(X_T,0)}{C}.
    \end{equation*}
    \label{lem.doob-submartingale-ineq}
\end{lemma}

\begin{theorem}[Martingale CLT; \cite{andersen2012statistical}, Thm. II.5.2]
    Let $\left\{U^{(n)}\right\}_{n\in \mathbb{N}}$ be a sequence of square-integrable local martingales in $[0,T]$, possibly defined on different sample spaces and with different filtrations for each $n$. Let $U_0$ be a continuous Gaussian martingale with continuous variance function $\mathcal{V}:[0,T]\mapsto \mathbb{R}^+$, and assume that $U^{(n)}(0) = U_0(0) = 0$ has initial values equal to zero. Suppose further that for every $t\in [0,T]$,
    \begin{equation*}
        [U^{(n)}]_t \to_{\mathcal{P}} \mathcal{V}_t
    \end{equation*}
    as $n\to\infty$. Then it holds that $U^{(n)}\to_{\mathcal{D}} U_0$ in $C[0,T]$ as $n\to \infty$.
    \label{thm.martingale-clt}
\end{theorem}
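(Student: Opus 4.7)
The plan is to establish weak convergence in $C[0,T]$ via the classical two-step recipe: (i) tightness of $\{U^{(N)}\}$ in $C[0,T]$, and (ii) convergence of all finite-dimensional distributions to those of $U_0$. Combined, these yield $U^{(N)}\to_{\mathcal{D}} U_0$ in $C[0,T]$ by Prokhorov's theorem. Throughout, since each $U^{(N)}$ is a continuous local square-integrable martingale, one has $[U^{(N)}] = \langle U^{(N)}\rangle$, and the limit $U_0$ has continuous variance function $\mathcal{V}$; both facts are exploited below.

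For tightness, I would first bound the maximal process. By the Burkholder--Davis--Gundy inequality applied to the stopped martingales $U^{(N)}_{\cdot \wedge \tau_n^{(N)}}$, with $\tau_n^{(N)} := \inf\{t:[U^{(N)}]_t > n\}$, one gets $\mathrm{E}\sup_{s\leq T}|U^{(N)}_s|^2 \lesssim \mathrm{E}[U^{(N)}]_T$, and the hypothesis $[U^{(N)}]_T \to_{\mathcal{P}} \mathcal{V}(T)$ makes this stochastically bounded. For the modulus of continuity, I would use a Lenglart--Rebolledo inequality for continuous local martingales in the form
\[
P\!\left(\sup_{|s-t|<\delta}|U^{(N)}_t - U^{(N)}_s| > \epsilon\right) \leq \frac{\eta}{\epsilon^2} + P\!\left(\omega_\delta([U^{(N)}]) > \eta\right),
\]
where $\omega_\delta$ denotes the modulus of continuity on $[0,T]$. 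The pointwise convergence $[U^{(N)}]_t\to_{\mathcal{P}}\mathcal{V}_t$ plus the fact that both sides are nondecreasing continuous functions with a deterministic continuous limit lets me upgrade pointwise to uniform convergence in probability via a Dini-type argument. Because $\mathcal{V}$ is uniformly continuous on $[0,T]$, $\omega_\delta([U^{(N)}])$ can be made arbitrarily small in probability by first taking $N$ large and then $\delta$ small, which verifies the Aldous criterion and hence tightness.

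For finite-dimensional convergence I would use the complex exponential martingale. Fix $\lambda\in\mathbb{R}$ and set
\[
Z^{(N)}_t := \exp\!\left(i\lambda U^{(N)}_t + \tfrac{\lambda^2}{2}[U^{(N)}]_t\right).
\]
Itô's formula applied to $f(x,y)=\exp(i\lambda x + \tfrac{\lambda^2}{2}y)$ shows that $Z^{(N)}$ is a continuous complex local martingale, since the drift contributions from $dU^{(N)}$ and $d[U^{(N)}]$ cancel. Stopping at $\tau_n^{(N)}$ makes $Z^{(N)}$ bounded, hence a true martingale with $\mathrm{E}Z^{(N)}_{t\wedge \tau_n^{(N)}}=1$; a uniform-integrability argument using the deterministic limit $\mathcal{V}(t)$ of $[U^{(N)}]_t$ then yields
\[
\mathrm{E}\exp(i\lambda U^{(N)}_t) \to \exp\!\left(-\tfrac{\lambda^2}{2}\mathcal{V}(t)\right),
\]
the characteristic function of $\mathcal{N}(0,\mathcal{V}(t))$. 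Iterating this computation on increments $U^{(N)}_{t_j} - U^{(N)}_{t_{j-1}}$, conditioned on $\mathcal{F}^{(N)}_{t_{j-1}}$ and using that $[U^{(N)}]_{t_j}-[U^{(N)}]_{t_{j-1}}$ has a deterministic limit, produces asymptotic conditional Gaussianity with independent increments---matching the law of the continuous Gaussian martingale $U_0$.

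The main obstacle will be Step~1: translating the pointwise-in-$t$, in-probability convergence of the quadratic variation into a uniform modulus-of-continuity control on $U^{(N)}$ itself. This needs the BDG/Lenglart machinery together with the monotonicity and continuity of $[U^{(N)}]$ to promote pointwise to uniform convergence, and it is the step where the assumed continuity of $\mathcal{V}$ is essential. The exponential-martingale algebra in Step~2 is classical but requires careful handling of the localization $\tau_n^{(N)}$ and a Vitali-type argument to justify interchanging limit and expectation for the complex-valued integrands.
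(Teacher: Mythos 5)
The paper does not prove this statement: it is imported verbatim as Rebolledo's martingale central limit theorem from the cited reference (Andersen et al., Thm.~II.5.1/II.5.2), so there is no internal proof to compare against. Your proposal is a correct outline of the standard proof for the continuous case, and it is essentially the argument found in the literature: tightness via the Lenglart--Rebolledo inequality applied to increments, with the pointwise-in-probability convergence $[U^{(N)}]_t\to_{\mathcal{P}}\mathcal{V}(t)$ upgraded to uniform convergence by the P\'olya/Dini argument (nondecreasing prelimit, continuous deterministic limit), and finite-dimensional convergence via the complex exponential local martingale $\exp(i\lambda U^{(N)}_t+\tfrac{\lambda^2}{2}[U^{(N)}]_t)$ applied to conditional increments. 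Since the martingales here have continuous paths, $[U^{(N)}]=\langle U^{(N)}\rangle$ and the jump/Lindeberg condition of the general Rebolledo theorem is vacuous, so your simplified route is legitimate.

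Two points deserve tightening. First, the BDG step as written requires $\mathrm{E}[U^{(N)}]_T$ to be bounded, which does not follow from $[U^{(N)}]_T\to_{\mathcal{P}}\mathcal{V}(T)$; this is harmless because the Lenglart bound you also invoke gives stochastic boundedness of $\sup_{s\le T}|U^{(N)}_s|$ directly from the in-probability convergence, and for tightness in $C[0,T]$ only the modulus-of-continuity control plus $U^{(N)}(0)=0$ is needed anyway. Second, the uniform-integrability step for the exponential martingale cannot be a bare Vitali argument, since $|Z^{(N)}_t|=\exp(\tfrac{\lambda^2}{2}[U^{(N)}]_t)$ is unbounded and its in-probability convergence does not give UI; the standard fix is to stop at $\sigma_N:=\inf\{s:[U^{(N)}]_s\ge \mathcal{V}(t)+1\}$, note that the stopped exponential is bounded (by path-continuity of the bracket) and hence a true martingale with expectation one, and then show $P(\sigma_N\le t)\to 0$ from the hypothesis. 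With those repairs the argument is complete.
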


\begin{lemma}[\cite{christgau2024nonparametric}, Lem. B.13]
    Let $(X_n)$ be a sequence of $C[0,T]$-valued random variables with nondecreasing sample paths, and let $f\in C[0,T]$ be an equicontinuous nondecreasing function. If $X_n(t)\overset{\mathcal{P}}{\to} f(t)$ for every $t$, then
    \begin{equation*}
        \sup_{t\in[0,T]} |X_n(t)-f(t)|\overset{\mathcal{P}}{\to} 0.
    \end{equation*}
    \label{lem.christlemb13}
\end{lemma}

\begin{definition}[Tightness; \cite{billingsley2013convergence}, p. 8]
    Let $(\mathbb{D}, d)$ denote a general metric space. A probability measure $\mu$ on $\mathbb{D}$ is \emph{tight} if for any $\epsilon>0$, there exists a compact set $K\subseteq \mathbb{D}$ such that $\mu(K^c)<\epsilon$. A sequence of probability measures $\{\mu_n\}$ is \emph{tight} if for any $\epsilon>0$, there exists a compact set $K\subseteq \mathbb{D}$ such that $\sup_n \mu(K^c)<\epsilon$. 
    
    A random variable $X$ is \emph{tight} if its induced measure $P\circ X^{-1}$ is tight. A sequence of random variables $\{X_n\}$ is \emph{tight} if the sequence of induced measures $\{P\circ X_n^{-1}\}$ is tight.
    \label{def.tightness}
\end{definition}


\begin{lemma}[\cite{christgau2024nonparametric}; Lem. B.14]
    Let $X$ and $\{X_n\}$ be random variables on $(C[0,T], \ell_\infty)$. Suppose that the finite-dimensional marginals converge in distribution, i.e., $\forall\, 0\leq t_1 < \cdots <t_k \leq T$,
    \begin{equation*}
        \left\{X_n(t_1), ..., X_n(t_k)\right\} \overset{\mathcal{D}}{\to} \left\{X(t_1),...,X(t_k)\right\}
    \end{equation*}
    as random vectors in $\mathbb{R}^k$; and that $X$ and $\{X_n\}$ are both tight. Then, we have $X_n \overset{\mathcal{D}}{\to} X$ on $(C[0,T], \ell_\infty)$ as $n\to \infty$.
    \label{lem.tight+identification}
\end{lemma}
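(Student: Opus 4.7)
The plan is a classical Prokhorov-style ``tightness plus identification of finite-dimensional limit'' argument. Since the target space $C[0,T]$ equipped with $\ell_\infty$ is a complete separable metric space (a Polish space), Prokhorov's theorem applies and tells us that tightness is equivalent to relative compactness in the space of probability measures under weak convergence. Thus assumption (ii) gives us that every subsequence of $\{X_n\}$ has a further subsequence that converges weakly in $(C[0,T],\ell_\infty)$ to some random element of $C[0,T]$. The aim is to show that any such subsequential weak limit must have the same law as $X$, from which convergence of the full sequence follows by a standard subsequence criterion (if every subsequence has a sub-subsequence converging weakly to the same limit $\mathcal{L}(X)$, then the whole sequence converges weakly to $\mathcal{L}(X)$).

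The first main step is the identification of the subsequential limit. Suppose $X_{n_k} \to_{\mathcal{D}} Y$ in $(C[0,T],\ell_\infty)$ along a subsequence. For any fixed $0 \leq t_1 < \cdots < t_m \leq T$, the coordinate projection
\begin{equation*}
    \pi_{t_1,\ldots,t_m}: C[0,T] \mapsto \mathbb{R}^m, \qquad f \mapsto (f(t_1),\ldots,f(t_m)),
\end{equation*}
is continuous with respect to $\ell_\infty$. By the continuous mapping theorem, $\pi_{t_1,\ldots,t_m}(X_{n_k}) \to_{\mathcal{D}} \pi_{t_1,\ldots,t_m}(Y)$. Assumption (i) yields $\pi_{t_1,\ldots,t_m}(X_{n_k}) \to_{\mathcal{D}} \pi_{t_1,\ldots,t_m}(X)$, so by uniqueness of weak limits in $\mathbb{R}^m$, the finite-dimensional distributions of $Y$ agree with those of $X$.

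The second main step invokes a standard fact from measure theory on $C[0,T]$: a Borel probability measure on $C[0,T]$ (with the uniform topology) is uniquely determined by its finite-dimensional distributions. This follows because the cylinder sets $\{f : (f(t_1),\ldots,f(t_m)) \in B\}$ form a $\pi$-system generating the Borel $\sigma$-algebra of $C[0,T]$, so Dynkin's $\pi$-$\lambda$ theorem ensures the law of $Y$ equals the law of $X$. Combined with the first step, every weak subsequential limit of $\{X_n\}$ has the same distribution as $X$.

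The final step is routine: given tightness (hence relative compactness) of $\{X_n\}$ and uniqueness of the subsequential limit, the whole sequence must converge weakly to $X$. More precisely, if the sequence failed to converge to $\mathcal{L}(X)$, one could extract a subsequence bounded away from $\mathcal{L}(X)$ in some metric for weak convergence (e.g.\ the L\'evy--Prokhorov metric), and by tightness this subsequence would admit a further weakly convergent sub-subsequence, whose limit by the argument above must be $\mathcal{L}(X)$ --- a contradiction. The only nontrivial ingredient is the measure-theoretic fact that finite-dimensional distributions determine laws on $C[0,T]$; tightness handles existence of subsequential limits, and the continuous mapping theorem handles identification. No step should be technically difficult once Prokhorov's theorem and the $\pi$-$\lambda$ argument are in hand.
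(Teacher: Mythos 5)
The paper does not prove this lemma; it is imported verbatim from \cite{christgau2024nonparametric} (Lem.~B.14) and used as a black box in the proof of Prop.~\ref{prop.asym-D1-D2}, so there is no in-paper argument to compare against. Your proof is correct and is the standard Prokhorov argument: tightness gives subsequential weak limits, the continuous mapping theorem applied to the (continuous) coordinate projections identifies the finite-dimensional distributions of any such limit with those of $X$, separability of $(C[0,T],\ell_\infty)$ ensures the cylinder sets generate the Borel $\sigma$-algebra so that finite-dimensional distributions determine the law, and metrizability of weak convergence yields convergence of the full sequence.
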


\begin{proposition}[\cite{billingsley2013convergence}; Thm. 7.3, p. 82]
    Let $\{X_n\}$ be a sequence of random variables on $C[0,T]$ such that $X_n(0)=0$ almost surely for every $n\in\mathbb{N}$. Then, $\{X_n\}$ is tight if and only if it is stochastic equicontinuous.
    \label{prop.stoch-equi<=>tight}
\end{proposition}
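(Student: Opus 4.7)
The plan is to reduce the equivalence to the Arzel\`a--Ascoli characterization of relatively compact subsets of $(C[0,T], \ell_\infty)$. Recall that a set $K \subset C[0,T]$ has compact closure if and only if it is uniformly bounded and equicontinuous, i.e., $\sup_{f \in K} |f(0)| < \infty$ and the modulus of continuity $w(f,\delta) := \sup_{|s-t|\le\delta} |f(s)-f(t)|$ satisfies $\lim_{\delta \downarrow 0} \sup_{f \in K} w(f,\delta) = 0$. Throughout, the condition $X_n(0) = 0$ a.s.\ handles the boundedness of the initial values for free.

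For the forward implication (tight $\Rightarrow$ stochastic equicontinuous), I would argue directly from Arzel\`a--Ascoli. Given $\epsilon, \eta > 0$, choose by tightness a compact $K_\epsilon \subset C[0,T]$ with $\sup_n P(X_n \notin K_\epsilon) < \epsilon$. Compactness of $K_\epsilon$ gives some $\delta > 0$ with $\sup_{f \in K_\epsilon} w(f,\delta) < \eta$, so
\begin{equation*}
  \sup_n P\bigl( w(X_n,\delta) > \eta \bigr) \le \sup_n P(X_n \notin K_\epsilon) < \epsilon,
\end{equation*}
which is precisely stochastic equicontinuity.

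For the reverse implication (stochastic equicontinuous $\Rightarrow$ tight), I would build a compact set that captures $X_n$ with high probability. Fix $\epsilon > 0$. By stochastic equicontinuity, for each integer $k \ge 1$ choose $\delta_k > 0$ so that $\sup_n P\bigl(w(X_n,\delta_k) > 1/k\bigr) < \epsilon/2^k$. Define
\begin{equation*}
  K_\epsilon := \Bigl\{ f \in C[0,T] : f(0) = 0, \; w(f,\delta_k) \le 1/k \text{ for all } k \ge 1 \Bigr\}.
\end{equation*}
By Arzel\`a--Ascoli, $K_\epsilon$ is relatively compact, and being the intersection of closed sets it is also closed, hence compact. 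The union bound gives
\begin{equation*}
  P(X_n \notin K_\epsilon) \le \sum_{k \ge 1} P\bigl(w(X_n,\delta_k) > 1/k\bigr) < \sum_{k \ge 1} \epsilon/2^k = \epsilon,
\end{equation*}
uniformly in $n$, which yields tightness.

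The only subtle point is that neither implication requires genuinely new machinery beyond Arzel\`a--Ascoli once one knows that the modulus of continuity controls equicontinuity, so I do not anticipate a real obstacle; the main care is to formalize the passage from the abstract definition of tightness (Def.~\ref{def.tightness}) to the handy modulus-of-continuity formulation, which is exactly what the hypothesis $X_n(0)=0$ enables. Since the statement is cited from Billingsley, I would likely just invoke it directly in the final write-up, but the outline above gives the self-contained argument.
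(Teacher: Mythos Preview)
Your argument is correct and is essentially the standard proof found in Billingsley. Note that the paper itself does not supply a proof of this proposition; it simply cites it as Thm.~7.3 of \cite{billingsley2013convergence} and uses it as a black box, so your final remark about invoking it directly matches exactly what the paper does.
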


\begin{proposition}[\cite{christgau2024nonparametric}; Prop. C.2]
    Let $(W_t)$ be a Brownian motion. For every non-decreasing function $f\in C[0,T]$, the process $(W_{f(t)})_{t\in[0,T]}$ is a mean-zero Gaussian martingale on $[0,T]$ with variance function $f$. Consequently, if $U=(U_t)_{t\in [0,T]}$ is a mean-zero Gaussian martingale with a continuous variance function $V$, then $U$ has the distributional representation:
    \begin{equation*}
        (U_t)_{t\in[0,T]} \overset{\mathcal{D}}{=} \{W_{V(t)}\}_{t\in[0,T]}.
    \end{equation*}
    \label{prop.C.2.LCT2024}
\end{proposition}

\subsection{Lemmas used in Appx.~\ref{appx.proof-consistency-ou}}

\begin{theorem}[Existence and unique. of a solution; \cite{arnold1974stochastic}, Thm.~6.2.2, p.~105]
    Suppose that we have a stochastic differential equation:
    \begin{equation}
        dX_t = f(t,X_t)dt + G(t,X_t)dW_t, \quad X_{t_0}=c, \quad t_0\leq t\leq T<\infty,
        \label{eq.tech-lem-sde}
    \end{equation}
    where $W_t$ is an $\mathbb{R}^m$-valued Wiener process and $c$ is a random variable independent of $W_t-W_{t_0}$ for $t\geq t_0$. Suppose that the $R^d$-valued function $f(t,x)$ and the ($d\times m$ matrix)-valued function $G(t,x)$ are defined and measurable on $[t_0,T]\times \mathbb{R}^d$, and have the following properties: there exists a constant $L>0$ such that:

    \noindent a) \,(Lipschitz condition) for all $t\in [t_0,T], x\in \mathbb{R}^d, y\in\mathbb{R}^d$,
    \begin{equation*}
        \|f(t,x)-f(t,y)\| + \|G(t,x)-G(t,y)\| \leq L\|x-y\|,
    \end{equation*}
    where $\|G\|^2=\mathrm{tr} (G G^\top)$ is the Frobenius norm of $G$.

    \noindent b) \, (Restriction on growth) for all $t\in [t_0,T]$ and $x\in\mathbb{R}^d$,
    \begin{equation*}
        \|f(t,x)\|^2 + \|G(t,x)\|^2 \leq L^2(1+\|x\|^2).
    \end{equation*}

    Then, \eqref{eq.tech-lem-sde} has on $[t_0,T]$ a unique $\mathbb{R}^d$-valued solution $X_t$, continuous with probability $1$, that satisfied the initial condition $X_{t_0}=c$.
    \label{thm.tech-lem-existence-unique-solution}
\end{theorem}

\begin{theorem}[$L_{p}$-boundedness of the solution; \cite{arnold1974stochastic}, Thm. 7.1.2, p. 116]
    Suppose that the conditions of Thm.~\ref{thm.tech-lem-existence-unique-solution} are satisfied and that $\mathbb{E}\|c\|^{2p}<\infty$, where $p$ is a positive integer. Then for the solution $X_t$ of the SDE \eqref{eq.tech-lem-sde},
    \begin{equation*}
        \mathbb{E}\|X_t\|^{2p}\leq (1+\mathbb{E}\|c\|^{2p}) e^{C(t-t_0)},
    \end{equation*}
    where $C=2p(2p+1)k^2.$
    \label{thm.tech-lem-Lp-boundedness-of-solution}
\end{theorem}

\begin{corollary}
    Under the conditions in Thm.~\ref{thm.tech-lem-Lp-boundedness-of-solution}, the intensity process $\lambda_t$, the process $m_t$, the addictive residual process $G_t$, and the process $\mu_t$ are all bounded in $L_{2p}$.
    \label{coro.lp-bounded}
\end{corollary}

\begin{proof}
    We have for any $t\in [0,T]$,
    \begin{equation*}
        \mathbb{E}\|\lambda_t\|^{2p} =  \mathbb{E}\|\lambda(X_t)\|^{2p} \leq L^{2p}\left(1+ \sup_t\mathbb{E}\|X_t\|^{2p}\right) < \infty.
    \end{equation*}

    Furthermore, Jensen's inequality yields that:
    \begin{align*}
        \mathbb{E}\|m_t\|^{2p} &= \sup_{t}\mathbb{E}\|\mathbb{E}\{X_{V\backslash C}(t)|\mathcal{F}_t\}\|^{2p} \\
        &\leq \sup_{t}\mathbb{E}[\mathbb{E}\{\|X_{V\backslash C}(t)\|^{2p}|\mathcal{F}_t\}] = \sup_{t}\mathbb{E}\|X_{V\backslash C}(t)\|^{2p} < \infty.
    \end{align*}

    In a similar way, we can show that $G_t=X_\alpha(t)-\mathbb{E}\{X_\alpha(t)|\mathcal{F}_t\}$ and $\mu_t=\mathbb{E}\{\lambda_\beta(X_t)|\mathcal{F}_t\}$ are both bounded in $L_{2p}$.
\end{proof}

\begin{lemma}
    Consider a function $f\in L_p([0,T])$, where $p\geq 1$, then 
    \begin{equation*}
        \left\|\int_0^T f(t) dt \right\|^p \leq T^{p-1} \int_0^T \|f(t)\|^p dt.
    \end{equation*}
    \label{lem.tech-lem-p-norm-of-integral}
\end{lemma}

\begin{proof}
    Let $q:=\frac{p}{p-1}$, and apply the Hölder inequality,
    \begin{equation*}
        \left\|\int_0^T f(t) dt \right\|^p \leq \left\|\left(\int_0^T 1^q dt\right)^\frac{1}{q} \left(\int_0^T \|f(t)\|^p dt\right)^{\frac{1}{p}} \right\|^p = T^{p-1} \int_0^T \|f(t)\|^p dt.
    \end{equation*}
\end{proof}

\begin{lemma}[Burkholder-Davis-Gundy ineq.; \cite{marinelli2016maximal}, Thm. 1.1]
    Let $M$ be a $\mathbb{R}^d$-valued local martingale with $M_{t_0}=0$. Then for any $1\leq p<\infty$, there exists universal positive constants (independent of $M$) $c_p, C_p$, such that for any stopping time $\tau$,
    \begin{equation*}
        c_p \mathbb{E} [M]_\tau^{\frac{p}{2}} \leq \mathbb{E}\left(\sup_{t_0\leq t\leq \tau} \|M_t\|^p\right) \leq C_p \mathbb{E} [M]_\tau^{\frac{p}{2}}.
    \end{equation*}
    \label{lem.burkholder-david}
\end{lemma}

\begin{corollary}[$L_p$-continuity of the solution]
     Under the conditions in Thm.~\ref{thm.tech-lem-Lp-boundedness-of-solution}, there exists a constant $K>0$ such that for any $s, t\in [0,T]$, $\mathbb{E}\|X_s - X_t\|^{2p} \leq K|s-t|^p$.
    \label{lem.smooth}
\end{corollary}

\begin{proof}
    Without loss of generalizability, fix an $s\in [0,T]$ and let $t\geq s$. We first have:
    \begin{align*}
        \mathbb{E}\|X_t - X_s\|^{2p} &= \mathbb{E}\left\|\int_s^t f(u,X_u)du  + \int_s^t G(u,X_u) dW_u\right\|^{2p} \\
        &\leq 2^{2p} \left\{\mathbb{E}\left\|\int_s^t f(u,X_u)du \right\|^{2p} + \mathbb{E}\left\|\int_s^t G(u,X_u) dW_u \right\|^{2p}\right\}.
    \end{align*}

    For the first term on the RHS, apply Lem.~\ref{lem.tech-lem-p-norm-of-integral}, we have:
    \begin{align*}
        \mathbb{E}\left\|\int_s^t f(u,X_u)du \right\|^{2p} &\leq (t-s)^{2p-1} \mathbb{E}\int_s^t \|f(u,X_u)\|^{2p} du \\
        &\leq T^p (t-s)^{p-1} \mathbb{E}\int_s^t \|f(u,X_u)\|^{2p} du.
    \end{align*}

    For the second term on the RHS, we note that the process $M_t := \int_s^t G(u,X_u) dW_u$ is a martingale with $M_s=0$ and $[M]_t=\int_s^t \|G(u,X_u)\|^2 du$. Hence, apply Lem.~\ref{lem.burkholder-david}, we have:
    \begin{align*}
        \mathbb{E}\left\|\int_s^t G(u,X_u) dW_u \right\|^{2p} &\leq C_p \mathbb{E}\left(\int_s^t \|G(u,X_u)\|^2 du \right)^p \\ &\overset{(1)}{\leq} C_p (t-s)^{p-1} \mathbb{E}\int_s^t \|G(u,X_u)\|^{2p} du. 
    \end{align*}
    where ``(1)'' applies Lem.~\ref{lem.tech-lem-p-norm-of-integral}.

    Therefore, we have:
    \begin{align*}
        &\mathbb{E}\|X_t - X_s\|^{2p} \leq 2^{2p} \left\{\mathbb{E}\left\|\int_s^t f(u,X_u)du \right\|^{2p} + \mathbb{E}\left\|\int_s^t G(u,X_u) dW_u \right\|^{2p}\right\} \\
        &\manyquad[2] \leq 2^{2p} \left\{T^p (t-s)^{p-1} \mathbb{E}\int_s^t \|f(u,X_u)\|^{2p} du + C_p (t-s)^{p-1} \mathbb{E}\int_s^t \|G(u,X_u)\|^{2p} du\right\} \\
        &\manyquad[2] \leq 2^{2p} \max(T^p, C_p) (t-s)^{p-1}  \mathbb{E}\int_s^t \|f(u,X_u)\|^{2p} + \|G(u,X_u)\|^{2p} du.
    \end{align*}
    
    According to condition (b) in Thm.~\ref{thm.tech-lem-Lp-boundedness-of-solution}, we have:
    \begin{align*}
        \|f(u,X_u)\|^{2p} + \|G(u,X_u)\|^{2p} &\leq \{\|f(u,X_u)\|^2+\|G(u,X_u)\|^2\}^p \\
        &\leq L^{2p} (1+\|X_u\|^2)^p \\
        &\leq L^{2p} 2^p (1+\|X_u\|^{2p})\\
        &\leq L^{2p} 2^p (1+\sup_{u\in[0,T]} \|X_u\|^{2p}) <\infty.
    \end{align*}

    Hence,
    \begin{align*}
        \mathbb{E}\|X_t - X_s\|^{2p} &\leq 2 (2L)^{2p} \max(T^p, C_p) (1+\sup_{u\in[0,T]} \|X_u\|^{2p}) (t-s)^p =: K (t-s)^p.
    \end{align*}

    We note that $K$ does not rely on the time steps $s,t$. This has concluded the proof.
\end{proof}

\begin{theorem}[Marcinkiewicz–Zygmund inequality; \cite{ren2001best}, the main thm.]
    Let $\{X_n, n\geq 1\}$ be a sequence of r.v.'s with $\mathbb{E}X_n=0$, then
    \begin{equation*}
        \mathbb{E}\left\|\sum_{i=1}^n X_i\right\|^p \leq C_p n^{\frac{p}{2}-1} \sum_{i=1}^n \mathbb{E}\|X_i\|^p, \quad p\geq 2.
    \end{equation*}
    \label{thm.tech-lem-Marc-Zyg-inequality}
\end{theorem}

\begin{theorem}[\cite{bhansali1991convergence}, the main theorem]
    \label{thm.bhansali1991convergence}
    Let $\{X_i\}$ be a sequence of r.v.'s satisfying:
    \begin{enumerate}[label=(\roman*)]
        \item the sequence is strictly stationary.
        \item the $X_i$'s have finite moments of all orders.
        \item for any finite set $\{j,l_1,l_2,...,l_k\}$ of distinct integers, the joint distribution of $X_j$, $X_{l_1}, X_{l_2}, ..., X_{l_k}$ is absolute continuous, and there exists a constant $K$ and a version $f_{jl_1,...,l_k}(\cdot|x_1,...,x_k)$ of the conditional probability density function of $X_j$ given $X_{l_1}=x_1, ..., X_{l_k}=x_k$ such that $f_{jl_1,...,l_k}(x|x_1,...,x_k) \leq K$ for all $x,x_1,...,x_k$.
    \end{enumerate}

    Let $p$ be a positive integer and for each $j=0,1,2,...$, consider the positive-definite symmetric matrix:
    \begin{equation*}
        A_j = \begin{bmatrix}
               X_{p+j} X_{p+j} & \cdots & X_{p+j}X_{1+j}\\
               \vdots & \, & \vdots \\
               X_{1+j} X_{p+j}& \cdots &X_{1+j} X_{1+j}
               \end{bmatrix}.
    \end{equation*} 

    Then for each $q>0$ there is a nonnegative r.v. $\Lambda_0$ and a number $r>0$ such that $\mathbb{E}\Lambda_0^q<\infty$ and for all $N\geq r$,
    \begin{equation*}
        \left\|\left(\frac{1}{N}\sum_{j=0}^{N-1} A_j\right)^{-1}\right\|_{\mathrm{sp}} \leq \Lambda_0 \quad almost \,\, surely.
    \end{equation*}
\end{theorem}

\begin{lemma}[\cite{golub2013matrix} Coro. 8.6.2]
    Denote $\sigma_i(A)$ as the $i$-th largest singular value of the matrix $A$. If $A$ and $A+E$ are in $\mathbb{R}^{p\times q}$ with $p\geq q$, then for $k=1,...,q$,
    \begin{equation*}
        |\sigma_k(A+E)-\sigma_k(A)| \leq \|E\|_{\mathrm{sp}}.
    \end{equation*}
    \label{lem.lip-continuty-singular-value}
\end{lemma}

Let $\theta=[\theta_1,...,\theta_k]^\top, \xi=[\xi_1,...,\xi_l]^\top$ be a continuous-time diffusion process with
\begin{subequations}
\label{eq.filtering-setup}
\begin{align}
    &d\theta_t = \left\{a_0(t,\xi) + a_1(t,\xi) \theta_t\right\} dt + \sum_{i=1}^2 b_i(t,\xi) dW_i(t), \\
    &d\xi_t = \left\{A_0(t,\xi) + A_1(t,\xi) \theta_t \right\} dt + \sum_{i=1}^2 B_i(t,\xi) dW_i(t),
\end{align}
\end{subequations}
where $a_0(t,x):\mathcal{T}\times \mathcal{X}\mapsto \mathbb{R}^k, A_0(t,x):\mathcal{T}\times \mathcal{X}\mapsto \mathbb{R}^l$ are vector-valued functionals, $a_1: k\times k, A_1: l\times k, b_1: k\times k, b_2:k\times l, B_1:l\times k, B_2: l\times l$ are matrix-valued functionals, and $W_1=[W_{11},...,W_{1k}]^\top, W_2=[W_{21},...,W_{2l}]^\top$ are vector Wiener processes. Denote $m_t:=\mathbb{E}(\theta_t|\mathcal{F}_t^\xi)=[\mathbb{E}(\theta_{1,t}|\mathcal{F}_t^\xi),...,\mathbb{E}(\theta_{k,t}|\mathcal{F}_t^\xi)]^\top$ as the projection process of $\theta$ relative to $\mathcal{F}_t^\xi$. Further, let $\gamma_t :=\mathrm{Var}(\theta_t|\mathcal{F}_t^\xi)=\mathbb{E}\{(\theta_t - m_t) (\theta_t - m_t)^\top|\mathcal{F}_t^\xi\}$ be the conditional variance process.

\begin{theorem}[Filtering equation; \cite{liptser2013statistics}, Thm. 12.7, Vol. II, p. 36]
    Let $(\theta,\xi)$ be a random process with differentials given in (\ref{eq.filtering-setup}a) and (\ref{eq.filtering-setup}b). Suppose that some regularity conditions for integrability hold, and that $P(\theta_0\leq a|\xi_0)\sim \mathcal{N}(m_0,\gamma_0)$ is Gaussian. Use the notations: 
    \begin{equation*}
        b\circ b = b_1 b_1^\top + b_2 b_2^\top, \quad b\circ B = b_1 B_t^\top + b_2 B_2^\top, \quad B\circ B = B_1 B_1^\top + B_2 B_2^\top.
    \end{equation*}
    Then, $m_t$ and $\gamma_t$ satisfy the equations:
    \begin{subequations}
    \label{eq.filtering-eq}
    \begin{align}
        &dm_t = (a_0 + a_1 m_t) + (b\circ B + \gamma_t A_t^{-1})(B\circ B)^{-1} \left\{d\xi_t - (A_0+A_1 m_t)dt\right\}, \\
        &\dot{\gamma}_t = a_t \gamma_t + \gamma_t a_1^\top + b\circ b - (b\circ B+\gamma_t A_1^\top)(B\circ B)^{-1}(b\circ B+\gamma_t A_1\top)^\top,
    \end{align}
    \end{subequations}
    subject to the condition $m_0=\mathbb{E}(\theta_0|\xi_0), \gamma_0=\mathbb{E}\left\{(\theta_0-m_0)(\theta_0-m_0)^\top|\xi_0\right\}$.
    \label{eq.filtering-thm}
\end{theorem}

\begin{theorem}[\cite{liptser2013statistics}, Thm. 12.7, Vol. II, p.36]
    Let the conditions of Thm.~\ref{eq.filtering-thm} be satisfied. Then the system of equations (\ref{eq.filtering-eq}a)-(\ref{eq.filtering-eq}b), subject to the initial condition on $m_0, \gamma_0$, has a unique, continuous, $\mathcal{F}_t^\xi$-adapted solution for any $t$, $0\leq t\leq T$.
    \label{thm.filtering-unique}
\end{theorem}

\subsection{Lemmas used in Appx.~\ref{appx.proof-consistency-projection-estimator}}

If polynomials $f(x)$ and $g(x)$ have all real roots $r_1\leq r_2\leq \cdots \leq r_n$ and $s_1\leq s_2\leq \cdots \leq s_{n-1}$, then we say $f$ and $g$ \emph{interlace} if and only if
\begin{equation*}
    r_1\leq s_1\leq r_2\leq s_2\leq \cdots \leq s_{n-1}\leq r_n.
\end{equation*}

\begin{theorem}[Cauchy interlacing theorem; \cite{fisk2005very}, Coro. 1]
    If $A$ is a Hermitian matrix, and $B$ is a principal submatrix of $A$, then the eigenvalues of $B$ interlace the eigenvalues of $A$.
    \label{thm.tech-lem-cauchy-interlacing}
\end{theorem}

\begin{lemma}[Gronwall inequality; \cite{platen1992numerical}, Lem. 4.5.1, p. 129]
    Let $u, \alpha: [0,T]\mapsto \mathbb{R}$ be integrable with
    \begin{equation*}
        0 \leq u(t) \leq \alpha(t)+  \int_0^t \beta u(s)ds
    \end{equation*}
    for some $\beta>0$. Then, for every $t\in [0,T]$,
    \begin{equation*}
        u(t)\leq \alpha(t) + \beta \int_0^t e^{\beta(t-s)} \alpha(s) ds.
    \end{equation*}
    \label{lem.Gronwall}
\end{lemma}

\begin{proposition}[Thm. 2-2 of \cite{potter1965matrix}]
    Consider the RDE
    \begin{equation}
        \dot{y}_t = ay_t + y_t a^\top + I - y_t b y_t, \quad y_0=y(0),
        \label{appx-eq.rde}
    \end{equation}
    where $a, b, y(0), y_t$ are all $k\times k$ matrices. Suppose that $y(0), b$ are positive semidefinite. Then the solution $y(t)$ exists on $[0,\infty)$ and
    \begin{equation*}
        \sup_{t\in [0,T]} \|y_t\|_1 \leq \frac{1}{2} k \left\{\mathrm{tr}(y_0) + kT\right\} e^{T \|a\|_1}
    \end{equation*}
    \label{prop.up-bound-yt}
\end{proposition}

\begin{proposition}
    Consider the solutions of the RDE under two different parameters:
    \begin{align*}
        \dot{y}_1 = a_1 y_1 + y_1 a_1^\top + I - y_1 b_1 y_1 \quad \text{with initial value $y_1(0)$}, \\
        \dot{y}_2 = a_2 y_2 + y_2 a_2^\top + I - y_2 b_2 y_2  \quad \text{with initial value $y_2(0)$}.
    \end{align*}
    Then, for $p\geq 1$, we have:
    \begin{equation*}
        \sup_{t\in [0,T]} \|y_1(t)-y_2(t)\|^p_\mathrm{sp} \leq 2^p\left(\|y_1(0)-y_2(0)\|^p_\mathrm{sp} + \alpha T^p\right) \exp(2^p T^{p-1} \beta),
    \end{equation*}
    where
    \begin{align*}
        &\alpha := 2^{3p+1}\|a_1 - a_2\|^p_\mathrm{sp} \sup_t \|y_2(t)\|^p_\mathrm{sp} + 2^{4p}\sup_t \|y_1(t)\|^p_\mathrm{sp} \sup_t \|y_2(t)\|^p_\mathrm{sp} \|b_1 - b_2\|^p_\mathrm{sp},\\
        &\beta := 2^{3p+1} \|a_1\|^p_\mathrm{sp} + 2^{4p+1}\left\{\|b_1\|^p_\mathrm{sp} \sup_t\|y_1(t)\|^p_\mathrm{sp} + \|b_2\|^p_\mathrm{sp} \sup_t \|y_2(t)\|^p_\mathrm{sp}\right\}.
    \end{align*}
    \label{prop.lip-yt}
\end{proposition}

\begin{proof}
    Note that the solutions $y_i(t), i=1,2$ satisfy:
    \begin{align*}
        y_i(t)=y_i(0) + \int_0^t a_i y_i(s) + y_i(s)a_i ^\top + I - y_i(s)b_i y_i(s) ds.
    \end{align*}
    Then, we consider $z(t) := \|y_1(t) - y_2(t)\|^p_{\mathrm{sp}}$, which has been shown to be bounded by Prop.~\ref{prop.up-bound-yt}. Therefore, it is integrable. Applying the inequality in Lem.~\ref{lem.tech-lem-p-norm-of-integral}, we have:
    \small
    \begin{align*}
        z(t) &\leq 2^p\|y_1(0)-y_2(0)\|^p_\mathrm{sp} \\
        &\quad + 2^p\left\|\int_0^t a_1 y_1(s) - a_2 y_2(s) + y_1(s) a_1^\top - y_2(s) a_2^\top + y_1(s) b_1 y_1(s) - y_2(s) b_2 y_2(s)\right\|_\mathrm{sp}^p \\
        &\leq 2^p z(0) + 2^p T^{p-1} \int_0^t 2^{2p+1}\|a_1 y_1(s) - a_2 y_2(s)\|^p_\mathrm{sp} + 2^{2p}\|y_1(s) b_1 y_1(s) - y_2(s) b_2 y_2(s)\|^p_\mathrm{sp} ds.
    \end{align*}
    \normalsize
    For $\|a_1 y_1(s) - a_2 y_2(s)\|^p_\mathrm{sp}$, we have:
    \begin{align*}
        \|a_1 y_1(s) - a_2 y_2(s)\|^p_\mathrm{sp} &\leq 2^p\|a_1 y_1(s) - a_1 y_2(s)\|^p_\mathrm{sp} + 2^p\|a_1 y_2(s) - a_2 y_2(s)\|^p_\mathrm{sp} \\
        &= 2^p\|a_1\|^2_\mathrm{sp} \|y_1(s) - y_2(s)\|^p_\mathrm{sp} + 2^p\|a_1 - a_2\|^p_\mathrm{sp} \|y_2(s)\|^p_\mathrm{sp} \\
        &\leq 2^p\|a_1\|^p_\mathrm{sp} z(s) + 2^p\|a_1 - a_2\|^p_\mathrm{sp} \sup_t \|y_2(t)\|^p_\mathrm{sp} 
    \end{align*}
    For $\|y_1(s) b_1 y_1(s) - y_2(s)b_2 y_2(s)\|^p_\mathrm{sp}$, we have:
    \small
    \begin{align*}
        &\|y_1(s) b_1 y_1(s) - y_2(s)b_2 y_2(s)\|^p_\mathrm{sp} \\
        & = \|\{y_1(s) - y_2(s)\} b_1 y_1(s) + y_2(s)(b_1 - b_2)y_1(s) + y_2(s) b_2 \{y_1(s) - y_2(s)\}\|^p_\mathrm{sp} \\
        & \leq 2^{2p+1}\left\{\|b_1\|^2_\mathrm{sp} \|y_1(s)\|^p_\mathrm{sp} + \|b_2\|^2_\mathrm{sp} \|y_2(s)\|^p_\mathrm{sp}\right\} \|y_1(s) - y_2(s)\|^2_\mathrm{sp} + 2^{2p}\|y_1(s)\|^2_\mathrm{sp} \|y_2(s)\|^2_\mathrm{sp} \|b_1 - b_2\|^p_\mathrm{sp} \\
        & \leq 2^{2p+1}\left\{\|b_1\|^2_\mathrm{sp} \sup_t \|y_1(t)\|^p_\mathrm{sp} + \|b_2\|^p_\mathrm{sp} \sup_t \|y_2(t)\|^p_\mathrm{sp}\right\} z(s) + 2^{2p}\sup_t \|y_1(t)\|^p_\mathrm{sp} \sup_t \|y_2(t)\|^p_\mathrm{sp} \|b_1 - b_2\|^p_\mathrm{sp}
    \end{align*}
    \normalsize
    Thus, we have:
    \small
    \begin{align*}
        &z(t) \leq 2^p z(0) + 2^p T^{p-1} \int_0^t 2^{3p+1}\|a_1\|^p_\mathrm{sp} z(s) + 2^{3p+1}\|a_1 - a_2\|^p_\mathrm{sp} \sup_t \|y_2(t)\|^p_\mathrm{sp} \\
        & + 2^{4p+1}\left\{\|b_1\|^p_\mathrm{sp} \sup_t\|y_1(t)\|^p_\mathrm{sp} + \|b_2\|^p_\mathrm{sp} \sup_t \|y_2(t)\|^p_\mathrm{sp}\right\} z(s) + 2^{4p}\sup_t \|y_1(t)\|^p_\mathrm{sp} \sup_t \|y_2(t)\|^p_\mathrm{sp} \|b_1 - b_2\|^p_\mathrm{sp} ds.
    \end{align*}
    \normalsize
    Denote
    \begin{align*}
        &\alpha := 2^{3p+1}\|a_1 - a_2\|^p_\mathrm{sp} \sup_t \|y_2(t)\|^p_\mathrm{sp} + 2^{4p}\sup_t \|y_1(t)\|^p_\mathrm{sp} \sup_t \|y_2(t)\|^p_\mathrm{sp} \|b_1 - b_2\|^p_\mathrm{sp},\\
        &\beta := 2^{3p+1} \|a_1\|^p_\mathrm{sp} + 2^{4p+1}\left\{\|b_1\|^p_\mathrm{sp} \sup_t\|y_1(t)\|^p_\mathrm{sp} + \|b_2\|^p_\mathrm{sp} \sup_t \|y_2(t)\|^p_\mathrm{sp}\right\}.
    \end{align*}
    We then have:
    \begin{equation*}
        z(t)\leq 2^p\left\{ z(0) + T^p \bar{\alpha} + T^{p-1}\int_0^t \bar{\beta} z(s) ds\right\}.
    \end{equation*}
    Applying the Gronwall Lem.~\ref{lem.Gronwall}, it yields that:
    \begin{equation*}
        z(t) \leq 2^p \{z(0)+T^p \alpha\} \exp(2^p T^{p-1} \beta)
    \end{equation*}
    which concludes the proof.
\end{proof}

\begin{corollary}
    Let $u>1$ and denote $U_y(u,T), U_A(u,T), U_B(u,T), U_C(u,T)$ as the upper bounds of $y=y(t,\Phi,y_0), A=A(t,\Phi,y_0), B=B(t,\Phi,y_0), C=C(t,\Phi,y_0)$, in terms of the spectral norm, over the set
    \begin{equation*}
        \left\{(t,\Phi,y_0)\mid 0\leq t\leq T, \|\Phi\|_{\mathrm{sp}} \leq u-1, \|y_0\|_\mathrm{sp} \leq u^2+u^6\right\}.
    \end{equation*}
    Then we have:
    \begin{align*}
        &U_y(u,T) = \frac{1}{2}d\{(u^2+u^6)+dT\} e^{uT} = O(u^6 e^{uT})\\
        &U_A(u,T) = U_B(u,T) = u + \frac{1}{2}d\{(u^2+u^6)+dT\} u^2 e^{uT} = O(u^8 e^{uT})\\
        &U_C(u,T) = \frac{1}{2}d\{(u^2+u^6)+dT\} u e^{uT} = O(u^7 e^{uT})
    \end{align*}
    \label{coro.tech-lem-Uy-UA-UB-UC}
\end{corollary}

\begin{proof}
    Recall $y$ is the solution of
    \begin{equation*}
        \dot{y}_t = \Phi_{V\backslash C} y_t + y_t \Phi_{V\backslash C}^\top + I - y_t \Phi_{C, V\backslash C}^\top \Phi_{C, V\backslash C} y_t \quad \text{with initial value} \,\,\, y_0.
    \end{equation*}

    Then, apply \eqref{eq.equivalence-matrix-norms} and Prop.~\ref{prop.up-bound-yt}, 
    \begin{align*}
        &U_y(u,T) = \sup_{t\in[0,T]} \|y_t\|_\mathrm{sp} \leq \sup_{t\in[0,T]} \|y_t\|_1 \leq  \frac{1}{2}d\{\|y_0\|_1+dT\} e^{\|\Phi_{V\backslash C}\|_1 T} \\
        &\manyquad[2] \leq \frac{1}{2}d\{\|y_0\|_\mathrm{sp}+dT\} e^{\|\Phi_{V\backslash C}\|_\mathrm{sp} T} \leq \frac{1}{2}d\{\|y_0\|_\mathrm{sp}+dT\} e^{\|\Phi\|_\mathrm{sp} T} \leq \frac{1}{2}d\{(u^2+u^6)+dT\} e^{uT}.\\
    \end{align*}

    Furthermore, recall $A := \Phi_{V\backslash C, C} - y_t  \Phi_{C, V\backslash C}^\top  \Phi_{C}$. Hence,
    \begin{align*}
        \|A\|_{\mathrm{sp}} &\leq \|\Phi_{V\backslash C, C}\|_{\mathrm{sp}} + \|y_t\|_{\mathrm{sp}} \|\Phi_{C, V\backslash C}^\top\|_{\mathrm{sp}}  \|\Phi_{C}\|_\mathrm{sp} \\
        &\leq \|\Phi\|_{\mathrm{sp}} + U_y(u,T) \|\Phi\|_{\mathrm{sp}}^2 \leq u + \frac{1}{2}d\{(u^2+u^6)+dT\} u^2 e^{uT} =: U_A(u,T).
    \end{align*}

    In a similar way, we can show $U_B(u,T)$ and $U_C(u,T)$.
\end{proof}

\section{Additional simulation results}
\label{app.experiment}

We first investigate conditional local independence testing with nonlinear data and data from non-isometric noise. For the nonlinear case, we generate data from the following SDE:
\begin{equation*}
    dX_t = \Phi\{X_t+\sin(2\pi X_t)\}dt + \sigma dW_t,
\end{equation*}
where $\sin(\cdot)$ denote the element-wise sine function. For the non-isometric case, we consider the diffusion matrix $\Sigma=\mathrm{diag}(1,2,3,3,2,1,1,2,3,1)$. For both cases, we set the rest of the parameters, $\Phi,\sigma,\delta,T$, to be the same as in Sec.~\ref{sec.exp-CL-test}. We consider different sample sizes $N_0\in \{50,100,150,200,250\}$ and the scale $d=10$. The results are shown in Fig.~\ref{fig:exp-3-4}. We see that our test establishes type I error rates close to the significance level $\alpha=0.05$ and high recalls, which demonstrates the robustness of our filtering estimator to nonlinear and non-isometric data. 

\begin{figure}[htp]
    \centering
    \includegraphics[width=0.75\linewidth]{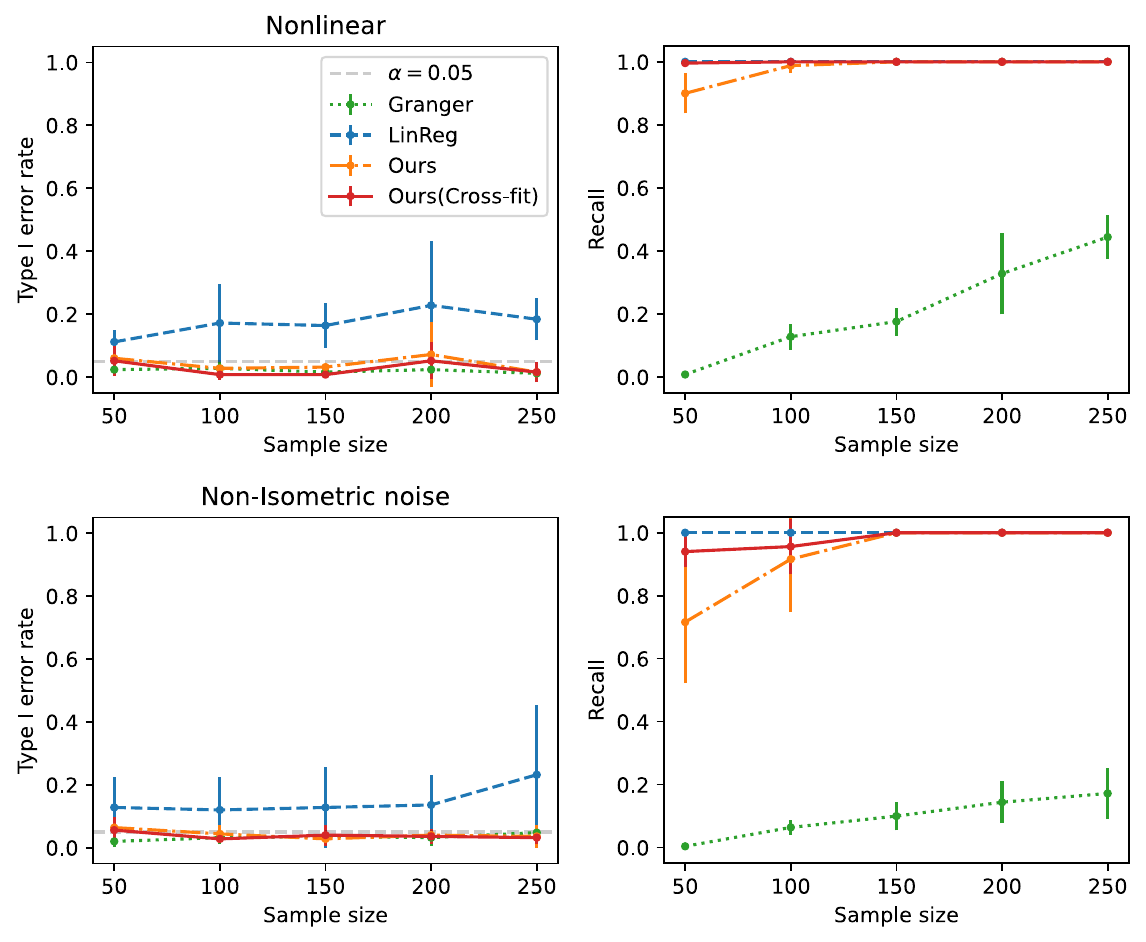}
    \caption{Type I error rate and recall of different test procedures under nonlinear data (upper row) and data with non-isometric noise (lower row).}
    \label{fig:exp-3-4}
\end{figure}

\begin{figure}[htp]
    \centering
    \includegraphics[width=\linewidth]{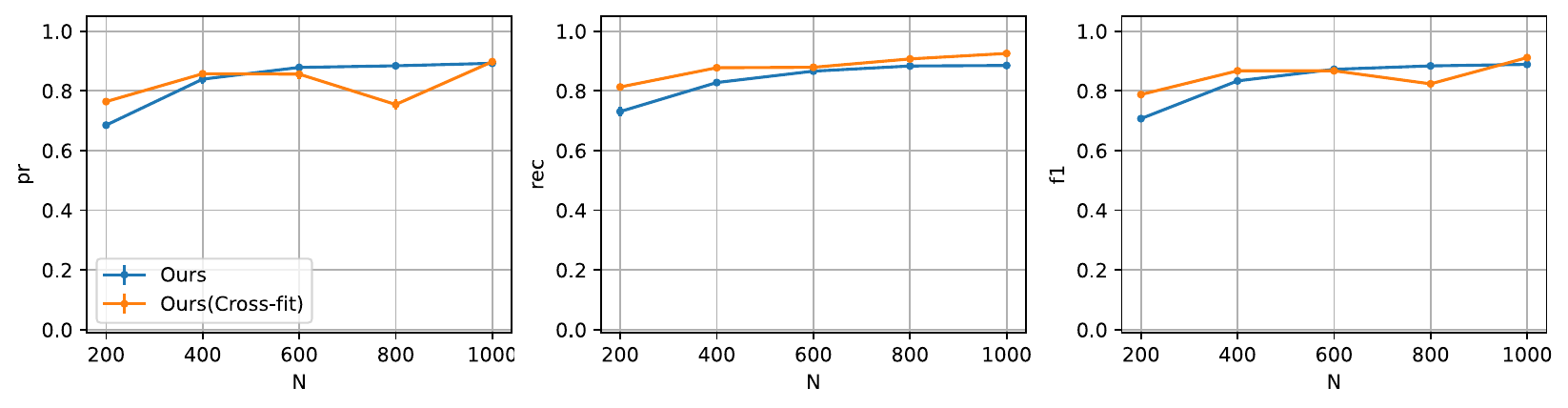}
    \caption{Precision, recall, and F1-score for the causal discovery of LIG with our tests.}
    \label{fig:exp-7}
\end{figure}

We then evaluate the causal discovery of LIG on synthetic data. For data generation, we sample the parameters $\Phi$ in the same way as Sec.~\ref{sec.exp-CL-test}. We set $\sigma=1,\delta=0.001, T=1$. We consider different sample sizes $N_0\in \{200,400,600,800,1000\}$ and the scale $d=50$. For evaluation, we report the precision, recall, and F1-score of the recovered causal graph, where precision and recall measure (respectively) the accuracy and completeness of identified causal edges, and $\mathrm{F_1}:=2\frac{\mathrm{precision}-\mathrm{recall}}{\mathrm{precision}+\mathrm{recall}}$. The results are shown in Fig.~\ref{fig:exp-7}. As shown, our test can present precision and recall close to $90\%$ when $N_0=1000$. This shows that our procedure can be effectively applied to causal discovery for large-scale graphs.

\end{document}